\documentclass{article}

 \usepackage[preprint]{neurips_2026}

\usepackage[utf8]{inputenc} 
\usepackage[T1]{fontenc}    
\usepackage{hyperref}       
\usepackage{url}            
\usepackage{booktabs}       
\usepackage{amsfonts}       
\usepackage{nicefrac}       
\usepackage{microtype}      
\usepackage{xcolor}         

\usepackage{graphicx}
\usepackage{amssymb}
\usepackage{mathtools}
\usepackage{amsthm}
\usepackage{tikz}
\usepackage{amsmath}
\usepackage{colortbl}
\usepackage{eucal}
\usepackage{bm}
\usepackage{subfigure}
\usepackage{multirow}
\usepackage{enumitem}
\usepackage{xspace}
\usepackage{hyperref}
\usepackage[english]{babel}
\usepackage{blindtext}
\usepackage{pifont} 
\usepackage{stackengine}
\usepackage{etoc}
\usepackage{tabularx}
\usepackage{setspace}
\usepackage{wrapfig}
\usepackage{algorithm}
\usepackage{algpseudocode}

\newtheorem{corollary}{Corollary}
\newtheorem{proposition}{Proposition}
\newtheorem{definition}{Definition}
\newtheorem{assumption}{Assumption}
\newtheorem{remark}{Remark}

\DeclareMathOperator*{\argmin}{arg\,min}

\newcolumntype{C}[1]{>{\centering\arraybackslash}p{#1}}
\newcolumntype{L}[1]{>{\raggedright\arraybackslash}p{#1}}

\newcommand{\withours}[1]{\textbf{Ours}\textsubscript{#1}}

\def\ourSystem{\text{RxGS}\xspace}
\def\ie{\textit{i.e.}\@,\xspace}
\def\eg{\textit{e.g.}\@,\xspace}

\newlength{\mylen}
\hypersetup{
    breaklinks=true,
    colorlinks=true,
    citecolor=blue,
    linkcolor=blue,
    urlcolor=blue,
    hypertexnames=false
}

\title{\ourSystem: Receiver-Generalizable 3D Gaussian Splatting for Radio-Frequency Data Synthesis}

\author{
  Kang Yang \hspace{2em} Mani Srivastava \\
  Department of Electrical and Computer Engineering\\
  University of California, Los Angeles\\
  \texttt{kyang73@g.ucla.edu \quad mbs@ucla.edu}
}

\begin{document}

\maketitle

\addtocontents{toc}{\protect\setcounter{tocdepth}{-1}}  

\begin{abstract}

Radio-frequency~(RF) data synthesis predicts the received signal given transmitter and receiver positions, and is essential for wireless applications.
Recent~3D Gaussian Splatting~(3DGS)-based methods achieve efficient synthesis at any transmitter but only for a fixed receiver.
Therefore, supporting~$N$ receivers in one scene requires~$N$ independent models and precludes prediction at unseen receivers.
We present~\ourSystem, which achieves receiver-generalizable synthesis within a single unified model.
Our key insight is that scene geometry is receiver-independent while directional radiance is not: a first stage learns shared~3D Gaussian geometry, and a second stage freezes it and learns directional radiance conditioned on receiver position.
A global conditioning branch captures shared receiver-dependent effects across the scene, while a local branch models per-scatterer variations from the receiver's geometry and occlusion.
A multi-receiver~CUDA rasterizer further batches rendering across all~$N$ receivers.
Evaluated across various~RF datasets,~\ourSystem matches or improves over per-receiver baselines with a single shared model and generalizes to receivers unseen during training within the scene, cutting training cost by up to~$45\,\times$, inference cost by~$7.6\,\times$, and storage by~$N\,\times$.

\end{abstract}

\vspace{\mylen}
\section{Introduction}\label{sec_introduction}
\vspace{\mylen}

Radio-frequency~(RF) data, such as received signal strength indicator~(RSSI) and channel state information~(CSI), quantifies how~RF signals propagate from a transmitter to a receiver in a scene~\cite{ma2019wifi}.
Such data is essential for wireless networking and sensing applications~\cite{abedi2020witag, rappaport2002wireless, okubo2024integrated, zhao2023nerf2, lu2025geraf, amballa2025can, yang2025diffusion}.
For example, network planning~\cite{rappaport2002wireless, yang2025flog, yang2023link} requires coverage maps that evaluate signal strength across candidate transmitter and receiver placements.
Fingerprint-based localization~\cite{zhao2023nerf2, yang2025diffusion, yang2024orchloc, yang2025generative} depends on databases of~RF measurements collected across numerous transmitter positions at multiple receivers.
However, acquiring these measurements requires site surveys at thousands of transmitter--receiver positions, which is prohibitively labor-intensive and time-consuming~\cite{kar2018site, site_survey_cisco, shin2014mri, parralejo2021comparative}.

RF data synthesis offers a compelling alternative to site surveys.
Classical ray tracing simulations~\cite{yun2015ray, egea2021opal, matlab_indoor_simulation, guo2024rad} trace propagation paths through Computer-Aided Design~(CAD) scene models.
However, accurate CAD models require precise scene geometry, material permittivity, and conductivity.
These properties are difficult to obtain in practice, making simulations infeasible.

Neural scene representations sidestep these requirements by learning the~3D scene and its electromagnetic properties directly from~RF measurements~\cite{zhao2023nerf2, zhang2024rf3dgs, wen2024wrfgs, yang2025gsrf, lu2024newrf}.
This enables synthesis of novel~RF data via differentiable rendering, without precise~CAD models or material priors.
Neural Radiance Field~(NeRF)-based~RF methods~\cite{zhao2023nerf2, lu2024newrf} pioneer this direction but incur prohibitive computational costs from dense volumetric sampling and multi-layer perceptron~(MLP) inference.
3D Gaussian Splatting~(3DGS)-based~RF methods~\cite{zhang2024rf3dgs, wen2024wrfgs, yang2025gsrf} replace volumetric sampling with explicit~3D Gaussian primitives, yielding order-of-magnitude faster training and inference.

This efficiency, however, comes with a subtler limitation rooted in how~3DGS was originally designed.
Optical~3DGS~\cite{kerbl20233d, yu2024mip, chen2024survey} is parameterized by a single variable, the camera viewpoint, since the scene is queried from one viewpoint at a time.
RF data synthesis instead depends on two variables, the transmitter position and the receiver position.
Existing~3DGS-based~RF methods inherit the single-variable design and accept only the transmitter position as an explicit direction vector to each Gaussian.
The receiver position has no input pathway, and its influence survives only in the ground-truth measurements used to optimize Gaussian attributes during training.
The learned radiance coefficients are therefore inherently receiver-specific~(details in~\S\ref{sec_preliminary}).

However, multi-receiver deployments in a scene are the norm in practice~\cite{site_survey_cisco, chataut2020massive, ajmal2024cell}.
For example, a~BLE localization system deploys dozens of gateways distributed across a facility~\cite{zhao2023nerf2}.
Current per-receiver~3DGS-based methods therefore face two key limitations.
\emph{\ding{182}~Scaling Cost}: synthesizing~RF data for~$N$ receivers within one scene requires training~$N$ independent models.
\emph{\ding{183}~No Receiver Generalization.} Predicting at a new receiver position within the same scene is impossible without first collecting a full transmitter survey at that receiver and training a new per-receiver model.

We present~\ourSystem, a~3DGS-based~RF data synthesis framework that achieves receiver generalization within a single model.
Our key insight is that scene geometry is receiver-independent while directional radiance is not: scatterers such as walls and furniture remain fixed regardless of receiver position, but the angles and distances to the receiver change.
This motivates a two-stage training procedure.
Stage~I learns shared~3D Gaussian geometry of the scene.
Stage~II freezes this geometry and learns directional radiance conditioned on receiver position, trained jointly across all receivers.

When the receiver moves, both the global radiance pattern and each~Gaussian's individual contribution change.
To capture these effects, we propose two complementary conditioning modules.
A global module modulates harmonic coefficients based on receiver position, capturing the shared component of receiver-dependent variation.
A local module adjusts each~Gaussian based on its direction, distance, and occlusion relative to the receiver, capturing per-scatterer residuals.
Both modules are radiance-representation-agnostic and apply to existing~3DGS-based~RF methods, including spherical harmonics~(SH), Fourier--Legendre expansions~(FLE), and deformation networks~\cite{zhang2024rf3dgs, wen2024wrfgs, yang2025gsrf}.

At inference, existing methods render each receiver independently at cost~$\mathcal{O}\!\left(N \cdot t_{\text{render}}\right)$, even though one transmitter signal physically reaches all~$N$ receivers within a scene simultaneously.
\ourSystem instead renders all~$N$ receivers in one batched~CUDA launch at cost~$\mathcal{O}\!\left(t_{\text{render}}\right)$, sharing the per-Gaussian projection and depth sort across receivers and varying only the per-receiver radiance.

The key contribution of this work is~\ourSystem, the first receiver-generalizable~3DGS-based framework supporting any transmitter and any receiver within a scene, including receivers unseen during training.
\ourSystem matches or improves over per-receiver baselines on the~RSSI dataset and trails by~$0.8$ to~$1.7$\,dB on the spatial spectrum and~CSI datasets, while cutting training and storage cost by nearly~$N\times$.
A multi-receiver~CUDA rasterizer further makes inference up to~$7.6\times$ faster than per-receiver baselines.

\vspace{\mylen}
\section{Preliminaries}\label{sec_preliminary}
\vspace{\mylen}

We review the~3DGS-based~RF data synthesis framework~\cite{zhang2024rf3dgs, wen2024wrfgs, yang2025gsrf} that~\ourSystem builds on.

\textbf{1)~Scene Representation.}
The~RF scene,~\eg~a conference room, is a collection of~3D Gaussian primitives~$\{\mathcal{G}_k\}_{k=1}^{K}$, each defined by four attributes:
\begin{equation}
    \mathcal{G}_k = (\mathbf{p}_k, \mathbf{C}_k, \phi_k, \tau_k).
\end{equation}
The mean~$\mathbf{p}_k \in \mathbb{R}^3$ and covariance~$\mathbf{C}_k \in \mathbb{R}^{3 \times 3}$ define the spatial geometry of each~Gaussian.
The radiance~$\phi_k$ represents the~RF signal emitted by the~$k$-th~Gaussian.
To model its directional dependency,~$\phi_k$ is parameterized by the coefficients of basis functions on the sphere, such as~SH~\cite{schonefeld2005spherical} or~FLE~\cite{cornelius2017spherical}.
The transmittance~$\tau_k$ models attenuation as the~RF signal passes through the~$k$-th~Gaussian and is direction-agnostic.

\textbf{2)~Rendering.}
Given a receiver at position~$\mathbf{r}$ and a transmitter at position~$\mathbf{t}$,~RF data is synthesized as follows.
Rays are emitted from an emission surface centered at the transmitter~$\mathbf{t}$.
Each~Gaussian is projected onto this surface to identify ray--Gaussian intersections based on the ray direction and the~Gaussian mean~$\mathbf{p}_k$ and covariance~$\mathbf{C}_k$.
For each intersecting~Gaussian along a ray, the directional radiance~$\phi_k$ depends on the direction from the transmitter to the~Gaussian center:
\begin{equation}
\label{eq:dk}
    \hat{\mathbf{d}}_k = \left(\mathbf{p}_k - \mathbf{t}\right) \,/\, \|\mathbf{p}_k - \mathbf{t}\|_2.
\end{equation}
The received signal for each ray aggregates the radiance~$\phi_k$ and transmittance~$\tau_k$ contributions of all intersecting~Gaussians along that ray.
All~Gaussian attributes are then jointly optimized by minimizing a reconstruction loss between rendered and ground-truth~RF measurements.

\textbf{Single-Receiver Limitation.}
RF data synthesis involves two variables: the transmitter position~$\mathbf{t}$ and the receiver position~$\mathbf{r}$.
Existing~3DGS-based methods can only vary one while fixing the other.
Specifically, the transmitter position~$\mathbf{t}$ enters the model explicitly through~$\hat{\mathbf{d}}_k$ in Equation~\eqref{eq:dk}, enabling synthesis at any transmitter position.
The receiver position~$\mathbf{r}$, however, is absent from the rendering pipeline and only enters through the ground-truth measurements used during training.
As a result, the learned radiance coefficients implicitly encode a single receiver and cannot generalize to others.

\vspace{\mylen}
\section{Related Work}\label{sec_related_work}
\vspace{\mylen}

\textbf{NeRF-Based RF Data Synthesis.}
NeRF\textsuperscript{2}~\cite{zhao2023nerf2} pioneers neural~RF data synthesis by representing the scene as a continuous volumetric function parameterized by two large~MLPs for attenuation and radiance.
Several subsequent works build on this framework~\cite{lu2024newrf, yang2025generalizable, nerfapt2025, zeng2025voxelrf}.
For example,~NeWRF~\cite{lu2024newrf} leverages direction-of-arrival~(DoA) priors to reduce the number of emitted rays, improving computational efficiency.
GRaF~\cite{yang2025generalizable} interpolates spatial spectra from geographically proximate transmitters, but requires reference spectra at the query receiver at inference, preventing zero-shot prediction at unseen receivers.
NeRF-based methods can accommodate multiple receivers within a single model, since the receiver defines the ray origin and influences the~MLP output indirectly through the sampled voxel directions.
However, the receiver position is not an explicit~MLP input, leading to poor generalization at unseen receiver positions.
Furthermore,~NeRF-based methods inherit the same bottleneck as optical~NeRF~\cite{mildenhall2021nerf, chen2021mvsnerf}: dense volumetric sampling and~MLP inference result in prohibitive training and inference costs, which multi-receiver deployments only amplify.

\textbf{3DGS-Based RF Data Synthesis.}
Existing~3DGS-based~RF methods~\cite{zhang2024rf3dgs, wen2024wrfgs, yang2025gsrf, swiftwrf2025, yang2025scalable} fall into two categories based on whether neural networks are adopted.
WRF-GS+~\cite{wen2024wrfgs} uses a deformation network to predict offsets for~Gaussian attributes from~Gaussian and transmitter positions, and~SwiftWRF~\cite{swiftwrf2025} follows a similar architecture with~2D Gaussians for faster rendering.
RF-3DGS~\cite{zhang2024rf3dgs} and~GSRF~\cite{yang2025gsrf} avoid neural networks.
RF-3DGS learns geometry from camera images, then retrains~SH coefficients from~RF measurements.
GSRF optimizes per-Gaussian~FLE radiance coefficients~\cite{cornelius2017spherical} as learnable complex-valued parameters.
All these methods require a separately trained model per receiver and cannot generalize to unseen receiver positions.

\textbf{Relighting in Neural Rendering.}
Generalizing across receiver positions in~RF synthesis is structurally analogous to relighting in optical neural rendering, where both must generalize across a variable baked into the learned representation.
Several works address this by replacing the standard rendering equation with a physically-based rendering~(PBR) equation that takes lighting as an explicit input~\cite{boss2021nerd, zhang2021nerfactor, gao2024relightable, liang2024gs, saito2024relightable}.
For instance, Relightable 3DGS~\cite{gao2024relightable} augments each~Gaussian with surface normals, bidirectional reflectance distribution function~(BRDF) parameters, and incident light to compute outgoing radiance under arbitrary illumination.
However, this approach relies on well-established physics priors such as the~BRDF, which do not transfer to the~RF domain: no tractable closed-form relationship governs how receiver position affects the observed signal due to complex multipath propagation.
\ourSystem therefore introduces lightweight learned conditioning modules to capture these receiver-dependent effects.

\vspace{\mylen}
\section{Method}\label{sec_method}
\vspace{\mylen}

Consider a scene with transmitter positions~$\{\mathbf{t}_i\}_{i=1}^{M}$ and receiver positions~$\{\mathbf{r}_j\}_{j=1}^{N}$.
For each transmitter--receiver pair~$(\mathbf{t}_i, \mathbf{r}_j)$, the received~RF data~$S(\mathbf{t}_i, \mathbf{r}_j)$ is measured.
$S$ takes different forms across modalities: a scalar~RSSI value, a spatial spectrum matrix, or complex-valued~CSI.

\textbf{Prior Formulation~(Single-Receiver).}
Existing~3DGS-based~RF data synthesis methods learn a separate model~$\Theta_j$ for each receiver~$\mathbf{r}_j$:
\begin{equation}
    \hat{S}(\mathbf{t}, \mathbf{r}_j) = f(\mathbf{t};\, \Theta_j), \quad \forall\, \mathbf{t} \in \mathbb{R}^3.
\end{equation}
Each model synthesizes~RF data at any transmitter for receiver~$\mathbf{r}_j$ only, so supporting~$N$ receivers requires~$N$ independent models, and prediction at a new receiver~$\mathbf{r}^* \notin \{\mathbf{r}_j\}$ requires retraining.

\textbf{Our Formulation~(Receiver-Generalizable).}
Our goal is to learn a single model~$\Theta$ that generalizes across both transmitter and receiver positions:
\begin{equation}
    \hat{S}(\mathbf{t}, \mathbf{r}) = f(\mathbf{t}, \mathbf{r};\, \Theta), \quad \forall\, \mathbf{t} \in \mathbb{R}^3,\, \mathbf{r} \in \mathbb{R}^3,
\end{equation}
enabling synthesis at any transmitter position for any receiver, including receivers unseen during training.
The key challenge is to parameterize~$f$ such that a single set of parameters~$\Theta$ captures the receiver-dependent variations without sacrificing synthesis quality.

\begin{figure*}[t]
\centering
{\includegraphics[width=\textwidth]{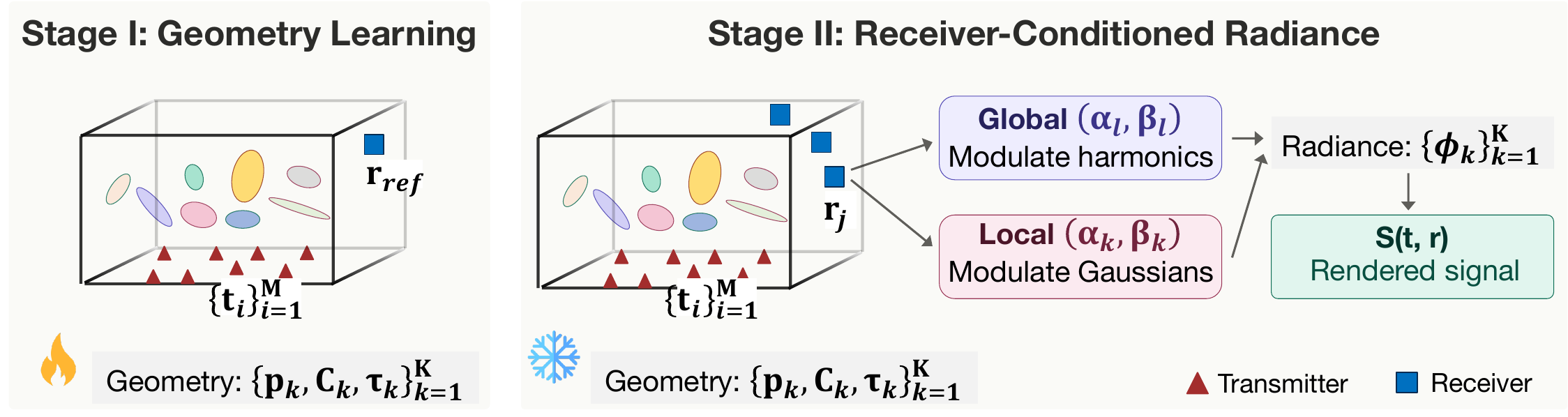}}
\vspace{-0.20in}
\caption{\ourSystem's two-stage architecture.~Stage~I learns receiver-independent~3D Gaussian geometry, and~Stage~II conditions radiance on the receiver via global and local modules in one unified model.}
\label{fig:overview}
\end{figure*}

\vspace{\mylen}
\subsection{Geometry-Radiance Decomposition Workflow}\label{sec_decomposition}
\vspace{\mylen}

The key insight behind~\ourSystem is that the~3D Gaussian attributes partition into two groups with distinct dependencies.
The geometric attributes, position~$\mathbf{p}_k$, covariance~$\mathbf{C}_k$, and transmittance~$\tau_k$, describe where scatterers are located, their spatial extent, and how they attenuate signals.
These are physical properties of the environment and are independent of the receiver position.
The radiance coefficients~$\phi_k$, in contrast, encode the directional signal response at each~Gaussian and change with receiver position.
Appendix~\ref{app_decomposition} provides a theoretical analysis showing that, under the rendering abstraction adopted by~3DGS-based~RF methods, receiver-dependent variation is naturally expressed through the radiance term while the geometric attributes remain receiver-independent.
The decomposition is a structural property of the rendering abstraction shared by prior~3DGS-based~RF methods, not a physical claim.
This motivates the two-stage training procedure illustrated in Figure~\ref{fig:overview}.

\textbf{Stage~I: Geometry.}
All~Gaussian attributes~$(\mathbf{p}_k, \mathbf{C}_k, \phi_k, \tau_k)$ are jointly optimized on~RF measurements from a single reference receiver~$\mathbf{r}_{\mathrm{ref}}$ to learn the shared scene geometry.
After convergence, the geometric attributes~$\{\mathbf{p}_k, \mathbf{C}_k, \tau_k\}_{k=1}^{K}$ are frozen and carried into~Stage~II, while the~Stage~I radiance coefficients are discarded.

\textbf{Stage~II: Receiver-Conditioned Radiance.}
With the geometry fixed, we introduce new radiance coefficients conditioned on the receiver position~$\mathbf{r}$:
\begin{equation}
    \phi_k(\hat{\mathbf{d}}_k, \mathbf{r}) = g(\phi_k^{\mathrm{base}}, \hat{\mathbf{d}}_k, \mathbf{r};\, \Theta_{\mathrm{cond}}),
\end{equation}
where~$\phi_k^{\mathrm{base}}$ are learnable base coefficients shared across all receivers, and~$g$ is a conditioning function that modulates them based on~$\mathbf{r}$.
$g$ consists of a global module that modulates directional harmonics and a local module that modulates individual~Gaussians, together parameterized by~$\Theta_{\mathrm{cond}}$.
Stage~II is trained on~RF measurements from all available receivers~$\{\mathbf{r}_j\}_{j=1}^{N}$ jointly, allowing the conditioning function to learn how receiver position affects directional radiance.

\vspace{\mylen}
\subsection{Stage~I: Geometry Learning}\label{sec_stage1}
\vspace{\mylen}

\textbf{Objective.}
Since the geometric attributes are receiver-independent (Appendix~\ref{app_decomposition}), a single receiver suffices to learn the scene geometry.
We select an arbitrary reference receiver~$\mathbf{r}_{\mathrm{ref}}$ and optimize all~Gaussian attributes~$(\mathbf{p}_k, \mathbf{C}_k, \phi_k, \tau_k)$ jointly using~RF measurements from all transmitter positions.
The training objective minimizes the prediction error between the rendered and measured~RF data:
\begin{equation}\label{eq:stage1_loss}
    \{\mathbf{p}_k, \mathbf{C}_k, \phi_k, \tau_k\}_{k=1}^{K} 
    = \argmin_{\{\mathbf{p}_k, \mathbf{C}_k, \phi_k, \tau_k\}} 
    \frac{1}{M} \sum_{i=1}^{M} 
    \ell\!\left(\hat{S}(\mathbf{t}_i),\; S(\mathbf{t}_i, \mathbf{r}_{\mathrm{ref}})\right),
\end{equation}
where~$\ell$ is a task-specific loss determined by the~RF data modality:~$\ell_1$ for scalar~RSSI,~$\ell_2$ for complex-valued~CSI, or a composite loss for spatial spectrum data.
The rendered signal~$\hat{S}(\mathbf{t}_i)$ is computed via the rendering equation in Equation~\eqref{eq:render} of Appendix~\ref{app_decomposition}.

\textbf{Representation Agnosticism.}
Stage~I is compatible with any directional radiance representation, including spherical harmonics~\cite{schonefeld2005spherical},~MLP-based radiance networks~\cite{wen2024wrfgs}, or~Fourier--Legendre expansions~\cite{cornelius2017spherical}, since the decomposition is independent of how~$\phi_k$ is parameterized.
After convergence, we freeze~$\{\mathbf{p}_k, \mathbf{C}_k, \tau_k\}_{k=1}^{K}$ and discard the~Stage~I radiance coefficients.

\vspace{\mylen}
\subsection{Stage~II: Receiver-Conditioned Radiance}\label{sec_conditioning}
\vspace{\mylen}

\textbf{Objective.}
With the geometry frozen from~Stage~I, we introduce a conditioning function~$g_{\Theta}$ that produces receiver-specific radiance from a shared set of learnable base coefficients~$\{\phi_k^{\mathrm{base}}\}_{k=1}^{K}$:
\begin{equation}\label{eq:stage2_radiance}
    \phi_k(\hat{\mathbf{d}}_k, \mathbf{r}) 
    = g_{\Theta}(\phi_k^{\mathrm{base}}, \hat{\mathbf{d}}_k, \mathbf{r}).
\end{equation}
The training objective optimizes~$\{\phi_k^{\mathrm{base}}\}_{k=1}^{K}$ and~$\Theta$ jointly across all available receivers:
\begin{equation}\label{eq:stage2_loss}
    \{\phi_k^{\mathrm{base}},\, \Theta\}
    = \argmin_{\{\phi_k^{\mathrm{base}},\, \Theta\}}
    \frac{1}{MN} \sum_{j=1}^{N} \sum_{i=1}^{M}
    \ell\!\left(\hat{S}(\mathbf{t}_i, \mathbf{r}_j),\; S(\mathbf{t}_i, \mathbf{r}_j)\right).
\end{equation}
The key question is how to design~$g_\Theta$.
A naive approach would condition on the raw receiver position~$\mathbf{r}$ alone, but the decomposition reveals richer structure: the receiver-dependence of~$\phi_k$ arises specifically from the last-segment propagation from each scatterer to the receiver.
We formalize this below and derive three constraints that dictate the architecture of~$g_{\Theta}$: the modulation should be affine, it should vary per scatterer, and it can share weights across scatterers.

\vspace{\mylen}
\subsubsection{Design Constraints from the Propagation Factor}
\vspace{\mylen}
\begin{proposition}[Last-Segment Factorization]\label{prop:factorization}
Under the multipath model~(Definition~\ref{def:multipath}, Appendix~\ref{app_decomposition}), the radiance of~Gaussian~$k$ at position~$\mathbf{p}_k$ factorizes into a receiver-independent base~$\phi_k^{\mathrm{base}}$ and a receiver-dependent last-segment propagation factor~$\eta_{\mathrm{eff}}$:
\begin{equation}\label{eq:factorization}
    \phi_k(\hat{\mathbf{d}}_k, \mathbf{r}) 
    = \eta_{\mathrm{eff}}(\mathbf{p}_k, \mathbf{r}) 
    \cdot \phi_k^{\mathrm{base}}(\hat{\mathbf{d}}_k, \mathbf{t}),
\end{equation}
where~$\phi_k^{\mathrm{base}}$ aggregates all path contributions from the transmitter to scatterer~$k$, and:
\begin{equation}\label{eq:eta_eff}
    \eta_{\mathrm{eff}}(\mathbf{p}_k, \mathbf{r}) 
    = \underbrace{\frac{\lambda}{4\pi\|\mathbf{p}_k - \mathbf{r}\|}
    \cdot e^{-j\frac{2\pi}{\lambda}\|\mathbf{p}_k - \mathbf{r}\|}}_{\text{free-space: path loss $+$ phase}}
    \cdot \underbrace{\prod_{m \in \mathcal{I}(k, \mathbf{r})} \tau_m}_{\text{occlusion}}
    = \frac{\lambda}{4\pi d_k} \cdot e^{-j \frac{2\pi}{\lambda} d_k} \cdot T_k.
\end{equation}
where~$d_k = \|\mathbf{p}_k - \mathbf{r}\|$ is the last-segment distance,~$T_k = \prod_{m \in \mathcal{I}(k, \mathbf{r})} \tau_m$ is the occlusion product, and~$\mathcal{I}(k, \mathbf{r})$ is the set of scatterers intersected by the segment from~$\mathbf{p}_k$ to~$\mathbf{r}$.
The proof is in Appendix~\ref{app_factorization}.
\end{proposition}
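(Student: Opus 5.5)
The plan is to unfold the received field under the multipath model of Definition~\ref{def:multipath} and regroup paths by their final scatterer. In that model the received complex signal $S(\mathbf{t},\mathbf{r})$ is a sum over propagation paths $\pi$. Each path is an ordered sequence of segments, and its contribution is a product of per-segment factors: free-space attenuation $\lambda/(4\pi\ell)$, phase $e^{-j2\pi\ell/\lambda}$ for a segment of length $\ell$, the transmittances $\tau_m$ of the Gaussians the segment crosses, and per-interaction scattering coefficients. The first step is to partition the path set according to the last scatterer $k$ visited before reaching $\mathbf{r}$, identified with Gaussian $\mathcal{G}_k$ at $\mathbf{p}_k$. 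This turns $S$ into $\sum_k(\text{contribution of paths ending at }k)$, which matches the rendering sum in Equation~\eqref{eq:render}.

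The second step is to factor, for a fixed $k$, each path's product into two parts. The first is the product over all segments up to and including the interaction at $\mathbf{p}_k$, together with the outgoing scattering coefficient toward the receiver. The second is the last-segment factor from $\mathbf{p}_k$ to $\mathbf{r}$. Because the last segment is the same for every path ending at $k$, namely the straight segment $\mathbf{p}_k\to\mathbf{r}$ of length $d_k$, its factor is common to all of them. That factor is exactly $\frac{\lambda}{4\pi d_k}e^{-j2\pi d_k/\lambda}\prod_{m\in\mathcal{I}(k,\mathbf{r})}\tau_m$, and it can be pulled out of the inner sum by distributivity. The remaining inner sum runs over all prefix paths from $\mathbf{t}$ to $\mathbf{p}_k$. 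It depends only on $\mathbf{t}$, the scene, and the departure direction $\hat{\mathbf{d}}_k$, and we define it as $\phi_k^{\mathrm{base}}(\hat{\mathbf{d}}_k,\mathbf{t})$. This gives Equation~\eqref{eq:factorization}, with $\eta_{\mathrm{eff}}$ as in Equation~\eqref{eq:eta_eff}. Identifying $\phi_k(\hat{\mathbf{d}}_k,\mathbf{r})$ with $\eta_{\mathrm{eff}}\phi_k^{\mathrm{base}}$ then makes the rendered sum reproduce the multipath sum term by term.

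The main obstacle is justifying that $\phi_k^{\mathrm{base}}$ is truly receiver-independent. The scattering coefficient at $\mathbf{p}_k$ in general depends on the outgoing direction $(\mathbf{r}-\mathbf{p}_k)/d_k$, and hence on $\mathbf{r}$. I would handle this by checking that Definition~\ref{def:multipath} models scatterers as isotropic re-radiators, or point scatterers whose re-emission coefficient does not depend on the outgoing direction; this is the abstraction that 3DGS-based RF methods adopt. If it instead allows direction dependence, I would absorb that coefficient into an extended $\eta_{\mathrm{eff}}$, or state the proposition under the isotropic-reemission assumption. Two smaller points also need checking. First, the occlusion product over $\mathcal{I}(k,\mathbf{r})$ must be treated as multiplicative transmittance along the segment, which is consistent with the direction-agnostic $\tau_m$. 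Second, the line-of-sight path, which has no scatterer, is either excluded or assigned to a degenerate term, so the partition by last scatterer is exhaustive.
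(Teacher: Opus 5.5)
Your proposal is essentially the paper's proof: group the paths by last scatterer into $\mathcal{L}_k$, pull the shared last-segment factor out of each group's sum by distributivity, and define the remaining transmitter-side sum as $\phi_k^{\mathrm{base}}$ (your isotropy worry does not arise, since Definition~\ref{def:multipath} uses position-only coefficients $\Gamma(\mathbf{s})$). Two details to align: the paper bridges to the renderer via $W_k\phi_k=\Phi_k$ with the receiver-independent ray weight $W_k=\sum_{\boldsymbol{\omega}}w_k(\boldsymbol{\omega})$, so $\phi_k^{\mathrm{base}}$ must carry a factor $1/W_k$ or your term-by-term match is off by $W_k$; and since $h_l$ in Definition~\ref{def:multipath} contains no transmittances, distributivity yields only the free-space factor, and $T_k$ has to be appended afterwards through Assumption~\ref{asm:per_scatterer}, as the paper does.
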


This factorization has three corollaries that dictate the conditioning architecture, each following from the structure of~$\eta_{\mathrm{eff}}$, with proofs in Appendix~\ref{app_corollary_proofs}.

\begin{corollary}[Global-local decomposition]\label{cor:global_local}
Decomposing the propagation factor in Proposition~\ref{prop:factorization} as~$\eta_{\mathrm{eff}}(\mathbf{p}_k, \mathbf{r}) = \bar{\eta}(\mathbf{r}) + \delta_k(\mathbf{r})$, where~$\bar{\eta}(\mathbf{r}) = \frac{1}{K}\sum_k \eta_{\mathrm{eff}}(\mathbf{p}_k, \mathbf{r})$ is the scatterer-averaged effect and~$\delta_k(\mathbf{r})$ is the per-scatterer residual, Equation~\eqref{eq:factorization} becomes:
\begin{equation}\label{eq:anova}
    \phi_k(\hat{\mathbf{d}}_k, \mathbf{r}) = \underbrace{\bar{\eta}(\mathbf{r}) \cdot \phi_k^{\mathrm{base}}}_{\text{global (shared)}}
    \;+\; \underbrace{\delta_k(\mathbf{r}) \cdot \phi_k^{\mathrm{base}}}_{\text{local (per-}k\text{)}}.
\end{equation}
Since~$\delta_k \neq 0$ for almost all~$\mathbf{r}$, both branches are required.
\end{corollary}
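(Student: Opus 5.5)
The identity in Equation~\eqref{eq:anova} is algebraic once Proposition~\ref{prop:factorization} is in hand. The substantive part is the claim that $\delta_k \neq 0$ for almost all $\mathbf{r}$.

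\textbf{Step 1: the decomposition.} Define $\bar{\eta}(\mathbf{r}) = \frac{1}{K}\sum_{k} \eta_{\mathrm{eff}}(\mathbf{p}_k,\mathbf{r})$ and set $\delta_k(\mathbf{r}) := \eta_{\mathrm{eff}}(\mathbf{p}_k,\mathbf{r}) - \bar{\eta}(\mathbf{r})$. Then $\eta_{\mathrm{eff}} = \bar{\eta} + \delta_k$ holds by construction, and $\sum_k \delta_k(\mathbf{r}) = 0$. Substituting into Equation~\eqref{eq:factorization} and distributing the product over the sum gives Equation~\eqref{eq:anova}. The first term depends on $k$ only through $\phi_k^{\mathrm{base}}$, so one receiver-dependent function is shared by all Gaussians; this is the global branch. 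The second term carries a genuinely per-$k$ receiver dependence; this is the local branch.

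\textbf{Step 2: the local branch is needed.} I would show that the set of $\mathbf{r}$ with $\delta_k(\mathbf{r}) = 0$ for all $k$ has Lebesgue measure zero. The condition $\delta_k \equiv 0$ for all $k$ is equivalent to $\eta_{\mathrm{eff}}(\mathbf{p}_k,\mathbf{r})$ being the same for every $k$. It therefore suffices to fix two scatterers with $\mathbf{p}_1 \neq \mathbf{p}_2$ and show that
\[
F(\mathbf{r}) = \eta_{\mathrm{eff}}(\mathbf{p}_1,\mathbf{r}) - \eta_{\mathrm{eff}}(\mathbf{p}_2,\mathbf{r})
\]
vanishes only on a null set.

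\textbf{Step 3: handling occlusion.} The occlusion products $T_k(\mathbf{r})$ are piecewise constant in $\mathbf{r}$. They change only when the segment from $\mathbf{p}_k$ to $\mathbf{r}$ crosses a boundary of some scatterer's support. Taking the $\mathcal{I}(k,\mathbf{r})$ regions to have piecewise-smooth boundaries, these boundaries form a null set. On each open cell of the complement, $T_1$ and $T_2$ are positive constants, and $F$ is real-analytic in $\mathbf{r}$ away from $\mathbf{r} \in \{\mathbf{p}_1,\mathbf{p}_2\}$.

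\textbf{Step 4: the main obstacle.} The hard part is showing $F \not\equiv 0$ on each cell, because the zero set of a nonzero real-analytic function is null. I would do this by taking moduli. The condition $F = 0$ forces $|\eta_{\mathrm{eff}}(\mathbf{p}_1,\mathbf{r})| = |\eta_{\mathrm{eff}}(\mathbf{p}_2,\mathbf{r})|$, which is
\[
\frac{T_1}{d_1} = \frac{T_2}{d_2}, \qquad d_k = \|\mathbf{p}_k - \mathbf{r}\|.
\]
This equation defines either a plane (when $T_1 = T_2$) or an Apollonius sphere (when $T_1 \neq T_2$). Either way it is a 2D surface in $\mathbb{R}^3$, hence null.

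This modulus argument sidesteps the oscillatory phase entirely, so I expect it to go through. The only remaining care is at the degenerate points $\mathbf{r} = \mathbf{p}_k$ and on the cell boundaries, and both are already null sets. The set of receivers where the local branch vanishes is a countable union of null sets, hence null, so $\delta_k \neq 0$ for almost all $\mathbf{r}$ and both branches are required.
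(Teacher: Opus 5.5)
Your proposal is correct and follows the paper's route: the identity holds by construction, and the almost-everywhere nonvanishing of the local branch is reduced to showing that $\eta_{\mathrm{eff}}(\mathbf{p}_1,\mathbf{r})=\eta_{\mathrm{eff}}(\mathbf{p}_2,\mathbf{r})$ confines $\mathbf{r}$ to a null surface. Your modulus argument is more complete than the paper's, which treats only the ``best case'' of matching occlusion terms (the perpendicular bisector plane) and leaves $T_1\neq T_2$ unaddressed; you cover that case with the Apollonius sphere, using positivity of the transmittances.
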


\begin{corollary}[Affine modulation suffices]\label{cor:affine}
The complex-valued product in Equation~\eqref{eq:anova} can be represented by element-wise affine modulation~$\left(1 + \boldsymbol{\alpha}\right) \odot \boldsymbol{\phi}^{\mathrm{base}}_k + \boldsymbol{\beta}$ with~$\boldsymbol{\alpha}, \boldsymbol{\beta} \in \mathbb{C}^d$, which jointly parameterizes the amplitude scaling and phase rotation induced by~$\eta_{\mathrm{eff}}$.
\end{corollary}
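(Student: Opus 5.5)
The plan is to start from Corollary~\ref{cor:global_local}: for fixed receiver $\mathbf{r}$, each branch of Equation~\eqref{eq:anova} is a multiplication of the coefficient vector $\boldsymbol{\phi}^{\mathrm{base}}_k \in \mathbb{C}^d$ by a single complex scalar, namely $\bar{\eta}(\mathbf{r})$ or $\delta_k(\mathbf{r})$. Adding them gives
\[
\phi_k(\hat{\mathbf{d}}_k,\mathbf{r}) = \eta_{\mathrm{eff}}(\mathbf{p}_k,\mathbf{r})\,\boldsymbol{\phi}^{\mathrm{base}}_k .
\]
This holds because the factor $\eta_{\mathrm{eff}}$ in Proposition~\ref{prop:factorization} depends only on $(\mathbf{p}_k,\mathbf{r})$ and not on $\hat{\mathbf{d}}_k$. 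Since the directional basis functions (SH or FLE) are linear in their coefficients, multiplying the radiance function by a scalar is the same as multiplying every coefficient by that scalar.

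The key steps are then as follows.
\begin{enumerate}
\item Write the complex scalar in polar form, $\eta_{\mathrm{eff}} = \frac{\lambda}{4\pi d_k} T_k\, e^{-j\frac{2\pi}{\lambda} d_k}$. Its modulus $a_k = \frac{\lambda T_k}{4\pi d_k} \ge 0$ is the amplitude scaling, and $\theta_k = -\frac{2\pi}{\lambda} d_k$ is the phase rotation.
\item Choose
\[
\boldsymbol{\alpha} = (\eta_{\mathrm{eff}}-1)\mathbf{1} \in \mathbb{C}^d, \qquad \boldsymbol{\beta}=\mathbf{0}.
\]
Then $(1+\boldsymbol{\alpha})\odot\boldsymbol{\phi}^{\mathrm{base}}_k + \boldsymbol{\beta} = \eta_{\mathrm{eff}}\,\boldsymbol{\phi}^{\mathrm{base}}_k$ holds exactly. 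This shows that the affine family contains the target product.
\item Note that $1+\alpha = a_k e^{j\theta_k}$ encodes both amplitude and phase in one complex multiplier. This is the sense in which the modulation ``jointly parameterizes'' them.
\item For real-valued modalities such as RSSI, where coefficients are real, treat $\mathbb{C}\cong\mathbb{R}^2$. The complex multiplication then becomes a $2\times 2$ scaled rotation per coefficient, which the element-wise affine map over real and imaginary parts can still represent. Alternatively, when the phase is absent, it reduces to real scaling.
\end{enumerate}

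Finally, I would remark on why the affine form is chosen over the pure multiplicative form. The model has $\boldsymbol{\alpha}=\boldsymbol{\alpha}(\mathbf{r})$ from the global branch and an additional per-$k$ term from the local branch. In that setting, the residual $\delta_k(\mathbf{r})\boldsymbol{\phi}^{\mathrm{base}}_k$ can be absorbed into either a per-$k$ $\boldsymbol{\alpha}$ or into $\boldsymbol{\beta}$. The bias therefore adds slack but is not strictly necessary. This direction is immediate by setting $\boldsymbol{\beta}=0$.

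I expect the main obstacle to be stating precisely what ``suffices'' means rather than any calculation. The sufficiency claim is exact only under the multipath model of Definition~\ref{def:multipath}, where the receiver dependence is a single scalar per Gaussian. The delicate point is the identity in step~2: I would need to check in Appendix~\ref{app_decomposition} that $\eta_{\mathrm{eff}}$ really does not depend on $\hat{\mathbf{d}}_k$ or on the harmonic index. If it did, the required $\boldsymbol{\alpha}$ would vary across harmonic components, though it would still be element-wise, so the element-wise affine form would still cover it.
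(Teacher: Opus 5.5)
Your argument is correct and is essentially the paper's: the paper also sets $\boldsymbol{\alpha} = (\eta_{\mathrm{eff}} - 1)\mathbf{1}$, $\boldsymbol{\beta} = \mathbf{0}$ to reproduce $\eta_{\mathrm{eff}}\,\boldsymbol{\phi}^{\mathrm{base}}_k$ exactly, reading off amplitude and phase from the real $2\times 2$ scaled-rotation matrix acting on $(\mathrm{Re},\mathrm{Im})$ of each coefficient rather than from your polar form, and it additionally shows (beyond what ``suffices'' requires) that a receiver-shared additive shift and a real scalar gain both fail. One caution on your step~4: real gains applied separately to the real and imaginary channels give a diagonal linear part on $(\mathrm{Re},\mathrm{Im})$ and cannot produce that rotation (this is the paper's Step~3 argument), so the phase rotation genuinely needs a complex $\boldsymbol{\alpha}$, and a backbone with purely real coefficients can only carry a real gain such as $|\eta_{\mathrm{eff}}|$, as your fallback says.
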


\begin{corollary}[Index-free form~$\Rightarrow$ shared weights]\label{cor:shared}
$\eta_{\mathrm{eff}}$ depends on~$k$ only through the inputs~$(d_k, T_k)$, so a single shared network suffices.
\end{corollary}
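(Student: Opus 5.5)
The statement to prove is Corollary~\ref{cor:shared}. It is essentially a reading of the closed form~\eqref{eq:eta_eff} in Proposition~\ref{prop:factorization}, so the plan is to make the "index-free" claim precise and then show that one shared map realizes the per-Gaussian modulation.

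\textbf{Step 1: define a single map.} Using Proposition~\ref{prop:factorization}, set
\[
h(d,T) \;=\; \frac{\lambda}{4\pi d}\, e^{-j\frac{2\pi}{\lambda} d}\, T, \qquad d>0,\; T\in[0,1].
\]
Then $\eta_{\mathrm{eff}}(\mathbf{p}_k,\mathbf{r}) = h(d_k,T_k)$ for every $k$. The function $h$ involves only the wavelength $\lambda$, a global constant, and contains no $k$-indexed parameter. The index $k$ therefore enters only through the inputs: $d_k=\|\mathbf{p}_k-\mathbf{r}\|$ depends on the frozen Stage~I mean, and $T_k$ depends on the frozen transmittances $\{\tau_m\}$ together with the intersection set $\mathcal{I}(k,\mathbf{r})$. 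Both inputs are computable from geometry that is receiver-independent and already fixed.

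\textbf{Step 2: lift to the conditioning function.} Substituting into Corollary~\ref{cor:affine} gives
\[
\phi_k = h(d_k,T_k)\,\phi_k^{\mathrm{base}}.
\]
This matches the affine form with $\boldsymbol{\alpha}_k = (h(d_k,T_k)-1)\mathbf{1}$ and $\boldsymbol{\beta}_k=\mathbf{0}$. The modulation parameters are thus $(\boldsymbol{\alpha}_k,\boldsymbol{\beta}_k) = F(d_k,T_k)$ for one fixed function $F$, applied identically to every Gaussian. It follows that a single network $g_\Theta$ taking $(d_k,T_k)$, and optionally the direction to the receiver, as input can represent all $K$ modulations with one parameter set $\Theta$. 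No per-Gaussian weights are needed, because any two Gaussians with equal $(d_k,T_k)$ require equal modulation.

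\textbf{Step 3: argue sufficiency of a learnable network.} Here I would invoke the universal approximation theorem on the compact domain $[d_{\min},d_{\max}]\times[0,1]$, with $d_{\min}>0$ so that $h$ is continuous and bounded. Any shared MLP class that is dense in the continuous functions then approximates $F$ uniformly. Because rendering is a finite sum, that uniform error propagates linearly, with coefficients bounded by $|\phi_k^{\mathrm{base}}|$ and the fixed rendering weights, into the rendered signal.

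\textbf{Main obstacle: the occlusion input $T_k$.} The factor $T_k$ is a product over a $k$- and $\mathbf{r}$-dependent set, so one must check that it is genuinely an input and not hidden per-index structure. I would handle this by noting that $T_k$ is a deterministic function of $(\mathbf{p}_k,\mathbf{r})$ and the frozen geometry, computed by ray marching from $\mathbf{p}_k$ to $\mathbf{r}$, in the same way $d_k$ is. It is therefore a precomputed feature, and the shared network never needs to know $k$.

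A secondary caveat, which I would state explicitly rather than resolve, is that sharing weights is exact under the idealized model but is only "sufficient" in the approximation sense once $h$ is learned instead of hard-coded.
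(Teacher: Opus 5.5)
Your proposal is correct and follows essentially the same route as the paper. It reads the index-free closed form $\eta_{\mathrm{eff}} = f(d_k, T_k)$ off Equation~\eqref{eq:eta_eff} and then invokes universal approximation for a single shared network; the paper's intermediate argument that separate per-scatterer networks would all converge to the same $f$ is already covered by your observation that equal $(d_k, T_k)$ forces equal modulation, and your explicit restriction to $d \geq d_{\min} > 0$ is a welcome tightening of the paper's appeal to a merely bounded domain, since $1/d$ is unbounded near $d = 0$.
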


\vspace{\mylen}
\subsubsection{Conditioning Architecture}
\vspace{\mylen}

We now instantiate the corollaries as architectural components.
Given a receiver position~$\mathbf{r}$, the conditioning function~$g_{\Theta}$ proceeds in three steps:
(i)~$\mathbf{r}$ is mapped to a feature~$\gamma(\mathbf{r})$ via a learnable~Fourier encoding that captures the two-scale structure of~$\eta_{\mathrm{eff}}$;
(ii)~a global branch applies affine modulation to all~Gaussians based on~$\gamma(\mathbf{r})$, as required by Corollaries~\ref{cor:global_local} and~\ref{cor:affine};
(iii)~a local branch refines each~Gaussian individually using per-scatterer geometric and propagation features, as required by Corollary~\ref{cor:global_local}, with shared weights across all~Gaussians by Corollary~\ref{cor:shared}.

\textbf{Receiver Encoding.}
The propagation factor~$\eta_{\mathrm{eff}}$ in Equation~\eqref{eq:eta_eff} varies at two spatial scales: the path loss~$\frac{1}{\|\mathbf{p}_k - \mathbf{r}\|}$ changes over room-scale distances, while the phase~$e^{-j\frac{2\pi}{\lambda}\|\mathbf{p}_k - \mathbf{r}\|}$ oscillates at the centimeter-scale wavelength~$\lambda$.
To capture both scales, we encode~$\mathbf{r}$ via a learnable~Fourier encoding:
\begin{equation}\label{eq:rx_enc}
    \gamma(\mathbf{r}) = \bigl[\sin(\omega_1^\top \mathbf{r}),\, 
    \cos(\omega_1^\top \mathbf{r}),\, \ldots,\, 
    \sin(\omega_F^\top \mathbf{r}),\, \cos(\omega_F^\top \mathbf{r})\bigr],
\end{equation}
where the frequency vectors~$\{\omega_f\}_{f=1}^{F}$ are initialized at logarithmically spaced scales to span both the room-scale path loss and the wavelength-scale phase, and are learned end-to-end so that they adapt to the scene-specific ratio between room geometry and operating wavelength.

\textbf{Global Branch.}
The global branch captures shared receiver-dependence across all scatterers.
When the receiver moves, the relative importance of directional harmonics shifts.
Low-order components capture broad propagation patterns and are less sensitive to receiver position, while high-order components encode fine-grained multipath structure varying sharply with~$\mathbf{r}$.
To model this, an~MLP generates per-component affine parameters from the receiver encoding and component identity:
\begin{equation}\label{eq:film}
    (\boldsymbol{\alpha}_\ell,\, \boldsymbol{\beta}_\ell) 
    = \mathrm{MLP}_{\mathrm{global}}\!\left([\gamma(\mathbf{r});\, 
    \mathbf{c}_\ell;\, \mathbf{e}_\ell]\right),
\end{equation}
where~$\mathbf{c}_\ell$ encodes the structural identity of radiance component~$\ell$,~\ie~its degree and order in the~Fourier--Legendre basis, and~$\mathbf{e}_\ell$ is a learnable embedding.
Since the radiance is complex-valued,~$\boldsymbol{\alpha}_\ell$ and~$\boldsymbol{\beta}_\ell$ operate on both real and imaginary parts, jointly modulating amplitude and phase as required by Corollary~\ref{cor:affine}.
The same modulation applies to all~Gaussians:
\begin{equation}\label{eq:film_apply}
    \tilde{\phi}_k^{(\ell)}(\mathbf{r}) = (1 + \boldsymbol{\alpha}_\ell) 
    \odot \phi_k^{\mathrm{base},(\ell)} + \boldsymbol{\beta}_\ell, 
    \quad \forall\, k = 1, \ldots, K.
\end{equation}
The total cost is~$\mathcal{O}(L)$ independent of~Gaussian count, where~$L$ is the number of radiance components, keeping the global branch lightweight even for scenes with many~Gaussians.

\textbf{Local Branch.}
The local branch captures per-scatterer variation that the global branch cannot represent.
In~RF propagation, a scatterer's contribution to the received signal depends critically on its individual relationship to the receiver.
A wall in direct line-of-sight dominates the signal, while a scatterer behind multiple obstacles contributes little regardless of the global harmonic adjustment.
To model this, we construct a six-dimensional feature for each~Gaussian~$k$ that directly mirrors the three quantities governing~$\eta_{\mathrm{eff}}$ in Equation~\eqref{eq:eta_eff}:~\emph{direction},~\emph{distance}, and~\emph{occlusion}.

The~\emph{direction}~$\hat{\mathbf{v}}_k = \frac{\mathbf{r} - \mathbf{p}_k}{\|\mathbf{r} - \mathbf{p}_k\|}$ and~\emph{distance}~$d_k = \|\mathbf{r} - \mathbf{p}_k\|$ are computed directly from the frozen~Gaussian positions and the receiver position, capturing the free-space path loss~$\frac{1}{d_k}$ and phase rotation~$e^{-j\frac{2\pi}{\lambda} d_k}$ on the last segment.
Together, these fully characterize the free-space component of~$\eta_{\mathrm{eff}}$ per scatterer, at no additional cost since the~Gaussian positions are frozen from~Stage~I.

The~\emph{occlusion}, however, requires the cumulative scene density along the path from~$\mathbf{p}_k$ to~$\mathbf{r}$, which we obtain from the frozen~Gaussian geometry without additional supervision.
At the start of~Stage~II, we voxelize the scene by splatting each~Gaussian's transmittance into a~3D occupancy grid~$\mathcal{V}$ based on its learned position~$\mathbf{p}_k$ and covariance~$\mathbf{C}_k$.
For each~Gaussian~$k$, we cast a ray from~$\mathbf{p}_k$ to~$\mathbf{r}$ and probe the grid at~$S$ evenly spaced points~$\{\mathbf{q}_{k,s}\}_{s=1}^{S}$:
\begin{equation}\label{eq:transmittance}
    T_k = \prod_{s=1}^{S} \bigl(1 - \mathcal{V}(\mathbf{q}_{k,s})\bigr), \qquad
    \bar{\rho}_k = \frac{1}{S}\sum_{s=1}^{S} \mathcal{V}(\mathbf{q}_{k,s}),
\end{equation}
where~$\mathcal{V}(\cdot)$ denotes the trilinearly interpolated occupancy.
The transmittance~$T_k$ approximates the occlusion product~$\prod_{m \in \mathcal{I}(k, \mathbf{r})} \tau_m$ from Equation~\eqref{eq:eta_eff}:~$T_k \to 1$ indicates a clear last segment, while~$T_k \to 0$ indicates heavy occlusion.
The mean density~$\bar{\rho}_k$ provides complementary information, distinguishing a single concrete wall where~$T_k$ is low but~$\bar{\rho}_k$ remains moderate, from distributed clutter where both~$T_k$ and~$\bar{\rho}_k$ are high.
A single shared~MLP processes all~Gaussians:
\begin{equation}\label{eq:local}
    (\boldsymbol{\alpha}_k^{\mathrm{loc}},\, \boldsymbol{\beta}_k^{\mathrm{loc}}) 
    = \mathrm{MLP}_{\mathrm{local}}\!\left(\left[\hat{\mathbf{v}}_k;\, d_k;\, 
    T_k;\, \bar{\rho}_k\right]\right).
\end{equation}

\textbf{Combined Modulation.}
The two branches compose sequentially: the global branch produces~$\tilde{\phi}_k^{(\ell)}(\mathbf{r})$ via Equation~\eqref{eq:film_apply}, and the local branch then applies:
\begin{equation}\label{eq:combined_local}
    \phi_k(\hat{\mathbf{d}}_k, \mathbf{r}) = (1 + \boldsymbol{\alpha}_k^{\mathrm{loc}}) 
    \odot \tilde{\phi}_k(\mathbf{r}) + \boldsymbol{\beta}_k^{\mathrm{loc}},
\end{equation}
mirroring the factored structure of~$\eta_{\mathrm{eff}}$ in Equation~\eqref{eq:eta_eff}.
Both branches are zero-initialized so the conditioning starts as an identity transformation, ensuring stable training from the~Stage~I solution.
The total cost is~$\mathcal{O}(L + K)$, and the occupancy grid is built once with no gradient overhead.
At inference,~$g_{\Theta}$ accepts any~$\mathbf{r}^*$, enabling zero-shot prediction at unseen receivers.

\vspace{\mylen}
\subsection{Single-Kernel Multi-Receiver Rasterization}
\vspace{\mylen}

Since the~Stage~I geometry is receiver-independent, per-Gaussian projection, depth sorting, and basis evaluation at~$\hat{\mathbf{d}}_k$ depend only on the transmitter and are shared across all~$N$ receivers.
The only per-receiver computation is the radiance~$\phi_k(\hat{\mathbf{d}}_k, \mathbf{r}_j)$, obtained by reducing the shared basis against per-receiver coefficients in one batched operation.
We launch a single~CUDA kernel that tiles the rendering sphere along elevation~$\theta$ and azimuth~$\phi$, with~$\mathrm{gridDim} = (\mathrm{tiles}_\theta, \mathrm{tiles}_\phi, N)$.
The receiver index maps to the third grid dimension, with each block reading its signal slice and writing its output slot.
This cuts rasterizer cost from~$\mathcal{O}(N \cdot t_{\mathrm{render}})$ to~$\mathcal{O}(t_{\mathrm{render}})$, with details in Appendix~\ref{app_multirx_impl}.

\vspace{\mylen}
\section{Experiments}\label{sec_evaluation}
\vspace{\mylen}

\textbf{Implementation.}
We implement \ourSystem{} in PyTorch with custom CUDA kernels for~RF signal tracing and rasterization.
All experiments are conducted on a single~NVIDIA H100~GPU.
Full implementation and hyperparameter details are in Appendix~\ref{app_implementation}.

\textbf{Datasets.}
We evaluate \ourSystem{} on three~RF datasets:
\ding{182}~\textit{BLE RSSI}~\cite{zhao2023nerf2},~\ding{183}~\textit{RF Identification~(RFID) spatial spectrum}~(angular power distribution at the receiver) simulated via~Sionna~\cite{hoydis2022sionna, hoydis2023sionna}, and~\ding{184}~\textit{WiFi CSI}~\cite{mamimo}.
Together they span scalar~RSSI~(21~receivers), spectrum matrices~(21~receivers), and complex-valued~CSI~(8~receivers).
Full dataset details are provided in Appendix~\ref{app_dataset}.

\textbf{Metrics.}
We adopt metrics matched to each dataset's signal modality.
\ding{182}~\textit{Mean absolute error~(MAE)} in~dBm for~BLE~RSSI.
\ding{183}~\textit{Mean squared error~(MSE)},~\textit{peak signal-to-noise ratio~(PSNR)}, and~\textit{structural similarity index~(SSIM)}~\cite{horeimage} for~RFID spatial spectrum.
\ding{184}~\textit{Signal-to-noise ratio~(SNR)} in~dB for~WiFi~CSI, computed across~$N_{\mathrm{sc}}=26$ subcarriers:
\begin{equation}
\text{SNR} = -10 \log_{10} \bigl( \textstyle\sum_{k} | \hat{h}_k - h_k |^2 \bigr/ \textstyle\sum_{k} | h_k |^2 \bigr)
\end{equation}
where~$h_k$ and~$\hat{h}_k$ are the ground-truth and predicted complex~CSI at subcarrier~$k$.
For each metric, we first compute the per-receiver mean by averaging over all test samples.
We then report the mean and standard deviation across these per-receiver means.
This convention emphasizes consistency across receivers rather than per-sample prediction variance.

\textbf{Baselines.}
We compare~\ourSystem{} against five baselines.
\ding{182}~\textit{NeRF$^2$}~\cite{zhao2023nerf2} is an~MLP-based neural~RF radiance field, representing the leading~NeRF-based approach.
\ding{183}~\textit{GRaF}~\cite{yang2025generalizable} is a~NeRF-based cross-scene spectrum synthesis method that interpolates reference spectra from neighboring transmitters at the query receiver.
\ding{184}~\textit{GSRF}~\cite{yang2025gsrf} is a complex-valued~FLE-based~3DGS method for~RF synthesis, representing the non-neural~3DGS approach, trained per receiver.
\ding{185}~\textit{RF-3DGS}~\cite{zhang2024rf3dgs} is the~SH-based~3DGS approach for~RF synthesis, also trained per receiver.
Since camera-based geometry initialization is unavailable in RF-only settings, we adopt the geometry-learning pipeline of~GSRF~\cite{yang2025gsrf} and replace~FLE with~SH to match the~RF-3DGS radiance formulation.
\ding{186}~\textit{WRF-GS+}~\cite{wen2024wrfgs} adapts deformable~3DGS from the optical domain to~RF synthesis by conditioning the deformation network on transmitter position, representing the leading neural~3DGS approach, also trained per receiver.

\begin{table*}[t]
\centering
\caption{Overall~RF data synthesis performance.
\ourSystem{} uses a single shared model for all~$N$ receivers.
Note:~GRaF reports spectrum results only, as its image-format backbone does not transfer to scalar~RSSI or complex~CSI without redesigning its entire architecture.}
\label{tab:overall}
\resizebox{\linewidth}{!}{%
\begin{tabular}{lllllll}
\toprule
& & \textbf{BLE RSSI} & \multicolumn{3}{c}{\textbf{RFID Spectrum}} & \textbf{WiFi CSI} \\
\cmidrule(lr){3-3} \cmidrule(lr){4-6} \cmidrule(lr){7-7}
Method & \#Models & MAE$\downarrow$ & MSE$\downarrow$ & PSNR$\uparrow$ & SSIM$\uparrow$ & SNR$\uparrow$ \\
\midrule
NeRF$^2$~\cite{zhao2023nerf2}         & 1   & $3.40_{\pm0.14}$ & $0.0040_{\pm0.0005}$ & $24.58_{\pm0.51}$ & $0.868_{\pm0.011}$ & $21.00_{\pm0.46}$ \\
GRaF~\cite{yang2025generalizable}         & 1   & --- & $0.0029_{\pm0.0004}$ & $26.40_{\pm0.58}$ & $0.903_{\pm0.007}$ & --- \\
\midrule
RF-3DGS~\cite{zhang2024rf3dgs}        & $N$ & $4.09_{\pm0.66}$ & $0.0054_{\pm0.0004}$ & $23.11_{\pm0.32}$ & $0.801_{\pm0.008}$ & $21.89_{\pm0.56}$ \\
\rowcolor{blue!8} \withours{RF-3DGS}                        & 1   & $3.94_{\pm0.48}$ & $0.0068_{\pm0.0007}$ & $22.22_{\pm0.38}$ & $0.791_{\pm0.012}$ & $20.19_{\pm0.39}$ \\
\midrule
GSRF~\cite{yang2025gsrf}              & $N$ & $3.74_{\pm0.41}$ & $0.0052_{\pm0.0004}$ & $23.29_{\pm0.35}$ & $0.831_{\pm0.010}$ & $21.91_{\pm0.52}$ \\
\rowcolor{blue!8} \withours{GSRF}                           & 1   & $3.57_{\pm0.24}$ & $0.0065_{\pm0.0006}$ & $22.47_{\pm0.35}$ & $0.780_{\pm0.012}$ & $21.12_{\pm0.49}$ \\
\midrule
WRF-GS+~\cite{wen2024wrfgs}           & $N$ & $5.84_{\pm5.85}$ & $0.0034_{\pm0.0005}$ & $25.81_{\pm0.66}$ & $0.892_{\pm0.012}$ & $21.84_{\pm0.52}$ \\
\rowcolor{blue!8} \withours{WRF-GS+}                       & 1   & $6.17_{\pm0.71}$ & $0.0048_{\pm0.0007}$ & $24.27_{\pm0.67}$ & $0.860_{\pm0.011}$ & $20.55_{\pm0.38}$ \\
\bottomrule
\end{tabular}
}
\end{table*}

\vspace{\mylen}
\subsection{Overall Performance Against Per-Receiver Baselines}\label{sec_overall}
\vspace{\mylen}

We evaluate~\ourSystem against per-receiver baselines on two axes, synthesis quality in Table~\ref{tab:overall} and computational cost in Table~\ref{tab:cost}.
Additional experiments appear in Appendices~\ref{app_per_receiver}--\ref{app_network_planning}.

\textbf{Synthesis Quality.}
Table~\ref{tab:overall} demonstrates~\ourSystem matches or closely approaches per-receiver baselines across all three datasets.
NeRF-based approaches achieve the highest spectrum quality, with~GRaF~\cite{yang2025generalizable} reaching~$26.40$\,dB~PSNR and~{NeRF}$^2$ reaching~$24.58$\,dB, but at prohibitive cost~\cite{zhang2024rf3dgs, wen2024wrfgs, yang2025gsrf}.
This motivates the shift to~3DGS-based methods, where~\ourSystem operates.

On~BLE~RSSI,~\ourSystem improves or stabilizes per-receiver~3DGS baselines with a single shared model.
\withours{GSRF} improves~MAE from~$3.74$ to~$3.57$\,dBm, and~\withours{RF-3DGS} from~$4.09$ to~$3.94$\,dBm.
The effect is most pronounced with~WRF-GS+, whose per-receiver baseline suffers catastrophic failures on two isolated gateways exceeding~$18$\,dBm.
\withours{WRF-GS+} eliminates these outliers at the cost of only~$0.33$\,dBm in mean, with the per-receiver breakdown in Appendix~\ref{app_per_receiver}.
These consistent gains show that receiver conditioning transfers signal patterns across gateways and prevents per-receiver overfitting to scarce data.

\begin{wraptable}{r}{0.43\linewidth}
\centering
\caption{Training, inference, and storage for serving~$N = 21$ receivers on~RFID.}
\label{tab:cost}
\resizebox{\linewidth}{!}{%
\begin{tabular}{l l l l}
\toprule
Method & Train~(h) & Infer.~(ms) & Stor.~(MB) \\
\midrule
NeRF$^2$~\cite{zhao2023nerf2}          & $14.3$ & $2436.4$ & $2.6$ \\
GRaF~\cite{yang2025generalizable}      & $16.2$ & $59891.4$ & $143.7$ \\
\midrule
RF-3DGS~\cite{zhang2024rf3dgs}         & $2.8$ & $156.4$ & $181.9$ \\
\rowcolor{blue!8} \withours{RF-3DGS}                     & $0.4$ & $35.3$ & $8.8$ \\
\midrule
GSRF~\cite{yang2025gsrf}               & $5.5$ & $238.1$ & $548.0$ \\
\rowcolor{blue!8} \withours{GSRF}                        & $0.6$ & $31.2$ & $26.1$ \\
\midrule
WRF-GS+~\cite{wen2024wrfgs}            & $13.6$ & $52.4$ & $75.6$ \\
\rowcolor{blue!8} \withours{WRF-GS+}                     & $0.3$ & $42.5$ & $3.2$ \\
\bottomrule
\end{tabular}
}
\end{wraptable}

On~RFID spectrum and~WiFi~CSI,~\ourSystem trails per-receiver baselines by~$0.8$ to~$1.7$\,dB.
This is a modest gap given that the~$N$ receiver-specific models collectively hold~$N\!\times$ the parameters.
On spatial spectrum, the~$90\!\times\!360$ angular signal encodes directional structure fine enough for per-receiver models to overfit to individual gateway placements, producing a~$0.8$ to~$1.5$\,dB~PSNR gap across backbones.
On~CSI,~\withours{GSRF} achieves the smallest gap among all backbones, dropping only~$0.79$\,dB from~$21.91$ to~$21.12$, confirming that complex-valued~FLE coefficients best preserve per-receiver fidelity under shared modeling.
Appendix~\ref{app_csi_vis} provides qualitative~CSI predictions tracking ground truth in amplitude and phase.
In exchange for this small gap,~\ourSystem synthesizes at any receiver, including those unseen during training.
These results hold across three architecturally distinct~3DGS backbones,~\ie~RF-3DGS~\cite{zhang2024rf3dgs},~GSRF~\cite{yang2025gsrf}, and~WRF-GS+~\cite{wen2024wrfgs}.
Together they span real~SH, complex-valued~FLE, and deformation networks, demonstrating the generality of our conditioning as a radiance-representation-agnostic drop-in module across signal modalities.

\textbf{Computational Cost.}
Table~\ref{tab:cost} reports the end-to-end cost of training and serving~$N = 21$ receivers from one transmitter on the~RFID dataset.
\ourSystem accelerates training by~$7\times$ to~$45\times$ over per-receiver baselines by reducing~$N$ independent training runs to a single shared-model fit.
Our multi-receiver rasterizer renders all~$N$ receivers in one batched~CUDA call, making per-transmitter inference up to~$7.6\times$ faster than serving the same~$N$ receivers with~3DGS baselines, and~$60\times$ to~$1900\times$ faster than~NeRF$^2$ and~GRaF.
Replacing~$N$ per-receiver checkpoints with one shared model compresses the deployable footprint by roughly~$21\times$.
All three savings are asymptotic in~$N$, and the gap widens as the receiver count grows.
The full breakdown is in Appendix~\ref{app_computation_cost}.

\vspace{\mylen}
\subsection{Generalization to Unseen Receivers}\label{sec_generalization}
\vspace{\mylen}

We partition the~$21$ receivers into~$3$ folds of~$7$, train on~$14$, and evaluate on all~$21$ receivers.
We exclude~WiFi~CSI because its~$8$ receivers are too few for meaningful seen/unseen splits.
Per-receiver baselines fall back to the nearest trained receiver's checkpoint at held-out positions, while receiver-conditioned methods are queried directly.
{GRaF}~\cite{yang2025generalizable} is excluded because it requires reference spectra from the query receiver.
Figure~\ref{fig_unseen_rx} reports the mean and standard deviation across folds.
``Ours'' denotes~\withours{GSRF}, the variant closest to its per-receiver baseline in Table~\ref{tab:overall}.

\begin{wrapfigure}{r}{0.65\linewidth}
\centering
\subfigure[BLE~RSSI]{
\label{fig_vis_a}
\includegraphics[width=.48\linewidth]{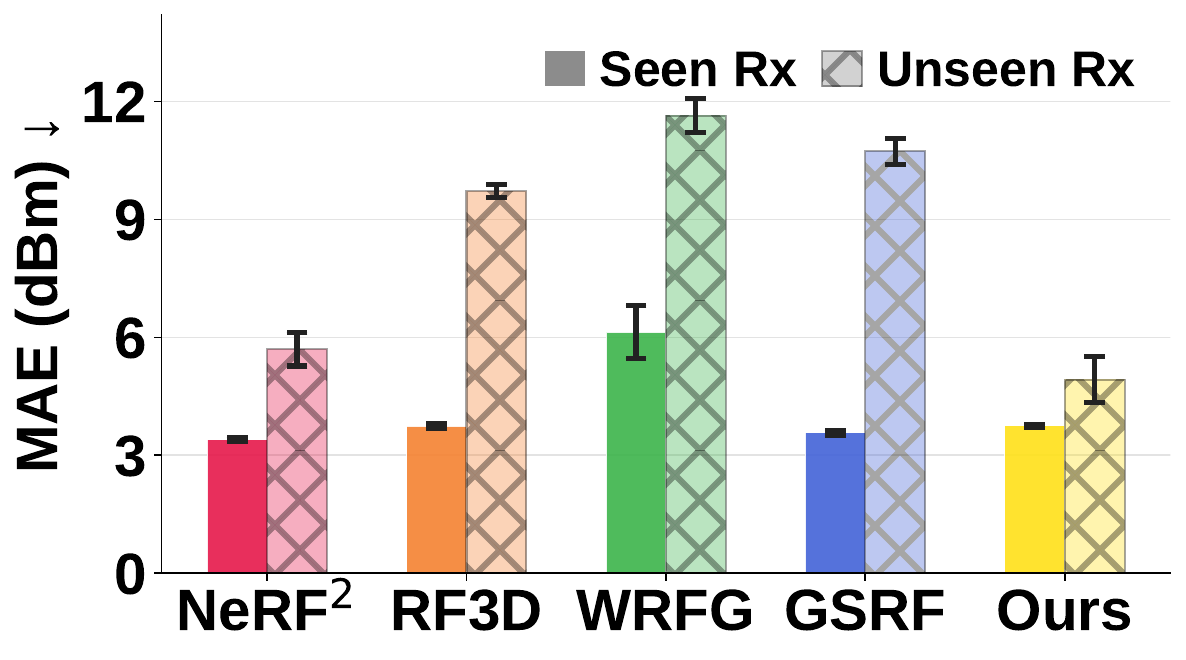}}
\subfigure[RFID~Spectrum]{
\label{fig_vis_b}
\includegraphics[width=.48\linewidth]{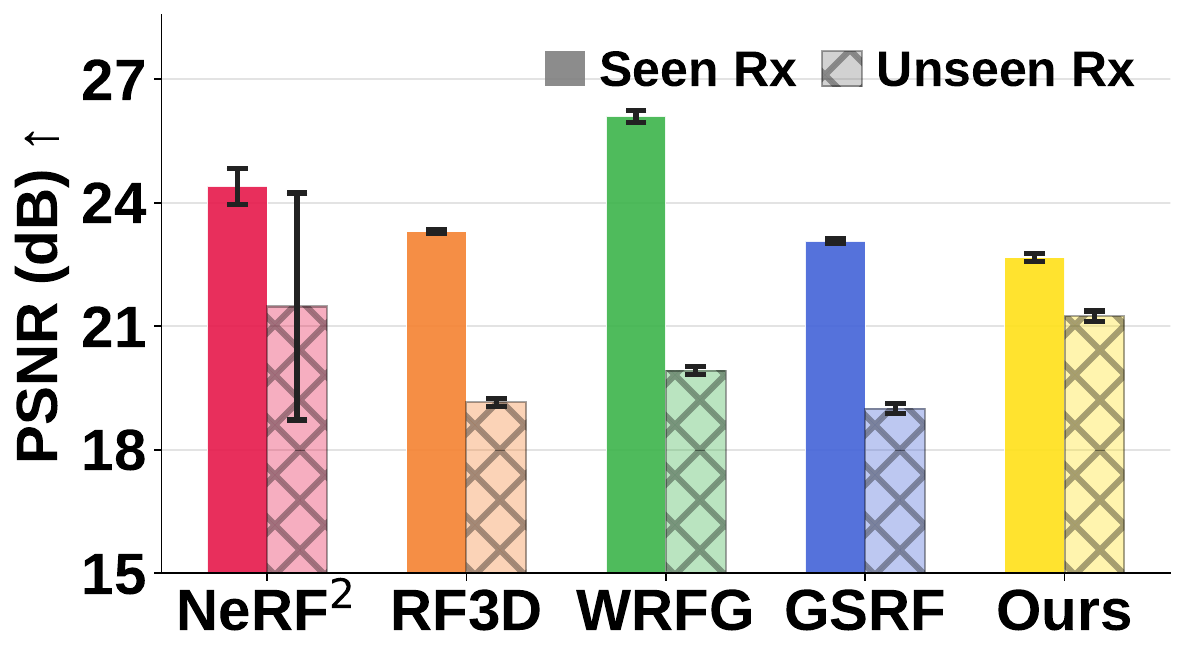}}
\caption{Unseen-receiver generalization.
``Ours'' is~\withours{GSRF}, and~RF3D\,/\,WRFG denote~RF-3DGS~\cite{zhang2024rf3dgs}\,/\,WRF-GS+~\cite{wen2024wrfgs}.}
\label{fig_unseen_rx}
\end{wrapfigure}

On~BLE~RSSI, the per-receiver methods~GSRF,~RF-3DGS, and~WRF-GS+ fail catastrophically when queried at a gateway they never observed during training.
On seen receivers,~MAE ranges from~$3.6$ to~$6.1$\,dBm.
On unseen receivers, it jumps to~$9.7$ to~$11.6$\,dBm, roughly~$3\!\times$ worse.
Since each model is fit to a single gateway's view, it cannot generalize to unobserved receiver positions.
The same pattern appears on~RFID spectrum, where per-receiver methods lose~$4$ to~$6$\,dB~PSNR between seen and unseen, with~WRF-GS+ dropping from~$26.1$ to~$19.9$\,dB.
Appendix~\ref{app_spectrum_vis} presents qualitative visualizations.

\textbf{Receiver-Conditioned Methods Generalize.}
{NeRF}$^2$, which conditions a continuous neural field on the receiver position~$\mathbf{r}$, degrades much more gracefully, moving from~$3.40$ to~$5.70$\,dBm on~BLE.
\ourSystem achieves the lowest unseen~MAE of~$4.92$\,dBm on~BLE while matching the best seen performance at~$3.75$\,dBm.
On spectrum,~\ourSystem nearly matches~{NeRF}$^2$ on unseen~PSNR at~$21.25$ versus~$21.48$\,dB, while staying within~$2$\,dB of its own seen performance on both datasets.
Appendix~\ref{app_spectrum_vis} visualizes the predicted spectra:~\ourSystem recovers sharp directional peaks at unseen receivers,~{NeRF}$^2$ predictions are overly smooth, and~WRF-GS+ collapses to the closest seen-receiver pattern.
These results confirm that explicitly conditioning the~Gaussian radiance on~$\mathbf{r}$ transfers smoothly to gateway positions never seen at training time.
Per-receiver designs, regardless of the rendering backend, can only memorize the training receivers.
This gap makes~\ourSystem the only~3DGS-based approach suitable for deployments where receiver positions are not known in advance.

\begin{wraptable}{r}{0.4\linewidth}
\centering
\caption{Ablation study on~BLE~RSSI.}
\label{tab:ablation_ble}
\small
\resizebox{\linewidth}{!}{%
\begin{tabular}{l c}
\toprule
Variant & MAE$\downarrow$ (dBm) \\
\midrule
\textbf{Full~\ourSystem{} -- \withours{GSRF}}        & $\mathbf{3.57_{\pm 0.24}}$ \\
\midrule
Joint training (no two-stage)   & $4.44_{\pm 0.39}$ \\
\midrule
Global branch only              & $4.11_{\pm 0.40}$ \\
Local branch only               & $3.91_{\pm 0.28}$ \\
\midrule
Additive modulation             & $4.02_{\pm 0.36}$ \\
\midrule
Without occlusion features      & $3.95_{\pm 0.25}$ \\
\bottomrule
\end{tabular}
}
\end{wraptable}

\vspace{\mylen}
\subsection{Ablation Study}
\vspace{\mylen}

We ablate each architectural design on~BLE~RSSI with~\withours{GSRF}, training and evaluating on all~$21$ gateways.
\ding{182}~Joint training raises~MAE from~$3.57$ to~$4.44$\,dBm, the largest gap in the table, 
consistent with the decomposition derived in Appendix~\ref{app_decomposition}:
jointly optimizing geometry and radiance prevents geometry from converging to a receiver-independent solution.
\ding{183}~Removing the local or global branch raises~MAE to~$4.11$ and~$3.91$\,dBm, respectively, showing neither alone suffices as predicted by Corollary~\ref{cor:global_local}: the global branch captures the scatterer-averaged effect but cannot represent per-scatterer residuals, while the local branch captures residuals but cannot efficiently model the shared component.
\ding{184}~Replacing complex affine modulation~$\left(1 + \boldsymbol{\alpha}\right) \odot \boldsymbol{\phi}^{\mathrm{base}} + \boldsymbol{\beta}$ with an additive shift~$\boldsymbol{\phi}^{\mathrm{base}} + \boldsymbol{\beta}$ raises~MAE to~$4.02$\,dBm, consistent with the analysis in Corollary~\ref{cor:affine}: additive conditioning lacks the multiplicative structure needed to represent~$\eta_{\mathrm{eff}}$.
\ding{185}~Removing the occlusion features~$T_k$ and~$\bar{\rho}_k$ raises~MAE to~$3.95$\,dBm, showing that the last-segment transmittance product contributes meaningful amplitude information beyond per-scatterer direction and distance.

Beyond these architectural ablations, Appendix~\ref{app_scaling_rx} shows that unseen-receiver accuracy improves with the number of training receivers, with diminishing returns beyond~$12$. Appendix~\ref{app_ref_sensitivity} further shows that~\ourSystem is robust to the choice of~Stage~I reference receiver~$\mathbf{r}_{\mathrm{ref}}$ within~$0.5$\,dBm.

\vspace{\mylen}
\subsection{Application Study: Sparse-Deployment~BLE Localization}\label{sec_application}
\vspace{\mylen}

\begin{wrapfigure}{r}{0.4\linewidth}
\centering
\includegraphics[width=\linewidth]{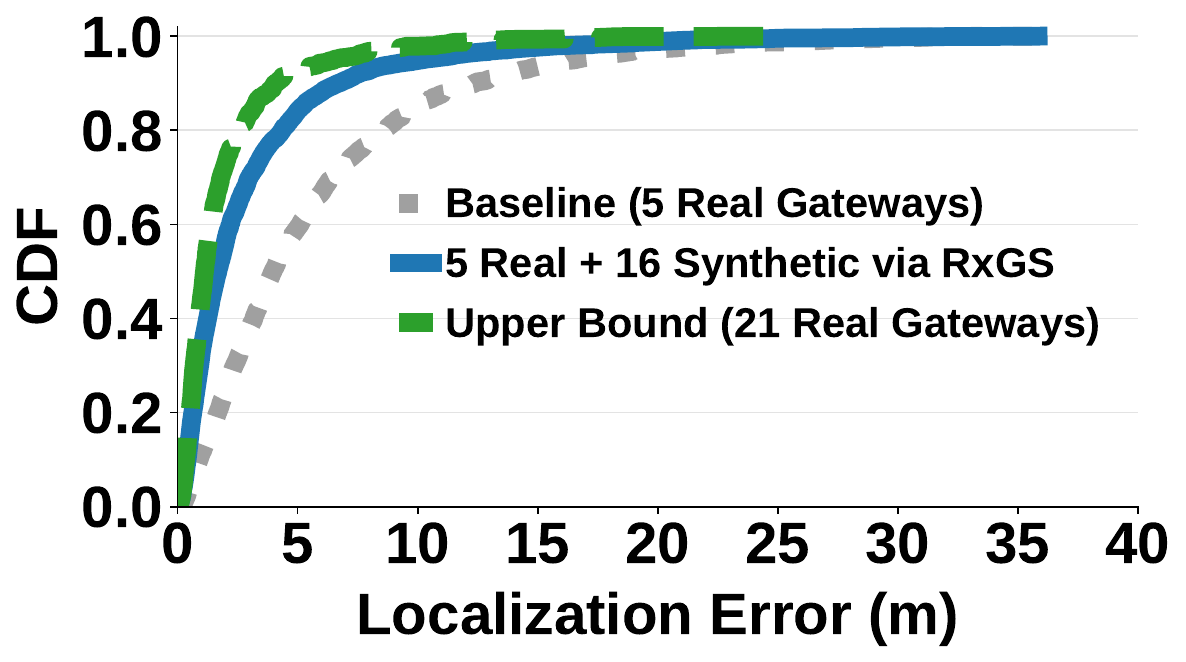}
\vspace{-0.2in}
\caption{Localization:~\ourSystem bridges $71\%$ of the sparse-to-dense gap.}
\label{fig_application}
\end{wrapfigure}

Fingerprint-based indoor localization trades off accuracy against gateway deployment cost.
Dense deployments are accurate but expensive, while sparse deployments are cheap but imprecise.
We show that~\ourSystem bridges this gap by synthesizing fingerprints at virtual gateway positions.
Following the protocol of~\S\ref{sec_generalization}, we assume only~$N\!=\!5$ gateways are physically installed and use~\ourSystem to synthesize~RSSI at the remaining~$16$ positions, forming a~$21$-dimensional fingerprint.
Each of the~$1{,}200$ test transmitters is localized via weighted~$K$-nearest-neighbor~($K\!=\!5$) against a database of~$4{,}800$ training positions.

Figure~\ref{fig_application} shows the cumulative distribution function~(CDF) of localization error.
The sparse~$5$-gateway baseline yields a mean error of~$5.61 \pm 0.23$\,m.
Augmenting it with~$16$~\ourSystem-synthesized fingerprints reduces the mean error to~$2.98 \pm 0.67$\,m, a~$47\%$ improvement over the sparse baseline.
This closes~$71\%$ of the gap to the full~$21$-gateway oracle, which achieves~$1.91$\,m.
\ourSystem therefore enables accurate~BLE localization with under a quarter of the gateways a dense deployment requires.
Appendix~\ref{app_network_planning} presents a complementary network planning study.

\vspace{\mylen}
\section{Conclusion and Future Work}\label{sec_conclusion}
\vspace{\mylen}

We present~\ourSystem, a~3DGS-based framework that synthesizes~RF data at any transmitter and any receiver within a single model.
A two-stage procedure learns receiver-independent geometry once, then conditions radiance on the receiver via complementary global and local modules.
\ourSystem achieves comparable accuracy to per-receiver baselines, predicts at receivers never seen during training, and reduces training, inference, and storage cost by up to an order of magnitude.

Two limitations remain despite these gains.
First,~\ourSystem assumes a static scene and cannot adapt to environmental changes such as moving people or rearranged furniture.
Extending the conditioning modules with a temporal axis could capture dynamics by conditioning on time as well as receiver position.
Second, the conditioning modules generalize within a scene but not across scenes, so a new environment still requires a site survey at a reference receiver.
A cross-scene foundation model trained on diverse environments could amortize this cost by transferring geometry and radiance priors.

\section*{Acknowledgement}
\label{sec:acknowledgement}

The research reported in this paper was sponsored in part by the National Science Foundation (award \# ECCS-2525614) and the U.S. Army DEVCOM Army Research Laboratory (award \# W911NF1720196). Mani Srivastava was partially supported by the Mukund Padmanabhan Term Chair at UCLA. Kang Yang was partially supported by the UCLA Institute for Digital Research and Education~(IDRE) fellowship.  The views and conclusions contained in this document are those of the authors and should not be interpreted as representing the official policies, either expressed or implied, of the funding agencies.

\medskip
{
\small
\bibliographystyle{unsrt}

}


\setcounter{tocdepth}{2}
\renewcommand{\contentsname}{Appendix Contents}
\makeatletter
\makeatother
\tableofcontents
\thispagestyle{empty}
\clearpage

\addtocontents{toc}{\protect\setcounter{tocdepth}{2}}

\appendix
\section{Proof of Geometry-Radiance Decomposition}\label{app_decomposition}

We prove that in the~3DGS-based~RF rendering framework, the receiver position~$\mathbf{r}$ influences the rendered signal only through the radiance coefficients, while all remaining attributes are receiver-independent.

\textbf{Notation.}
We use~$\mathbf{s}_k$ for scatterer positions in this appendix to distinguish the physical scatterer from the~Gaussian mean~$\mathbf{p}_k$.
Under Assumption~\ref{asm:per_scatterer},~$\mathbf{s}_k$ and~$\mathbf{p}_k$ coincide, so the two symbols are interchangeable.

\begin{definition}[Multipath Channel Model]\label{def:multipath}
The received signal at receiver~$\mathbf{r}$ from transmitter~$\mathbf{t}$ in a scene with~$K$ scatterers at fixed positions~$\left\{\mathbf{s}_k\right\}_{k=1}^{K}$ is:
\begin{equation}\label{eq:multipath}
    S\left(\mathbf{t}, \mathbf{r}\right) = \sum_{l=1}^{L} h_l\left(\mathbf{t}, \mathbf{r}\right),
\end{equation}
where~$L$ is the number of propagation paths, each traversing an ordered sequence of~$P_l \geq 0$ scatterers~$\left(\mathbf{s}_{l,1}, \ldots, \mathbf{s}_{l,P_l}\right)$.
A direct line-of-sight path corresponds to~$P_l = 0$ with a single segment~$d_{l,0} = \left\|\mathbf{t} - \mathbf{r}\right\|$.
The complex coefficient of the~$l$-th path is:
\begin{equation}\label{eq:path_coeff}
    h_l\left(\mathbf{t}, \mathbf{r}\right) = \underbrace{\prod_{i=1}^{P_l} \Gamma\left(\mathbf{s}_{l,i}\right)}_{\text{reflection coefficients}} \cdot \underbrace{\prod_{j=0}^{P_l} \frac{\lambda}{4\pi d_{l,j}}}_{\text{free-space path loss}} \cdot \underbrace{e^{-j\frac{2\pi}{\lambda}\sum_{j=0}^{P_l} d_{l,j}}}_{\text{phase accumulation}},
\end{equation}
with segment distances~$d_{l,0} = \left\|\mathbf{t} - \mathbf{s}_{l,1}\right\|$,~$d_{l,j}$ between consecutive scatterers, and~$d_{l,P_l} = \left\|\mathbf{s}_{l,P_l} - \mathbf{r}\right\|$ for~$P_l \geq 1$.
\end{definition}

\begin{definition}[RF Rendering Equation]\label{def:rendering}
Given~3D Gaussians~$\left\{\mathcal{G}_k\right\}_{k=1}^{K}$ with attributes~$\mathcal{G}_k = \left(\mathbf{p}_k, \mathbf{C}_k, \phi_k, \tau_k\right)$, the rendered signal for a ray in direction~$\boldsymbol{\omega}$ emitted from~$\mathbf{t}$ is:
\begin{equation}\label{eq:render}
    R\left(\boldsymbol{\omega}\right) = \sum_{k=1}^{K_{\boldsymbol{\omega}}} g_k\left(\boldsymbol{\omega};\, \mathbf{p}_k, \mathbf{C}_k\right) \cdot \phi_k\left(\hat{\mathbf{d}}_k\right) \cdot \prod_{m=1}^{k-1} \tau_m,
\end{equation}
where~$g_k$ is the Gaussian weight measuring ray-Gaussian overlap,~$\phi_k\left(\hat{\mathbf{d}}_k\right) \in \mathbb{C}$ is the directional radiance along~$\hat{\mathbf{d}}_k = \frac{\mathbf{p}_k - \mathbf{t}}{\left\|\mathbf{p}_k - \mathbf{t}\right\|}$, and~$\prod_{m=1}^{k-1} \tau_m$ is the cumulative transmittance along the ray.
The transmittance~$\tau_k$ models intrinsic material attenuation at~$\mathbf{s}_k$; directional path-length effects are carried by~$g_k$.
The total rendered signal~$\hat{S}$ aggregates~$R\left(\boldsymbol{\omega}\right)$ over a finite set of sampled ray directions, and represents all signal contributions at the receiver, including the direct line-of-sight component absorbed by the Gaussians along the~$\mathbf{t} \to \mathbf{r}$ line.
\end{definition}

\begin{definition}[Receiver-Independence]\label{def:rx_indep}
A scene attribute~$\xi$ is \emph{receiver-independent} if~$\frac{\partial \xi}{\partial \mathbf{r}} = \mathbf{0}$ for all~$\mathbf{r}$, and \emph{receiver-dependent} if~$\xi\left(\mathbf{r}_1\right) \neq \xi\left(\mathbf{r}_2\right)$ for some~$\mathbf{r}_1, \mathbf{r}_2$.
\end{definition}

\begin{assumption}[Static scene]\label{asm:scene}
Scatterer positions~$\left\{\mathbf{s}_k\right\}$, their spatial extents, and material properties~$\left\{\Gamma\left(\mathbf{s}_k\right)\right\}$ are fixed and do not depend on~$\mathbf{t}$ or~$\mathbf{r}$.
\end{assumption}

\begin{assumption}[Non-triviality]\label{asm:nontrivial}
There exist~$\mathbf{r}_1 \neq \mathbf{r}_2$ and~$\mathbf{t}$ such that~$S\left(\mathbf{t}, \mathbf{r}_1\right) \neq S\left(\mathbf{t}, \mathbf{r}_2\right)$.
\end{assumption}

\begin{assumption}[Per-scatterer modeling abstraction]\label{asm:per_scatterer}
Each Gaussian~$\mathcal{G}_k$ is treated as representing the aggregate signal contribution localized at a scene point~$\mathbf{s}_k$, with the Gaussian weight~$W_k$ taken to absorb the cumulative propagation and attenuation effects of all paths whose last interaction is at~$\mathbf{s}_k$, together with the direct line-of-sight contribution for Gaussians along the~$\mathbf{t} \to \mathbf{r}$ line.
This is a modeling abstraction adopted by~3DGS-based~RF methods to bridge the straight-ray rendering pipeline and the bouncing-path multipath model; under this abstraction, ~$W_k$ is identified with the multipath aggregate~$\Phi_k$ in Equation~\eqref{eq:partition}, as discussed in Remark~\ref{rem:scope}.
\end{assumption}

\begin{assumption}[Rendering-model representability]\label{asm:fit}
At the learned optimum,~$\hat{S}\left(\mathbf{t}, \mathbf{r}\right) = S\left(\mathbf{t}, \mathbf{r}\right)$ in the limit of perfect representability.
\end{assumption}

\begin{proposition}[Geometry-Radiance Decomposition]\label{prop:decomposition}
Under Assumptions~\ref{asm:scene}--\ref{asm:fit}:
\begin{enumerate}
    \item The geometric attributes~$\left\{\mathbf{p}_k, \mathbf{C}_k, \tau_k\right\}$ are receiver-independent.
    \item The radiance~$\phi_k = \phi_k\left(\hat{\mathbf{d}}_k, \mathbf{r}\right)$ is receiver-dependent, with receiver-gradient given by Equation~\eqref{eq:final_grad} under the per-scatterer identification.
    
    \item Under the per-scatterer identification, the radiance is the only term in Equation~\eqref{eq:render} that carries receiver-dependence; all remaining factors are receiver-independent.
    
\end{enumerate}
\end{proposition}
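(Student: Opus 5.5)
The plan is to partition the multipath sum of Definition~\ref{def:multipath} by the last scatterer each path touches, and then match the resulting per-scatterer terms against the rendering equation of Definition~\ref{def:rendering} through Assumptions~\ref{asm:per_scatterer} and~\ref{asm:fit}.

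\textbf{Partition by last interaction.} For each path $l$ with $P_l \ge 1$ I write $h_l$ as (i) a prefix, which collects the reflection coefficients $\Gamma(\mathbf{s}_{l,1}),\dots,\Gamma(\mathbf{s}_{l,P_l})$ and the segments $d_{l,0},\dots,d_{l,P_l-1}$, times (ii) a last-segment factor $\frac{\lambda}{4\pi d_{l,P_l}}e^{-j\frac{2\pi}{\lambda}d_{l,P_l}}$. I group paths by their last scatterer $\mathbf{s}_{l,P_l}=\mathbf{s}_k$, which gives
\begin{equation*}
S(\mathbf{t},\mathbf{r}) = \sum_{k} \Phi_k(\mathbf{t},\mathbf{r}),
\end{equation*}
where $\Phi_k$ is the aggregate of all paths ending at $\mathbf{s}_k$. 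The LOS path is assigned to the Gaussians along $\mathbf{t}\to\mathbf{r}$, as Assumption~\ref{asm:per_scatterer} allows. Within each group the last-segment distance equals $\|\mathbf{s}_k-\mathbf{r}\|$ and is shared. So $\Phi_k$ is a receiver-independent prefix sum (it depends only on $\mathbf{t}$ and the static scene, by Assumption~\ref{asm:scene}) times a factor depending on $\mathbf{r}$ only through $\|\mathbf{s}_k-\mathbf{r}\|$.

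\textbf{Claim 1 (geometry is receiver-independent).} In Definition~\ref{def:rendering}, $\mathbf{p}_k$ is identified with $\mathbf{s}_k$, and $\mathbf{C}_k$ with the scatterer's spatial extent. The transmittance $\tau_k$ is defined as intrinsic material attenuation, so it is a function of the $\Gamma(\mathbf{s}_k)$ and the extents. By Assumption~\ref{asm:scene} none of these depend on $\mathbf{r}$, so $\partial\xi/\partial\mathbf{r}=\mathbf{0}$ in the sense of Definition~\ref{def:rx_indep}. The remaining rendering quantities are $g_k(\boldsymbol{\omega};\mathbf{p}_k,\mathbf{C}_k)$, $\hat{\mathbf{d}}_k$, and the cumulative products $\prod_{m<k}\tau_m$. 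Each is a function of $\mathbf{t}$, $\boldsymbol{\omega}$ and the geometric attributes only, so each is receiver-independent. This gives the ``remaining factors'' half of Claim~3.

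\textbf{Claims 2 and 3 (radiance carries the receiver).} Under Assumption~\ref{asm:per_scatterer}, the per-Gaussian weight $W_k=g_k\,\phi_k\prod_{m<k}\tau_m$ is identified with $\Phi_k$. Assumption~\ref{asm:fit} makes $\hat S=S$, so $W_k=\Phi_k$ for every $(\mathbf{t},\mathbf{r})$. Since $g_k$ and the transmittance product are receiver-independent and nonzero on the support, I solve for the radiance:
\begin{equation*}
\phi_k=\Phi_k\big/\bigl(g_k\textstyle\prod_{m<k}\tau_m\bigr).
\end{equation*}
Hence all of the $\mathbf{r}$-dependence of $\Phi_k$ is carried by $\phi_k$, which proves Claim~3. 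Differentiating gives the receiver-gradient. The path-loss term contributes $-\frac{\mathbf{p}_k-\mathbf{r}}{d_k^{2}}\cdot\frac{\partial}{\partial\mathbf{r}}$-type terms: precisely, $\partial_{\mathbf{r}}\phi_k = \phi_k\bigl(\tfrac{1}{d_k}+j\tfrac{2\pi}{\lambda}\bigr)\tfrac{\mathbf{p}_k-\mathbf{r}}{d_k}$, which is the form I would record as Equation~\eqref{eq:final_grad}. For receiver-dependence in the sense of Definition~\ref{def:rx_indep}, Assumption~\ref{asm:nontrivial} supplies $\mathbf{r}_1,\mathbf{r}_2,\mathbf{t}$ with $S(\mathbf{t},\mathbf{r}_1)\neq S(\mathbf{t},\mathbf{r}_2)$. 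If every $\phi_k$ were equal at $\mathbf{r}_1$ and $\mathbf{r}_2$, then $\hat S$ would be too, because every other factor is receiver-independent; by Assumption~\ref{asm:fit} this contradicts $S(\mathbf{t},\mathbf{r}_1)\neq S(\mathbf{t},\mathbf{r}_2)$. So some $\phi_k$ is receiver-dependent.

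\textbf{Main obstacle.} The delicate step is the identification $W_k=\Phi_k$. Rendering aggregates over rays through $g_k$, whereas $\Phi_k$ is a per-scatterer sum, so the identification needs per-Gaussian uniqueness of the decomposition. Matching the totals $\hat S=S$ alone does not force termwise equality. I would resolve this by reading Assumption~\ref{asm:per_scatterer} explicitly as a termwise identification, so that Claims~2--3 hold under that abstraction, which is why the statement says ``under the per-scatterer identification''. A related point: occlusion on the last segment, which appears as $T_k$ in Proposition~\ref{prop:factorization}, is absorbed into $\Phi_k$, and hence into $\phi_k$, rather than into $\tau_k$. This is consistent with $\tau_k$ being defined as intrinsic material attenuation.
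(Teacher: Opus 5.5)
Your skeleton is the paper's own. You partition paths by last interaction, read Assumptions~\ref{asm:per_scatterer} and~\ref{asm:fit} as a termwise identification, get receiver-independence of $g_k$, $\tau_k$, $\hat{\mathbf d}_k$ from Assumption~\ref{asm:scene}, solve for $\phi_k$, and use Assumption~\ref{asm:nontrivial} for existence. Claims~1 and~3 and the existence part of Claim~2 go through as you argue.

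\textbf{The gradient step fails.} You assign the line-of-sight path to a Gaussian on the $\mathbf t\to\mathbf r$ line. You then assert that every group $\mathcal L_k$ shares the last-segment distance $\|\mathbf s_k-\mathbf r\|$. That is false for the group receiving the LoS path. With $P_l=0$, its only segment is $\|\mathbf t-\mathbf r\|$, so its last-segment factor is $\frac{\lambda}{4\pi\|\mathbf t-\mathbf r\|}e^{-j\frac{2\pi}{\lambda}\|\mathbf t-\mathbf r\|}$, not the one at $\mathbf s_k$. For that Gaussian, $\Phi_k$ is therefore not a receiver-independent prefix times $\eta(\mathbf s_k,\mathbf r)$. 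Your closed form $\partial_{\mathbf r}\phi_k=\phi_k(\frac{1}{d_k}+j\frac{2\pi}{\lambda})\frac{\mathbf p_k-\mathbf r}{d_k}$ is wrong there. The LoS term actually contributes $W_k^{-1}h_l(\frac{1}{\|\mathbf t-\mathbf r\|}+\frac{j2\pi}{\lambda})\frac{\mathbf t-\mathbf r}{\|\mathbf t-\mathbf r\|}$, which has a different rate and, in general, a different direction.

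Equation~\eqref{eq:final_grad} avoids this by never pulling a common last-segment factor out of the sum. It differentiates each $h_l$ through its own $d_{l,P_l}=\|\mathbf o_l-\mathbf r\|$, where $\mathbf o_l$ is the origin of the last segment ($\mathbf o_l=\mathbf t$ for LoS). This yields $\frac{1}{W_k}\sum_{l\in\mathcal L_k}h_l(\frac{1}{d_{l,P_l}}+\frac{j2\pi}{\lambda})\frac{\mathbf o_l-\mathbf r}{\|\mathbf o_l-\mathbf r\|}$. Your formula is only its special case when $\mathcal L_k$ contains no LoS path. Also, for $\{\mathcal L_k\}$ to be a partition, the LoS path must go to a single Gaussian; the paper takes the nearest one. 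Assigning it to ``the Gaussians'' on the line either double-counts it or requires a split you never specify.

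\textbf{A smaller repair.} $\phi_k=\Phi_k/(g_k\prod_{m<k}\tau_m)$ cannot hold as written. The right side depends on the ray $\boldsymbol\omega$, through $g_k$ and through which Gaussians precede $k$ on that ray, while $\phi_k(\hat{\mathbf d}_k)$ does not. Follow Equation~\eqref{eq:render_reorder}: first pull the ray-independent $\phi_k$ out of the ray sum to get $\hat S=\sum_k W_k\phi_k$ with $W_k=\sum_{\boldsymbol\omega}g_k(\boldsymbol\omega)\prod_{m<k}\tau_m$. Then impose $W_k\phi_k=\Phi_k$ and divide by $W_k\neq 0$. Receiver-independence of $W_k$ is inherited from each per-ray factor. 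Your ``main obstacle'' paragraph notices the ray aggregation, but your displayed formula does not implement it.
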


\begin{proof}

\textbf{Step~1 (Per-scatterer decomposition).}
Each propagation path terminates at the receiver, with its final segment originating from a scatterer~$\mathbf{s}_{l,P_l}$ for~$P_l \geq 1$, or directly from~$\mathbf{t}$ for the line-of-sight path~$P_l = 0$.
Let~$\mathcal{L}_k$ denote the set of paths whose last interaction is localized at~$\mathbf{s}_k$, including the LoS path assigned to the Gaussian nearest the~$\mathbf{t} \to \mathbf{r}$ line under Assumption~\ref{asm:per_scatterer}.
The sets~$\left\{\mathcal{L}_k\right\}$ partition all paths, yielding:
\begin{equation}\label{eq:partition}
    S\left(\mathbf{t}, \mathbf{r}\right) = \sum_{k=1}^{K} \Phi_k\left(\mathbf{t}, \mathbf{r}\right), \qquad \Phi_k \triangleq \sum_{l \in \mathcal{L}_k} h_l\left(\mathbf{t}, \mathbf{r}\right).
\end{equation}

For the rendering equation, define the per-ray weight~$w_k\left(\boldsymbol{\omega}\right) = g_k\left(\boldsymbol{\omega};\, \mathbf{p}_k, \mathbf{C}_k\right) \prod_{m=1}^{k-1} \tau_m$.
Aggregating over rays:
\begin{equation}\label{eq:render_reorder}
    \hat{S}\left(\mathbf{t}, \mathbf{r}\right) = \sum_{k=1}^{K} W_k \cdot \phi_k\left(\hat{\mathbf{d}}_k, \mathbf{r}\right), \qquad W_k \triangleq \sum_{\boldsymbol{\omega}} w_k\left(\boldsymbol{\omega}\right).
\end{equation}
Under Assumptions~\ref{asm:per_scatterer} and~\ref{asm:fit}, the per-scatterer identification is:
\begin{equation}\label{eq:term_match}
W_k \cdot \phi_k\!\left(\hat{\mathbf{d}}_k, \mathbf{r}\right) = \Phi_k\!\left(\mathbf{t}, \mathbf{r}\right), \quad \forall\, k.
\end{equation}
This identification is the formal bridge between the rendering-equation weight~$W_k$ and the multipath aggregate~$\Phi_k$, and is the only point at which the two models are joined.

\textbf{Step~2 (Geometric attributes are receiver-independent).}
By Assumption~\ref{asm:scene},~$\mathbf{p}_k$,~$\mathbf{C}_k$, and~$\mathbf{s}_k$ are fixed, and~$\boldsymbol{\omega}$ (defined on the emission surface centered at~$\mathbf{t}$) is independent of~$\mathbf{r}$.
Chain rule gives~$\partial g_k/\partial \mathbf{r} = \mathbf{0}$ and~$\partial \tau_k/\partial \mathbf{r} = 0$, so:
\begin{equation}\label{eq:Wk_indep}
    \frac{\partial W_k}{\partial \mathbf{r}} = \mathbf{0}, \qquad \frac{\partial \hat{\mathbf{d}}_k}{\partial \mathbf{r}} = \mathbf{0}.
\end{equation}
This establishes claim~1.

\textbf{Step~3 (Physical form of the radiance receiver-gradient).}
Under the per-scatterer identification, Equation~\eqref{eq:term_match} with~$W_k \neq 0$ gives~$\phi_k = \Phi_k / W_k$.
Since~$W_k$ is receiver-independent:
\begin{equation}\label{eq:dphi_dr}
    \frac{\partial \phi_k}{\partial \mathbf{r}} = \frac{1}{W_k} \sum_{l \in \mathcal{L}_k} \frac{\partial h_l}{\partial \mathbf{r}}.
\end{equation}
From Equation~\eqref{eq:path_coeff},~$h_l$ depends on~$\mathbf{r}$ only through the last-segment distance~$d_{l,P_l} = \left\|\mathbf{o}_l - \mathbf{r}\right\|$, where~$\mathbf{o}_l = \mathbf{s}_{l,P_l}$ is the last scatterer for~$P_l \geq 1$ and~$\mathbf{o}_l = \mathbf{t}$ for the line-of-sight path~$P_l = 0$.
The chain rule with~$\partial d_{l,P_l}/\partial \mathbf{r} = -(\mathbf{o}_l - \mathbf{r})/\left\|\mathbf{o}_l - \mathbf{r}\right\|$ and~$\partial h_l/\partial d_{l,P_l} = h_l \left(-1/d_{l,P_l} - j2\pi/\lambda\right)$ yields:
\begin{equation}\label{eq:final_grad}
    \frac{\partial \phi_k}{\partial \mathbf{r}} = \frac{1}{W_k} \sum_{l \in \mathcal{L}_k} h_l \cdot \left(\frac{1}{d_{l,P_l}} + \frac{j2\pi}{\lambda}\right) \cdot \frac{\mathbf{o}_l - \mathbf{r}}{\left\|\mathbf{o}_l - \mathbf{r}\right\|}.
\end{equation}
The~$1/d_{l,P_l}$ term is the logarithmic derivative of free-space path loss, and~$2\pi/\lambda$ is the phase accumulation rate, both standard for last-leg propagation under the~Friis law.

\textbf{Step~4 (Radiance carries all receiver-dependence).}
By Assumptions~\ref{asm:fit} and~\ref{asm:nontrivial}, there exist~$\mathbf{r}_1 \neq \mathbf{r}_2$ and~$\mathbf{t}$ with~$\hat{S}\left(\mathbf{t}, \mathbf{r}_1\right) \neq \hat{S}\left(\mathbf{t}, \mathbf{r}_2\right)$.
By Equation~\eqref{eq:render_reorder} and the receiver-independence of~$W_k$ and~$\hat{\mathbf{d}}_k$ from Step~2:
\begin{equation}\label{eq:S_diff}
    \sum_{k=1}^{K} W_k \cdot \left[\phi_k\left(\hat{\mathbf{d}}_k, \mathbf{r}_1\right) - \phi_k\left(\hat{\mathbf{d}}_k, \mathbf{r}_2\right)\right] \neq 0.
\end{equation}
Hence there exists~$k^\star$ with~$W_{k^\star} \neq 0$ and~$\phi_{k^\star}\left(\hat{\mathbf{d}}_{k^\star}, \mathbf{r}_1\right) \neq \phi_{k^\star}\left(\hat{\mathbf{d}}_{k^\star}, \mathbf{r}_2\right)$, establishing claim~2.
Since Step~2 shows all other factors in Equation~\eqref{eq:render} are receiver-independent, the radiance is the only term that can carry receiver-dependence, establishing claim~3.

Combining the three claims, the rendering equation factorizes as:
\begin{equation}\label{eq:final_decomp}
    \hat{S}\left(\mathbf{t}, \mathbf{r}\right) = \mathcal{R}\!\left(\mathbf{t},\; \underbrace{\left\{\mathbf{p}_k, \mathbf{C}_k, \tau_k\right\}}_{\partial/\partial\,\mathbf{r}\;=\;\mathbf{0}},\; \underbrace{\left\{\phi_k\left(\hat{\mathbf{d}}_k, \mathbf{r}\right)\right\}}_{\partial/\partial\,\mathbf{r}\;\neq\;\mathbf{0}}\right),
\end{equation}
which justifies freezing the geometric attributes after Stage~I and conditioning only the radiance on~$\mathbf{r}$ in Stage~II.
\end{proof}

\begin{remark}[Scope of Assumption~\ref{asm:per_scatterer}]\label{rem:scope}
The straight-ray rendering equation and the bouncing-path multipath model are not structurally identical: single-bounce paths align naturally with Gaussians along~$\mathbf{t} \to \mathbf{s}_k$, while multi-bounce and line-of-sight paths do not.
Assumption~\ref{asm:per_scatterer} adopts the phenomenological abstraction by which the~3DGS-RF framework bridges these models, assigning each path to the Gaussian whose localized region captures its final signal contribution.
This assumption is invoked only to derive the explicit gradient form in Step~3.
Claims~1, 2 (existence), and~3 rest only on Assumptions~\ref{asm:scene},~\ref{asm:nontrivial}, and~\ref{asm:fit}, and hold regardless of whether the per-scatterer abstraction is accepted.
We frame the result as a structural decomposition of the rendering equation rather than a physical theorem.
Its role in the paper is to motivate the two-stage architecture rather than to make claims about the underlying electromagnetic propagation.
\end{remark}

\section{Proof of Last-Segment Factorization}\label{app_factorization}
\begin{proof}
Equation~\eqref{eq:term_match} with~$W_k \neq 0$ gives~$\phi_k = \Phi_k / W_k$, and substituting~$\Phi_k = \sum_{l \in \mathcal{L}_k} h_l$:
\begin{equation}
    \phi_k(\hat{\mathbf{d}}_k, \mathbf{r}) = \frac{1}{W_k} \sum_{l \in \mathcal{L}_k} h_l(\mathbf{t}, \mathbf{r}).
\end{equation}
By the definition of~$\mathcal{L}_k$ (paths whose last interaction is at~$\mathbf{s}_k$), the last scatterer is~$\mathbf{s}_{l,P_l} = \mathbf{s}_k$ for all~$l \in \mathcal{L}_k$, so the last-segment distance~$d_{l,P_l} = \|\mathbf{s}_k - \mathbf{r}\|$ is shared across all paths in~$\mathcal{L}_k$.
Expanding~$h_l$ from Equation~\eqref{eq:path_coeff} and separating the last-segment terms:
\begin{equation}
    h_l(\mathbf{t}, \mathbf{r}) 
    = \underbrace{\prod_{i=1}^{P_l} \Gamma(\mathbf{s}_{l,i}) 
    \cdot \prod_{j=0}^{P_l-1} \frac{\lambda}{4\pi d_{l,j}} 
    \cdot e^{-j\frac{2\pi}{\lambda}\sum_{j=0}^{P_l-1} d_{l,j}}}_{h_l^{\mathrm{up}}(\mathbf{t})}
    \cdot \underbrace{\frac{\lambda}{4\pi\|\mathbf{s}_k - \mathbf{r}\|} 
    \cdot e^{-j\frac{2\pi}{\lambda}\|\mathbf{s}_k - \mathbf{r}\|}}_{\eta(\mathbf{s}_k, \mathbf{r})},
\end{equation}
where~$h_l^{\mathrm{up}}(\mathbf{t})$ contains all reflection coefficients, path losses, and phase accumulations from the transmitter to scatterer~$k$ along path~$l$.
Since~$\eta(\mathbf{s}_k, \mathbf{r})$ is identical for all~$l \in \mathcal{L}_k$, it factors out of the sum:
\begin{equation}
    \phi_k(\hat{\mathbf{d}}_k, \mathbf{r}) 
    = \eta(\mathbf{s}_k, \mathbf{r}) \cdot \underbrace{\frac{1}{W_k} \sum_{l \in \mathcal{L}_k} h_l^{\mathrm{up}}(\mathbf{t})}_{\phi_k^{\mathrm{base}}(\hat{\mathbf{d}}_k, \mathbf{t})}.
\end{equation}
This establishes the free-space factorization~$\phi_k = \eta \cdot \phi_k^{\mathrm{base}}$.

When the last segment from~$\mathbf{s}_k$ to~$\mathbf{r}$ traverses intervening scatterers, the cumulative transmittance~$\prod_{m \in \mathcal{I}(k, \mathbf{r})} \tau_m$ from the rendering equation (Equation~\eqref{eq:render}) attenuates the signal further.
Under Assumption~\ref{asm:per_scatterer}, this attenuation acts only on the last segment, so it multiplies~$\eta$ without affecting~$\phi_k^{\mathrm{base}}$, yielding the effective propagation factor:
\begin{equation}
    \eta_{\mathrm{eff}}(\mathbf{s}_k, \mathbf{r}) 
    = \eta(\mathbf{s}_k, \mathbf{r}) \cdot \prod_{m \in \mathcal{I}(k, \mathbf{r})} \tau_m.
\end{equation}
The full factorization follows:
\begin{equation}
    \phi_k(\hat{\mathbf{d}}_k, \mathbf{r}) = \eta_{\mathrm{eff}}(\mathbf{s}_k, \mathbf{r}) \cdot \phi_k^{\mathrm{base}}(\hat{\mathbf{d}}_k, \mathbf{t}).
\end{equation}
\end{proof}

\section{Proofs of Corollaries~\ref{cor:global_local}--\ref{cor:shared}}\label{app_corollary_proofs}

Notation follows Appendix~\ref{app_decomposition}: $\mathbf{s}_k$ denotes scatterer positions, interchangeable with the~Gaussian mean~$\mathbf{p}_k$ under Assumption~\ref{asm:per_scatterer}.

\subsection{Proof of Corollary~\ref{cor:global_local}}
\begin{proof}
Define the mean propagation effect and per-scatterer residual:
\begin{equation}
    \bar{\eta}(\mathbf{r}) = \frac{1}{K} \sum_{k=1}^{K} \eta_{\mathrm{eff}}(\mathbf{s}_k, \mathbf{r}), \qquad
    \delta_k(\mathbf{r}) = \eta_{\mathrm{eff}}(\mathbf{s}_k, \mathbf{r}) - \bar{\eta}(\mathbf{r}).
\end{equation}
By construction,~$\eta_{\mathrm{eff}} = \bar{\eta} + \delta_k$, and substituting into Proposition~\ref{prop:factorization} yields the decomposition in Equation~\eqref{eq:anova}.

It remains to show~$\delta_k \neq 0$ for almost all~$\mathbf{r}$.
By Equation~\eqref{eq:eta_eff},~$\eta_{\mathrm{eff}}(\mathbf{s}_{k_1}, \mathbf{r}) = \eta_{\mathrm{eff}}(\mathbf{s}_{k_2}, \mathbf{r})$ requires (in the best case where occlusion terms match)~$\|\mathbf{s}_{k_1} - \mathbf{r}\| = \|\mathbf{s}_{k_2} - \mathbf{r}\|$, whose solution set is the perpendicular bisector hyperplane of~$\mathbf{s}_{k_1}$ and~$\mathbf{s}_{k_2}$ --- a set of Lebesgue measure zero in~$\mathbb{R}^3$.
Therefore,~$\eta_{\mathrm{eff}}(\mathbf{s}_{k_1}, \mathbf{r}) \neq \eta_{\mathrm{eff}}(\mathbf{s}_{k_2}, \mathbf{r})$ for almost all~$\mathbf{r}$, so~$\delta_k \neq 0$ in general.
The global term~$\bar{\eta}(\mathbf{r}) \cdot \phi_k^{\mathrm{base}}$ alone cannot capture per-scatterer variation, and the local term~$\delta_k(\mathbf{r}) \cdot \phi_k^{\mathrm{base}}$ alone cannot capture the shared component, so both branches are required.
\end{proof}

\subsection{Proof of Corollary~\ref{cor:affine}}

\begin{proof}

From Proposition~\ref{prop:factorization}, ~$\phi_k\!\left(\hat{\mathbf{d}}_k, \mathbf{r}\right) = \eta_{\mathrm{eff}}\!\left(\mathbf{s}_k, \mathbf{r}\right) \cdot \boldsymbol{\phi}^{\mathrm{base}}_k\!\left(\hat{\mathbf{d}}_k, \mathbf{t}\right)$, where~$\eta_{\mathrm{eff}} \in \mathbb{C}$.
We show that affine modulation suffices to represent this complex-valued multiplication, and that two simpler alternatives are insufficient in general.

\textbf{Step~1: Complex multiplication as a structured linear map.}
Write~$\eta_{\mathrm{eff}} = a + jb$ and let~$\phi_k^{\mathrm{base},(n)} = x_n + jy_n$ denote the~$n$-th component of the complex-valued radiance vector.
Component-wise, the product~$\eta_{\mathrm{eff}} \cdot \phi_k^{\mathrm{base},(n)}$ expands as:
\begin{equation}\label{eq:complex_mat}
    \begin{bmatrix} \mathrm{Re}(\phi_k^{(n)}) \\ \mathrm{Im}(\phi_k^{(n)}) \end{bmatrix}
    = \underbrace{\begin{bmatrix} a & -b \\ b & a \end{bmatrix}}_{\mathbf{M}(\eta_{\mathrm{eff}})}
    \begin{bmatrix} x_n \\ y_n \end{bmatrix}.
\end{equation}
The matrix~$\mathbf{M}$ simultaneously scales amplitude by~$|\eta_{\mathrm{eff}}| = \sqrt{a^2 + b^2}$ and rotates phase by~$\angle\eta_{\mathrm{eff}} = \mathrm{atan2}(b, a)$.
Any conditioning mechanism must be able to represent this joint amplitude-and-phase operation.

\textbf{Step~2: Receiver-shared additive conditioning is insufficient.}
Consider an additive model~$\phi_k = \phi_k^{\mathrm{base}} + \boldsymbol{\beta}(\mathbf{r})$, where~$\boldsymbol{\beta}$ depends on~$\mathbf{r}$ only (shared across all Gaussians, with no per-~$k$ dependence).
For the true radiance~$\phi_k = \eta_{\mathrm{eff}} \cdot \phi_k^{\mathrm{base}}$, this requires:
\begin{equation}
    \boldsymbol{\beta}(\mathbf{r}) = (\eta_{\mathrm{eff}}(\mathbf{s}_k, \mathbf{r}) - 1) \cdot \phi_k^{\mathrm{base}}.
\end{equation}
Since~$\boldsymbol{\beta}$ does not depend on~$k$, this requires~$(\eta_{\mathrm{eff}}(\mathbf{s}_{k_1}, \mathbf{r}) - 1) \cdot \phi_{k_1}^{\mathrm{base}} = (\eta_{\mathrm{eff}}(\mathbf{s}_{k_2}, \mathbf{r}) - 1) \cdot \phi_{k_2}^{\mathrm{base}}$ for all~$k_1, k_2$, which fails in general since both~$\eta_{\mathrm{eff}}$ and~$\phi_k^{\mathrm{base}}$ vary across scatterers.
Therefore, receiver-shared additive conditioning cannot represent the multiplicative structure.

\textbf{Step~3: Real-valued scaling is insufficient.}
Consider a real-valued scaling model~$\phi_k = \alpha \cdot \phi_k^{\mathrm{base}}$ with~$\alpha \in \mathbb{R}$.
Component-wise:
\begin{equation}
    \begin{bmatrix} \mathrm{Re}(\phi_k^{(n)}) \\ \mathrm{Im}(\phi_k^{(n)}) \end{bmatrix}
    = \begin{bmatrix} \alpha & 0 \\ 0 & \alpha \end{bmatrix}
    \begin{bmatrix} x_n \\ y_n \end{bmatrix}.
\end{equation}
This is a diagonal matrix that can only scale amplitude, not rotate phase.
Matching~$\mathbf{M}(\eta_{\mathrm{eff}})$ from Equation~\eqref{eq:complex_mat} requires~$b = 0$,~\ie~$\mathrm{Im}(\eta_{\mathrm{eff}}) = 0$.
From Equation~\eqref{eq:eta_eff},~$\mathrm{Im}(\eta_{\mathrm{eff}}) = 0$ only when~$\frac{2\pi}{\lambda}\|\mathbf{s}_k - \mathbf{r}\| = n\pi$ for integer~$n$, which holds only at discrete distances.
Therefore, real-valued scaling fails for general receiver positions.

\textbf{Step~4: Complex affine modulation suffices.}
An element-wise affine modulation~$(1 + \boldsymbol{\alpha}) \odot \phi_k^{\mathrm{base}} + \boldsymbol{\beta}$ with~$\boldsymbol{\alpha}, \boldsymbol{\beta} \in \mathbb{C}^d$ parameterizes complex scale and shift for each component of~$\phi_k^{\mathrm{base}}$.
Setting~$\boldsymbol{\alpha} = (\eta_{\mathrm{eff}} - 1) \cdot \mathbf{1}$ and~$\boldsymbol{\beta} = \mathbf{0}$ recovers the exact product:
\begin{equation}
    (1 + \boldsymbol{\alpha}) \odot \phi_k^{\mathrm{base}} = \eta_{\mathrm{eff}} \cdot \phi_k^{\mathrm{base}}.
\end{equation}
The additive term~$\boldsymbol{\beta}$ provides additional capacity to absorb residual effects not captured by~$\eta_{\mathrm{eff}}$ alone, such as multi-bounce interactions on the last segment.
Since affine modulation succeeds where both receiver-shared additive conditioning~(Step~2) and real-valued scaling~(Step~3) fail, it preserves the multiplicative structure of the physics while providing slack for residual effects.
\end{proof}

\subsection{Proof of Corollary~\ref{cor:shared}}
\begin{proof}
We show that the scatterer index~$k$ carries no information beyond the per-scatterer inputs~$(d_k, T_k)$ required to evaluate~$\eta_{\mathrm{eff}}$, and therefore separate per-scatterer networks cannot improve over a single shared network.

\textbf{Step~1: Index-independence.}
From Equation~\eqref{eq:eta_eff}, the effective propagation factor for any scatterer~$k$ is:
\begin{equation}
    \eta_{\mathrm{eff}}(\mathbf{s}_k, \mathbf{r}) = f(d_k, T_k) = \frac{\lambda}{4\pi d_k} \cdot e^{-j\frac{2\pi}{\lambda} d_k} \cdot T_k,
\end{equation}
where~$d_k = \|\mathbf{s}_k - \mathbf{r}\|$ and~$T_k = \prod_{m \in \mathcal{I}(k, \mathbf{r})} \tau_m$.
The function~$f$ does not depend on the scatterer index~$k$: for any two scatterers~$k_1 \neq k_2$, if~$(d_{k_1}, T_{k_1}) = (d_{k_2}, T_{k_2})$, then~$\eta_{\mathrm{eff}}(\mathbf{s}_{k_1}, \mathbf{r}) = \eta_{\mathrm{eff}}(\mathbf{s}_{k_2}, \mathbf{r})$.
In the architectural implementation we augment the network inputs with the direction~$\hat{\mathbf{v}}_k$ and mean density~$\bar{\rho}_k$ to facilitate learning~$T_k$ from geometric cues, but these are not required by the theoretical form of~$f$.

\textbf{Step~2: Per-scatterer networks are redundant.}
Suppose we use separate networks~$\{\hat{f}_k\}_{k=1}^{K}$, one per scatterer, each approximating~$f$.
Since~$f$ is index-independent from Step~1, the optimal solution for every~$\hat{f}_k$ is identical:
\begin{equation}
    \hat{f}_{k_1}^* = \hat{f}_{k_2}^* = \cdots = \hat{f}_K^* = f, \quad \forall\, k_1, k_2.
\end{equation}
Therefore, at optimality, all per-scatterer networks converge to the same function, making the separate parameterizations redundant.

\textbf{Step~3: Shared network suffices.}
Define a single shared network~$\hat{f}$ that takes per-scatterer inputs for any~$k$.
From Step~1,~$\hat{f}$ must approximate~$f(d_k, T_k)$ over the union of all per-scatterer inputs:
\begin{equation}
    \hat{f} \approx f \quad \text{on} \quad \bigcup_{k=1}^{K} \{(d_k, T_k) : \mathbf{r} \in \mathbb{R}^3\}.
\end{equation}
Since~$f$ is continuous for~$d_k > 0$ and the input domain is bounded in any physical scene, the universal approximation theorem guarantees that a single network~$\hat{f}$ with sufficient capacity can approximate~$f$ to arbitrary precision.
Therefore, a single shared network with per-scatterer inputs suffices for all~$k$.
\end{proof}

\section{Single-Kernel Multi-Receiver Rasterization}
\label{app_multirx_impl}

The core mechanism behind~\ourSystem's multi-receiver inference is a single~CUDA call that renders all~$N$ receivers at once.
This exploits the fact that the sort and per-Gaussian projection depend only on the transmitter, not on the receivers.
For one transmitter at~$\mathbf{t}$ serving~$N$ receivers~$\left\{\mathbf{r}_j\right\}_{j=1}^{N}$, the pipeline runs three stages, summarized in Algorithm~\ref{alg:multirx_render}.

\textbf{Transmitter-Side Preprocessing.}
We compute Gaussian projections, per-tile bounds, and the depth-sorted Gaussian list using only~$\mathbf{t}$ and the Gaussian state.
These results are receiver-independent and shared across all~$N$ rasterizations, so this stage runs once per transmitter regardless of receiver count.

\textbf{Per-Receiver Directional Signal.}
We evaluate the~FLE/SH directional basis on the transmitter-side directions once, then reduce against the per-receiver coefficient tensor of shape~$\left(N,\, K,\, 2,\, L\right)$ in a single broadcasted operation.
This yields the per-receiver signal tensor~$\left\{s_k^{\left(j\right)}\right\} \in \mathbb{R}^{N \times K \times 2}$ in one batched call rather than~$N$ separate evaluations.

\textbf{Multi-Receiver Rasterization.}
A single~CUDA kernel is launched with grid dimension~$\text{gridDim} = \left(\text{tiles}_\theta,\, \text{tiles}_\varphi,\, N\right)$, so the receiver index becomes the~$z$-axis of the launch.
Each thread block reads its receiver-specific signal slice and writes its receiver-specific output slot, while the shared transmitter-side state of projections, sorted list, and tile ranges is read by every block.
The backward kernel runs with the same~$\text{gridDim}.z = N$ to compute per-receiver~$\partial \mathcal{L} / \partial \phi_k^{\left(j\right)}$ and accumulate Gaussian-level gradients across all~$\left(j,\, \text{direction sample}\right)$ threads via~\texttt{atomicAdd}.

\begin{algorithm}[t]
\small
\caption{Multi-Receiver Spectrum Rendering for One Transmitter.}
\label{alg:multirx_render}
\begin{algorithmic}[1]
\Require Gaussian state~$\left\{\mathbf{p}_k,\, \mathbf{C}_k,\, \tau_k,\, \phi_k^{\mathrm{base}}\right\}_{k=1}^{K}$
\Require Transmitter position~$\mathbf{t}$, receiver positions~$\left\{\mathbf{r}_j\right\}_{j=1}^{N}$
\Require Angular grid of size~$H_\theta \times W_\varphi$ over elevation and azimuth
\Ensure Spectra~$\left\{\hat{S}^{\left(j\right)} \in \mathbb{R}^{2 \times H_\theta \times W_\varphi}\right\}_{j=1}^{N}$ (real and imaginary)
\State \textcolor{gray}{// Stage 1: Transmitter side, computed once}
\State Project Gaussians onto the unit sphere centered at~$\mathbf{t}$; compute per-Gaussian direction~$\hat{\mathbf{d}}_k$ and angular tile bounds \label{ln:proj}
\State Build per-tile depth-sorted Gaussian list via one~\texttt{cub::SortPairs} call over~$\left(\text{tile},\, \text{depth}\right)$ keys \label{ln:sort}
\State \textcolor{gray}{// Stage 2: Per-receiver coefficients, one batched call}
\State $\left\{\phi_k^{\left(j\right)}\right\} \gets \texttt{rxcond.forward\_batched}\!\left(\phi_k^{\mathrm{base}},\, \left\{\mathbf{r}_j\right\},\, \left\{\mathbf{p}_k\right\}\right)$ \Comment{shape~$\left(N,\, K,\, 2,\, L\right)$} \label{ln:rxcond}
\State $\left\{s_k^{\left(j\right)}\right\} \gets \texttt{eval\_fle}\!\left(B\!\left(\hat{\mathbf{d}}_k\right),\, \left\{\phi_k^{\left(j\right)}\right\}\right)$ \Comment{basis~$B$ shared, reduce over~$L$} \label{ln:fle}
\State \textcolor{gray}{// Stage 3: Single CUDA kernel with~$\text{gridDim}.z = N$}
\State \textbf{launch}~\texttt{render\_multirx<<<($\text{tiles}_\theta$, $\text{tiles}_\varphi$, $N$), \text{block}>>>}
\ForAll{angular tile~$\left(t_\theta,\, t_\varphi\right)$ and receiver~$j \in \left[1..N\right]$ in parallel}
  \ForAll{direction sample~$\left(\theta,\, \varphi\right)$ in tile~$\left(t_\theta,\, t_\varphi\right)$ in parallel}
    \State $\mathcal{T} \gets 1$, $C \gets 0$
    \ForAll{Gaussian~$k$ in tile's sorted list (front to back)}
      \State $w_k \gets \tau_k \cdot \exp\!\left(-\tfrac{1}{2} \boldsymbol{\Delta}_{k,\theta\varphi}^\top \mathbf{C}_k^{-1} \boldsymbol{\Delta}_{k,\theta\varphi}\right)$
      \State $C \gets C + \mathcal{T}\,w_k\, s_k^{\left(j\right)}$ \Comment{per-receiver signal}
      \State $\mathcal{T} \gets \mathcal{T}\,\left(1 - w_k\right)$
      \If{$\mathcal{T} < \epsilon$} \textbf{break} \EndIf
    \EndFor
    \State $\hat{S}^{\left(j\right)}_{\theta\varphi} \gets C$
  \EndFor
\EndFor
\end{algorithmic}
\end{algorithm}

\textbf{Why This Works.}
The standard~3DGS rasterizer computes lines~\ref{ln:proj} and~\ref{ln:sort} once, then runs the per-direction alpha-blend once for a single spectrum.
For multi-receiver rendering, lines~\ref{ln:proj} and~\ref{ln:sort} depend only on the transmitter-anchored sphere rays, so they are identical for every receiver and can be reused.
The directional basis evaluated at~$\hat{\mathbf{d}}_k$ is also transmitter-only, so we compute it once on line~\ref{ln:fle}.
The per-receiver computation reduces to the directional signal~$s_k^{\left(j\right)}$, the~FLE/SH coefficient sequence~$\phi_k^{\left(j\right)}$ projected against the shared basis, producing one~$\left(N,\, K,\, 2\right)$ signal tensor.
The per-direction rendering then runs along an extra batch axis~$j \in \left[1..N\right]$, mapped to~$\text{blockIdx}.z$ at kernel launch.
Each receiver reads its own slice of the signal and writes its own output slot, so there are no atomic conflicts on the per-receiver outputs.

\section{Implementation Details}
\label{app_implementation}
We implement \ourSystem{} in PyTorch, with custom CUDA kernels for~RF signal tracing and rasterization to support the differentiable~RF rendering pipeline.
All experiments are conducted on a single~NVIDIA~H100~GPU.
This section describes the implementation of the~FLE-based scene representation.
Other representation variants~(SH) and signal modalities~(RSSI,~CSI) follow the same training pipeline and optimization strategy.
Dataset-specific hyperparameters are reported in Table~\ref{tab:hyperparams}.

\subsection{Scene Representation}
The~RF environment is represented as a set of~$K$~3D~Gaussians.
Each~Gaussian stores five attributes: position~$\mathbf{p} \in \mathbb{R}^3$, anisotropic scale~$\mathbf{s} \in \mathbb{R}^3$ (log-space), rotation quaternion~$\mathbf{q} \in \mathbb{R}^4$, sigmoid-activated transmittance~$\tau \in \left[0,1\right]$, and~FLE coefficients~$\mathbf{f} \in \mathbb{R}^{L \times C}$.
Here~$L = \left(\ell_{\max}+1\right)^2$ is the number of~FLE basis functions (matching the radiance-component count from the main text), and~$C$ is the number of signal channels.
The storage~$\mathbf{f}$ realizes the base coefficients~$\phi_k^{\mathrm{base}}$ from~\S\ref{sec_conditioning} in the implementation.
The covariance matrix is constructed as~$\boldsymbol{\Sigma} = \mathbf{R}\mathbf{S}\mathbf{S}^\top\mathbf{R}^\top$, where~$\mathbf{R}$ and~$\mathbf{S}$ are the rotation and scaling matrices built from~$\mathbf{q}$ and~$\mathbf{s}$.

\textbf{Initialization.}
We initialize up to~$50{,}000$~Gaussians from a point cloud sampled from the scene.
Scales are set to the log of nearest-neighbor distances, computed via a~CUDA~KNN module.
Rotations initialize to the identity quaternion.
Transmittance initializes at~$\sigma^{-1}\left(0.1\right)$, and~FLE coefficients initialize to zero.

\begin{table}[t]
\centering
\renewcommand{\arraystretch}{1.2}
\caption{Hyperparameter settings for \ourSystem{} across all datasets.}
\label{tab:hyperparams}
\begin{tabularx}{\textwidth}{>{\raggedright\arraybackslash}p{0.3\textwidth}
                              >{\raggedright\arraybackslash}p{0.2\textwidth}
                              >{\raggedright\arraybackslash}p{0.2\textwidth}
                              >{\raggedright\arraybackslash}X}
\toprule
\textbf{Parameter} & \textbf{BLE RSSI} & \textbf{RFID Spectrum} & \textbf{WiFi CSI} \\
\midrule
Initial Gaussians~($K_{\text{init}}$)        & $50{,}000$              & $50{,}000$              & $30{,}000$ \\
Max~FLE degree~($\ell_{\max}$)               & 9                       & 9                       & 4 \\
FLE ramp interval~($T_{\text{ramp}}$)        & 500                     & 500                     & 100 \\
Ray directions~($n_\varphi \times n_\theta$) & $36\times9$             & $360\times90$           & $72\times18$ \\
Sphere radius~($r_{\text{rx}}$)              & 4.22                    & 1.00                    & 1.00 \\
Position~LR (initial)                        & $5.05\times10^{-5}$     & $1.60\times10^{-4}$     & $2.78\times10^{-4}$ \\
Position~LR (final)                          & $2.06\times10^{-6}$     & $1.60\times10^{-6}$     & $1.60\times10^{-6}$ \\
Feature~LR                                   & $1.14\times10^{-3}$     & $5.00\times10^{-3}$     & $1.53\times10^{-2}$ \\
Rest~LR ratio                                & 0.16                    & 0.20                    & 0.80 \\
Transmittance~LR                             & $3.79\times10^{-2}$     & $1.00\times10^{-2}$     & $1.27\times10^{-3}$ \\
Scaling~LR                                   & $2.90\times10^{-3}$     & $5.00\times10^{-3}$     & $5.00\times10^{-3}$ \\
Rotation~LR                                  & $5.02\times10^{-4}$     & $1.00\times10^{-3}$     & $1.00\times10^{-3}$ \\
Densification threshold~($\tau_g$)           & $1.97\times10^{-4}$     & $2.00\times10^{-4}$     & $1.88\times10^{-4}$ \\
Transmittance reset interval                 & 4{,}000                 & 3{,}000                 & 2{,}000 \\
Fourier bands~($F$)                          & 6                       & 6                       & 5 \\
MLP hidden dim~($d$)                         & 64                      & 64                      & 256 \\
Component embedding dim~($d_c$)              & 16                      & 16                      & 16 \\
Occupancy probe samples~($S$)                & 16                      & 16                      & 16 \\
Grid resolution                              & $128^3$                 & $128^3$                 & $128^3$ \\
Stage~I iterations                           & 30K                     & 30K                     & 30K \\
Stage~II iterations                          & 100K                    & 60K                     & 100K \\
\bottomrule
\end{tabularx}
\end{table}

\subsection{Signal Rendering Pipeline}
For each~Gaussian, the renderer computes a complex signal pair from its directional~FLE coefficients along the~transmitter-to-Gaussian direction.
For a direction~$\hat{\mathbf{d}} = \left(\theta, \varphi\right)$ from the~transmitter to the~Gaussian center, the real and imaginary components of the radiated signal are:
\begin{align}
R_e &= \sum_{l=0}^{\ell_{\max}} \sum_{m=-l}^{l}
        \left[ a_{lm} \cos\left(m\varphi\right) - b_{lm} \sin\left(m\varphi\right) \right]
        N_{lm} P_l^{|m|}\left(\cos\theta\right) \\
R_i &= \sum_{l=0}^{\ell_{\max}} \sum_{m=-l}^{l}
        \left[ a_{lm} \sin\left(m\varphi\right) + b_{lm} \cos\left(m\varphi\right) \right]
        N_{lm} P_l^{|m|}\left(\cos\theta\right)
\end{align}
where~$a_{lm}, b_{lm} \in \mathbf{f}$ are learnable coefficients,~$P_l^{|m|}$ are associated~Legendre polynomials computed via batched three-term recurrence, and~$N_{lm} = \sqrt{\frac{\left(2l+1\right)}{4\pi} \cdot \frac{\left(l-|m|\right)!}{\left(l+|m|\right)!}}$ is the normalization factor.
These values, together with each~Gaussian's covariance and transmittance, are passed to a custom~CUDA rasterizer~(Appendix~\ref{app_multirx_impl}) that composites signals across all~Gaussians via covariance-weighted accumulation.
The final output amplitude~$|R| = \sqrt{R_e^2 + R_i^2 + \epsilon}$ uses a numerical stabilizer~$\epsilon = 10^{-8}$.
The rendering grid uses a sphere of radius~$r_{\text{rx}}$ centered at the transmitter.

\subsection{Loss Function}
Both training stages minimize an~$L_1$ reconstruction loss.
Given a rendered signal~$\hat{\mathbf{y}}$ and ground-truth measurement~$\mathbf{y}$, the loss is:
\begin{equation}
\mathcal{L} = \left(1 - \lambda_{\text{ssim}} - \lambda_{\text{fft}}\right) \lVert \hat{\mathbf{y}} - \mathbf{y} \rVert_1
            + \lambda_{\text{ssim}} \left(1 - \text{SSIM}\left(\hat{\mathbf{y}}, \mathbf{y}\right)\right)
            + \lambda_{\text{fft}} \lVert \mathcal{F}\left(\hat{\mathbf{y}}\right) - \mathcal{F}\left(\mathbf{y}\right) \rVert_2^2
\end{equation}
where~$\mathcal{F}$ denotes the~2D orthonormal~FFT and~SSIM uses an~$11 \times 11$~Gaussian window.

\subsection{Stage~I: Geometry Learning}
Stage~I trains all~Gaussian parameters using measurements from a single reference receiver.
At each iteration, a random training sample from the reference receiver is selected and the rendered signal is compared against the ground-truth measurement via the~$L_1$ loss.

\textbf{Optimizer.}
We use~Adam with per-parameter learning rates listed in Table~\ref{tab:hyperparams}.
The position~LR decays exponentially from its initial to final value with a delay multiplier of~$0.01$.
Higher-degree~FLE coefficients are additionally scaled by~$0.2$ to prioritize low-frequency geometry early in training.
The~FLE degree is progressively increased from~$0$, incrementing every~$T_{\text{ramp}}$ iterations up to~$\ell_{\max}$.
Stage~I runs for~$30$K iterations total.

\textbf{Adaptive Densification.}
Gaussians are densified based on accumulated positional gradients every~$100$ iterations, from iteration~$500$ to the stage midpoint~(iteration~$15$K).
Gaussians exceeding the densification threshold~$\tau_g$ (Table~\ref{tab:hyperparams}) are cloned if their scale is below the scene extent threshold, or split into two children with scales reduced by factor~$0.8$ if above.
Gaussians with projected screen-space radius above a pixel threshold, or world-space extent exceeding~$0.1\times$ the scene extent, are pruned.
Transmittance resets to~$\sigma^{-1}\left(0.01\right)$ at the interval listed in Table~\ref{tab:hyperparams} to prevent premature saturation.

\subsection{Stage~II: Receiver-Conditioned Radiance}
With geometry frozen,~Stage~II trains the~FLE coefficients and the conditioning modules across all receivers.
At each iteration, a random training sample from any receiver is selected, the conditioning modules modulate the base~FLE coefficients based on that receiver position before rendering, and the~$L_1$ loss is computed against the corresponding ground-truth measurement.

\textbf{Global Branch.}
The receiver position is encoded by a learnable~Fourier encoding with~$F$ frequency bands per spatial dimension (Table~\ref{tab:hyperparams}), initialized at~$2^k$ spacing and optimized during training.
For each~FLE component, a~3-layer~ReLU~MLP with hidden dimension~$d$ takes the concatenation of the encoded~RX position, per-component degree and order features, and a learnable per-component embedding of dimension~$d_c$ as input.
The~MLP produces per-component scale and shift parameters that modulate the base~FLE coefficients.
This modulation is computed once per component and broadcast to all~$K$~Gaussians, giving~$\mathcal{O}\left(L\right)$ cost independent of~Gaussian count.

\textbf{Local Branch.}
At the start of~Stage~II, a~3D occupancy grid of resolution~$128^3$ is built by splatting each frozen~Gaussian into a voxel density field using its position, transmittance, and scale with a~$2\sigma$ radius.
For each~Gaussian~$k$, we cast a ray toward the receiver and probe the grid at~$S$ uniformly spaced points~(from~$t=0.05$ to~$t=0.95$) via nearest-neighbor lookup, computing path transmittance~$T_k = \prod_{s=1}^{S}\left(1 - \rho_s\right)$ and mean density~$\bar{\rho}_k$.
A shared~MLP takes~$\left[\hat{\mathbf{v}}_k,\; d_k,\; T_k,\; \bar{\rho}_k\right] \in \mathbb{R}^6$ as input and produces per-Gaussian scale and shift parameters.
Both~MLP final layers are zero-initialized so that conditioning begins as the identity transform.

\textbf{Optimizer.}
FLE coefficients and conditioning modules are trained with separate~Adam optimizers, with learning rates listed in Table~\ref{tab:hyperparams}.
Higher-degree~FLE coefficients are scaled by~$0.2$ as in~Stage~I.

\section{Dataset Details}
\label{app_dataset}

All three datasets use random seed~$8371$ for reproducible train/test splits.
TX denotes transmitter and~RX denotes receiver throughout this section.

\subsection{BLE RSSI}
The~BLE~RSSI dataset is collected in a real-world nursing home occupying~${\sim}15{,}000$\,ft$^2$~\cite{zhao2023nerf2}.
$N_{\mathrm{RX}} = 21$~BLE gateways operating at~$2.4$\,GHz are deployed throughout the facility, each measuring the received signal strength~(RSSI in~dBm) from a moving~BLE node.
TX positions are collected by random walks using~$30$~BLE nodes, yielding~$N_{\mathrm{TX}} = 6{,}000$~TX positions in total.
Each sample is a~$21$-dimensional tuple of~RSSI readings from the~$21$ receivers.
Measurements below~$-100$\,dBm are treated as out-of-range and excluded.
We use an~80/20 random split~($4{,}800$ train~/~$1{,}200$ test~TX positions).

\subsection{RFID Spatial Spectrum}
Collecting real-world spatial spectrum data requires a calibrated antenna array and precise~TX positioning across hundreds of positions, making large-scale data collection prohibitively expensive.
Furthermore, no public multi-receiver spatial spectrum dataset currently exists.
We therefore generate this dataset via simulation using the~NVIDIA~Sionna ray tracer~\cite{hoydis2022sionna, hoydis2023sionna}.
Sionna is a physics-based, differentiable ray tracing engine that models line-of-sight, specular reflection, diffuse scattering, and refraction, with material properties calibrated to~ITU-R standards.
Its~GPU acceleration enables large-scale dataset generation at simulation speeds.

The scene is a cluttered conference room~(8\,m $\times$ 6\,m $\times$ 3\,m), as shown in Figure~\ref{fig_layout}.
Each measurement is a~$90 \times 360$ grayscale image capturing the angular power spectrum over elevation and azimuth at~$2.4$\,GHz.
The dataset contains~$N_{\mathrm{TX}} = 5{,}089$~TX positions and~$N_{\mathrm{RX}} = 21$ distributed~RX positions at~$z=2$\,m height, yielding~$N_{\mathrm{TX}} \times N_{\mathrm{RX}} = 106{,}869$ spectrum images in total.
We use an~80/20 random split~($4{,}071$ train~/~$1{,}018$ test~TX positions).

\begin{figure*}[t]
\centering
\includegraphics[width=\textwidth]{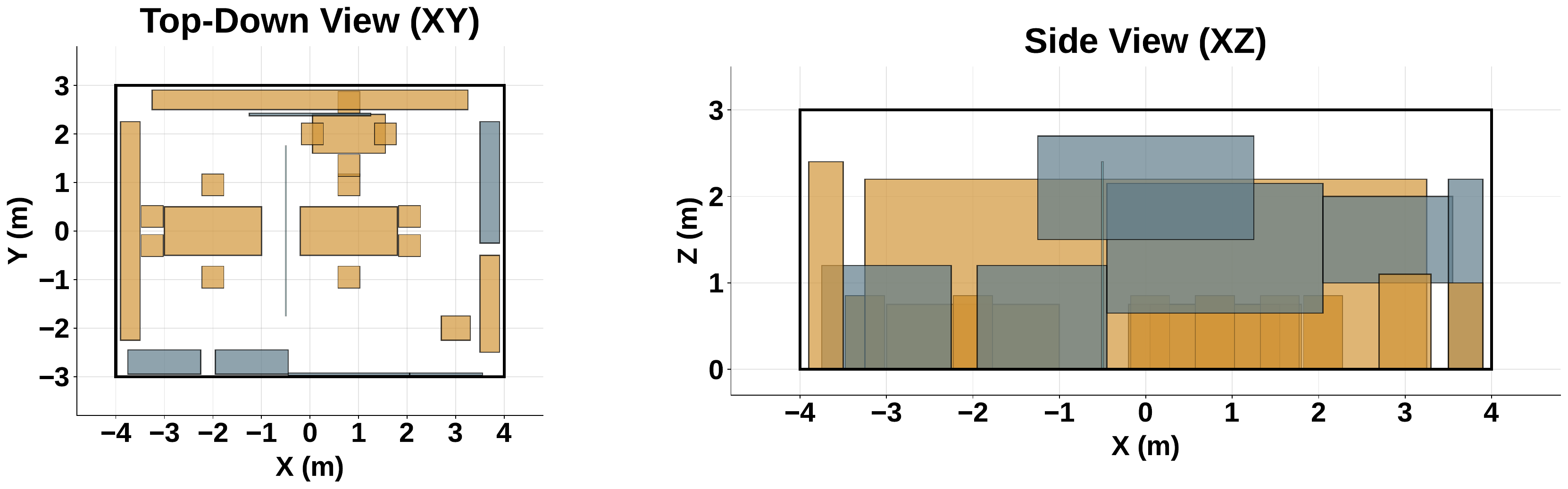}
\caption{Floor plan of the conference room~(8\,m $\times$ 6\,m $\times$ 3\,m) used for the~RFID spectrum dataset.}
\label{fig_layout}
\end{figure*}

\textbf{Scene Construction.}
Each scene is modeled as a rectangular room with six surfaces~(floor, ceiling, four walls), built as~Mitsuba triangle meshes.
Furniture items including tables, chairs, bookshelves, metal cabinets, and glass partitions are represented as axis-aligned box meshes.
Each object is assigned an~ITU radio material~(\eg~concrete, plasterboard, wood, metal, glass) with calibrated relative permittivity, conductivity, and scattering coefficients following the~ITU-R~P.2040 standard.

\textbf{Channel Simulation.}
For each~TX--RX pair, we place a single isotropic~TX antenna and, at the~RX, an~$8\times8$ uniform planar array~(UPA) with half-wavelength spacing at~$f=2.4$\,GHz.
Sionna's~\texttt{PathSolver} performs ray tracing with up to~$5$ bounces, including line-of-sight, specular reflection, diffuse reflection, and refraction.
The solver outputs the channel frequency response~$\mathbf{H} \in \mathbb{C}^{N_{\mathrm{ant}} \times N_{\mathrm{sub}}}$, where~$N_{\mathrm{ant}}=64$ and~$N_{\mathrm{sub}}=512$ subcarriers spanning a~$100$\,MHz bandwidth.

\textbf{Spectrum Computation.}
We compute the angular power spectrum via conventional beamforming~(CBF).
A~2D~Hanning window is applied across the array to suppress sidelobes.
Per-element phase and amplitude errors~($\pm 15^\circ$,~$\pm 1.5$\,dB) are introduced for realism.
The spatial covariance matrix is estimated as~$\mathbf{R}_{\mathrm{spec}} = \mathbf{H}\mathbf{H}^H / N_{\mathrm{sub}}$.
For each scan direction~$\left(\varphi, \theta\right)$ on a~$1^\circ$-resolution grid covering azimuth~$\left[0^\circ, 360^\circ\right)$ and elevation~$\left[0^\circ, 90^\circ\right)$, the beamforming power is:
\begin{equation}
P\left(\varphi, \theta\right) = \mathbf{a}^H\left(\varphi, \theta\right)\, \mathbf{R}_{\mathrm{spec}}\, \mathbf{a}\left(\varphi, \theta\right)
\end{equation}
where~$\mathbf{a}\left(\varphi, \theta\right) = \exp\!\left(j\, k\, \mathbf{p}_{\mathrm{ant}}^\top \hat{\mathbf{u}}\left(\varphi, \theta\right)\right) \in \mathbb{C}^{N_{\mathrm{ant}}}$ is the steering vector,~$k = 2\pi/\lambda$ is the wavenumber,~$\mathbf{p}_{\mathrm{ant}}$ contains the antenna element positions, and~$\hat{\mathbf{u}}$ is the unit direction vector.
The~3GPP~TR\,38.901 element radiation pattern is applied in the array's local coordinate frame.
The resulting spectrum is peak-normalized to~$0$\,dB and saved as a~$90 \times 360$ grayscale image.

\textbf{TX/RX Placement.}
We place~$N_{\mathrm{RX}} = 21$~RX positions at uniformly random~$(x, y)$ positions within the room with a minimum inter-RX spacing constraint.
TX positions are sampled randomly within the room volume and validated against furniture collisions via rejection sampling.

\subsection{WiFi CSI}
Prior work~\cite{zhao2023nerf2, wen2024wrfgs} uses the~Argos dataset~\cite{shepard2016understanding} for~CSI evaluation, where all antennas are co-located at a single position.
This single-position setup cannot evaluate multi-receiver generalization, which is the core contribution of~\ourSystem{}.
We therefore use the~KU~Leuven~Ultra-Dense~Indoor~MaMIMO~CSI~Dataset~\cite{mamimo}, collected using a~64-antenna massive~MIMO testbed operating at a center frequency of~$2.61$\,GHz with~$20$\,MHz bandwidth.
The original measurements comprise~$252{,}004$ channel snapshots over a~$1.25\,\text{m} \times 1.25\,\text{m}$ transmitter grid at~$5$\,mm spatial resolution, with~$100$ complex subcarriers per antenna.
We use the distributed line-of-sight scenario, where~$64$ antennas are arranged in~$8$ groups of~$8$ around a~${\sim}5\,\text{m} \times 5\,\text{m}$ room in an octagonal layout at~$z=1$\,m height.

\begin{figure*}[t]
\centering
\includegraphics[width=\textwidth]{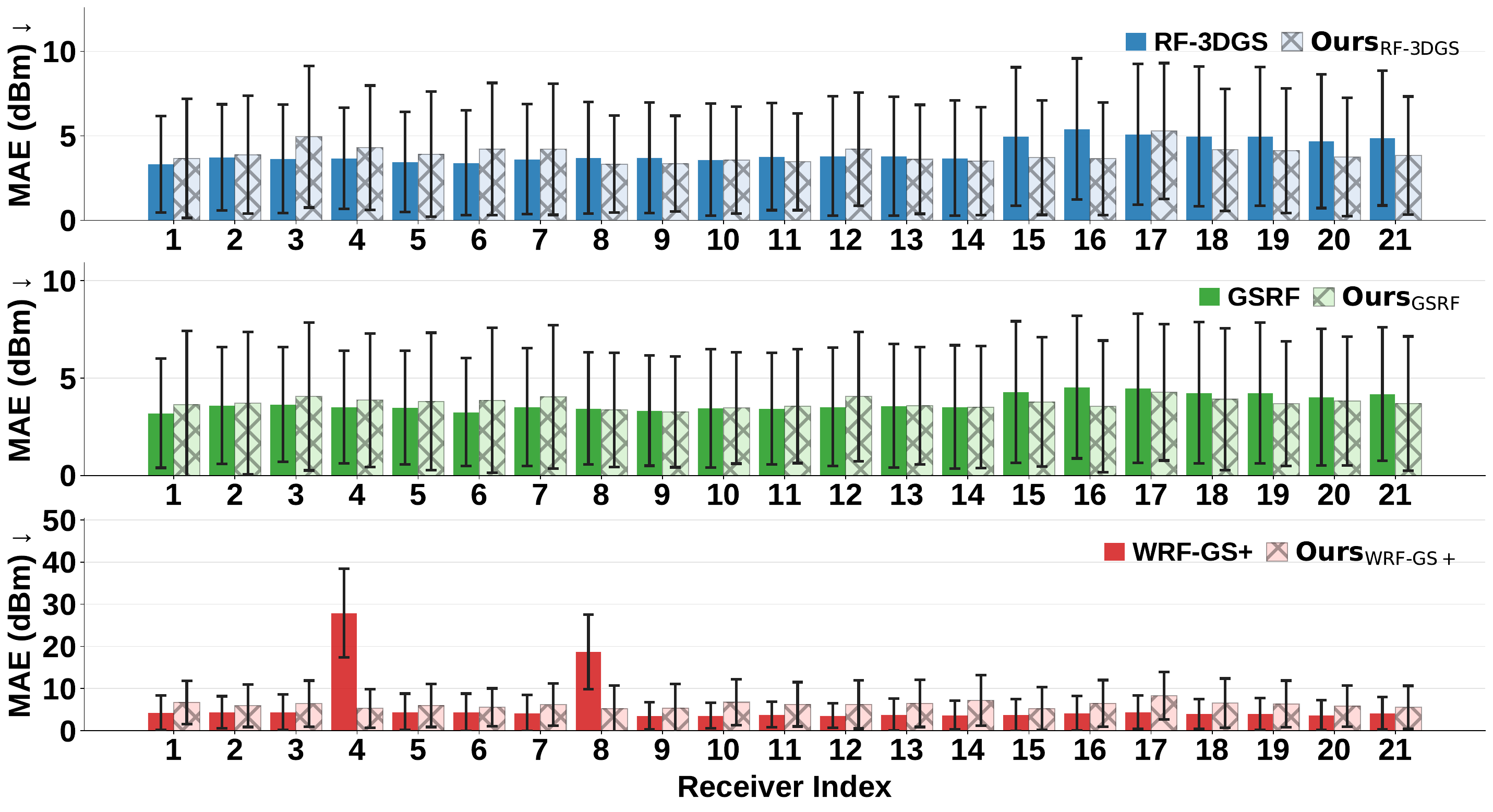}
\caption{Per-receiver~MAE breakdown on~BLE~RSSI for each baseline and its receiver-conditioned counterpart~(lower is better).
Adding our receiver conditioning produces more uniform performance across receivers and eliminates the catastrophic outliers seen in~WRF-GS+ at receivers~$4$ and~$8$.}
\label{fig:per_receiver_ble}
\end{figure*}

\textbf{Preprocessing.}
We apply five preprocessing steps to the raw data.
First, phase calibration removes the random carrier phase offset per antenna, ensuring coherent combination across elements.
Second, the~$8$ antennas within each group are coherently averaged after calibration, yielding one effective~RX per group and~$N_{\mathrm{RX}}=8$ distributed~RXs in total.
Third, delay-domain denoising truncates the channel impulse response to the first~$10$ of~$100$ taps via~IFFT, truncation, and~FFT, suppressing noise-dominated late arrivals that carry no useful propagation information.
Fourth, the~$100$ subcarriers are averaged into~$N_{\mathrm{sc}}=26$ frequency bins to reduce redundancy while preserving the frequency-selective fading structure.
Fifth, nearby~TX positions are spatially averaged into~$N_{\mathrm{TX}}=6{,}000$ output positions to reduce wavelength-scale phase variation and produce a spatially consistent dataset suitable for scene-level learning.

\textbf{Final Dataset.}
The resulting dataset contains~$N_{\mathrm{TX}} = 6{,}000$~TX positions and~$N_{\mathrm{RX}} = 8$ distributed~RX positions, yielding~$N_{\mathrm{TX}} \times N_{\mathrm{RX}} = 48{,}000$ complex~CSI vectors of dimension~$52$~($N_{\mathrm{sc}}=26$ subcarriers $\times$~$2$ for real and imaginary), normalized by the global maximum magnitude.
We use an~80/20 random split~($4{,}800$ train~/~$1{,}200$ test~TX positions).

\section{Additional Experiments}
\label{app_additional_experiments}

This appendix presents additional experiments that complement the main text.
Appendix~\ref{app_per_receiver} breaks down Table~\ref{tab:overall} into per-receiver results.
Appendix~\ref{app_csi_vis} visualizes per-subcarrier~CSI traces.
Appendix~\ref{app_computation_cost} expands Table~\ref{tab:cost} into per-backbone training, inference, and storage breakdowns.
Appendix~\ref{app_spectrum_vis} shows spectrum visualizations on seen and unseen receivers.
Appendix~\ref{app_scaling_rx} analyzes how unseen-receiver accuracy scales with training receivers.
Appendix~\ref{app_ref_sensitivity} measures sensitivity to the~Stage~I reference receiver.
Appendix~\ref{app_network_planning} reports a network planning study.

\subsection{Per-Receiver Performance Breakdown}
\label{app_per_receiver}

Table~\ref{tab:overall} reports metrics averaged across all receivers, which can hide substantial per-receiver variation.
Per-receiver baselines fit each receiver in isolation and, in the worst case, fail catastrophically on receivers with poorly modeled local geometry, with the failure masked by averaging.
We therefore break the results down receiver-by-receiver on all three datasets, pairing each baseline with its receiver-conditioned counterpart that augments the same backbone with our conditioning module as a single shared model.

\begin{figure*}[t]
\centering
\includegraphics[width=\textwidth]{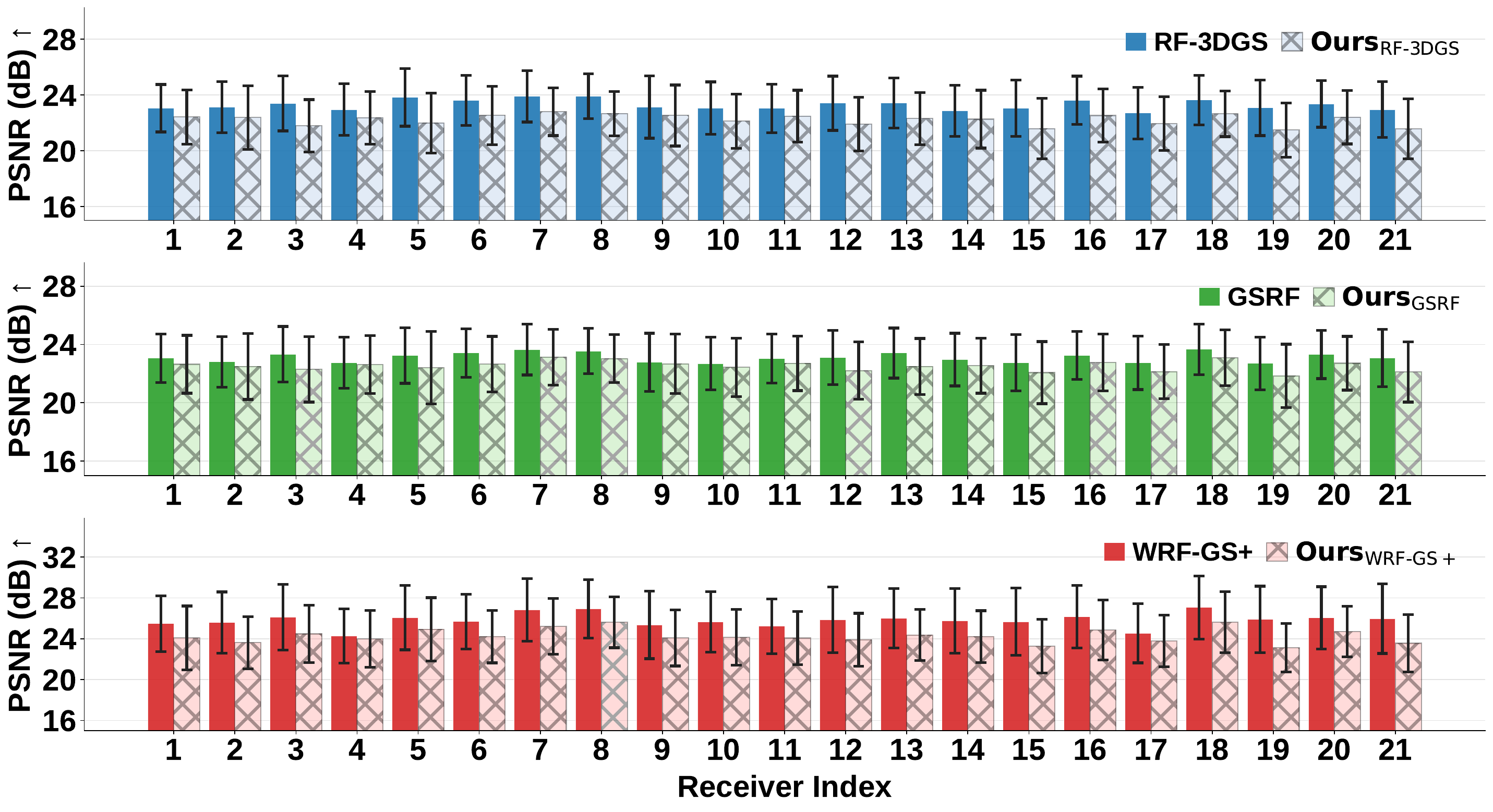}
\caption{Per-receiver~PSNR breakdown on~RFID spatial spectrum for each baseline and its receiver-conditioned counterpart~(higher is better).
Our conditioning preserves per-receiver quality across all receivers while allowing a single unified model to handle all~$21$ receivers.}
\label{fig:per_receiver_spectrum}
\end{figure*}

\begin{figure*}[t]
\centering
\includegraphics[width=\textwidth]{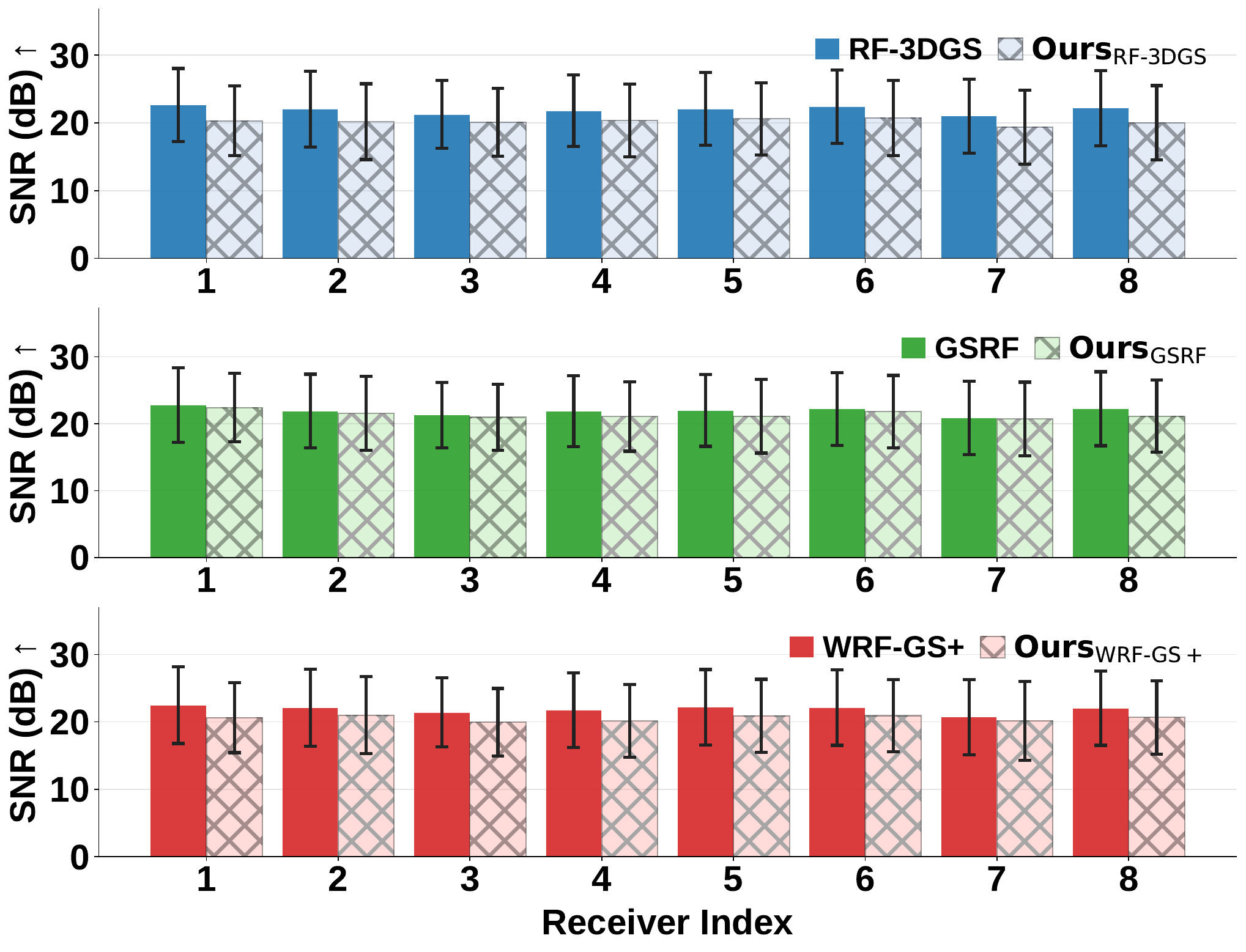}
\caption{Per-receiver~SNR breakdown on~WiFi~CSI for each baseline and its receiver-conditioned counterpart~(higher is better).
Our conditioning maintains comparable per-receiver fidelity across all~$8$ distributed receivers with a single shared model.}
\label{fig:per_receiver_csi}
\end{figure*}

Figures~\ref{fig:per_receiver_ble}--\ref{fig:per_receiver_csi} compare all receivers across three baseline backbones,~RF-3DGS~\cite{zhang2024rf3dgs},~GSRF~\cite{yang2025gsrf}, and~WRF-GS+~\cite{wen2024wrfgs}, and their unified counterparts on~BLE~RSSI,~RFID spatial spectrum, and~WiFi~CSI respectively.
Four patterns emerge.

\textbf{Uniformity across receivers.}
For all three backbones and all three datasets, the receiver-conditioned variant yields a flatter metric profile across the receiver index.
This matches the behavior of a model that learns a transferable receiver-conditioning function rather than memorizing per-receiver mappings.
Weakly-covered receivers benefit from data collected at others.

\textbf{Outlier removal.}
On~BLE~RSSI, per-receiver~WRF-GS+ shows two catastrophic outliers at receivers~$4$ with roughly~$28$\,dBm~MAE and~$8$ with roughly~$19$\,dBm~MAE, more than~$5\!\times$ its median error.
These spikes are precisely the failure mode that averaged metrics hide.
Our receiver-conditioned~WRF-GS+ eliminates both outliers, bringing those receivers in line with the rest of the receivers.
Forcing the model to share a single backbone and condition rendering on receiver position prevents the optimization from collapsing onto a degenerate per-receiver solution.

\textbf{No cost on strong receivers.}
On receivers where the per-receiver baseline already performs near the dataset median, the receiver-conditioned variant performs essentially identically.
This is most visible on~RFID spectrum and~WiFi~CSI, where the per-receiver baselines are already competitive and our conditioning tracks them closely across the full receiver index.
The conditioning module does not trade receiver-specific accuracy for cross-receiver generalization.
It lifts the worst-case receivers without degrading the best-case ones.

\textbf{Backbone-agnostic effect.}
The same pattern of flatter profile, eliminated outliers, and no regression on strong receivers holds across all three backbones and all three datasets.
The gains stem from the receiver-conditioning design itself rather than the choice of scene representation or signal modality.
Our module acts as a drop-in augmentation for existing per-receiver radio-field methods across scalar, spectrum, and complex-valued~CSI signals.

These observations corroborate the aggregated numbers in Table~\ref{tab:overall}.
The average-metric improvements come primarily from removing tail-end failures, and our receiver conditioning achieves this without sacrificing performance on the receivers where the backbones are already competitive.

\subsection{Qualitative CSI Trace Visualization}
\label{app_csi_vis}

\noindent\textbf{Setting.}
We complement the aggregate~SNR numbers in the main text with per-subcarrier~CSI traces produced by our model on the~WiFi~CSI dataset.
We pick three random transmitter~(TX) positions from the test set and plot their predicted and ground-truth~CSI vectors~$H_{\mathrm{Pred}},\, H_{\mathrm{GT}} \in \mathbb{C}^{26}$.
The top row of Figure~\ref{fig:csi_traces} shows amplitude~$\left|H\left(f\right)\right|$ normalized per panel by~$\max\left(\left|H_{\mathrm{GT}}\right|,\, \left|H_{\mathrm{Pred}}\right|\right)$.
The bottom row shows phase wrapped to~$\left[0,\, 2\pi\right)$.

\textbf{Results.}
Across all three~TX positions, the predicted amplitude closely follows the ground-truth frequency response.
Each panel reproduces the multipath-induced amplitude peaks and nulls at the correct subcarrier indices.
The overall envelope is preserved with only minor shifts of~$\pm 1$ to~$2$ subcarriers near sharp nulls.
The phase traces track the ground-truth dispersion equally well.
The~$2\pi$ wraparound visible in~TX1 occurs at matching subcarrier indices for prediction and reference.
No panel shows a systematic bias or catastrophic drift.
This indicates that~\ourSystem captures both the magnitude and phase structure of the channel, not only amplitude statistics.

\begin{figure*}[t]
\centering
\includegraphics[width=\textwidth]{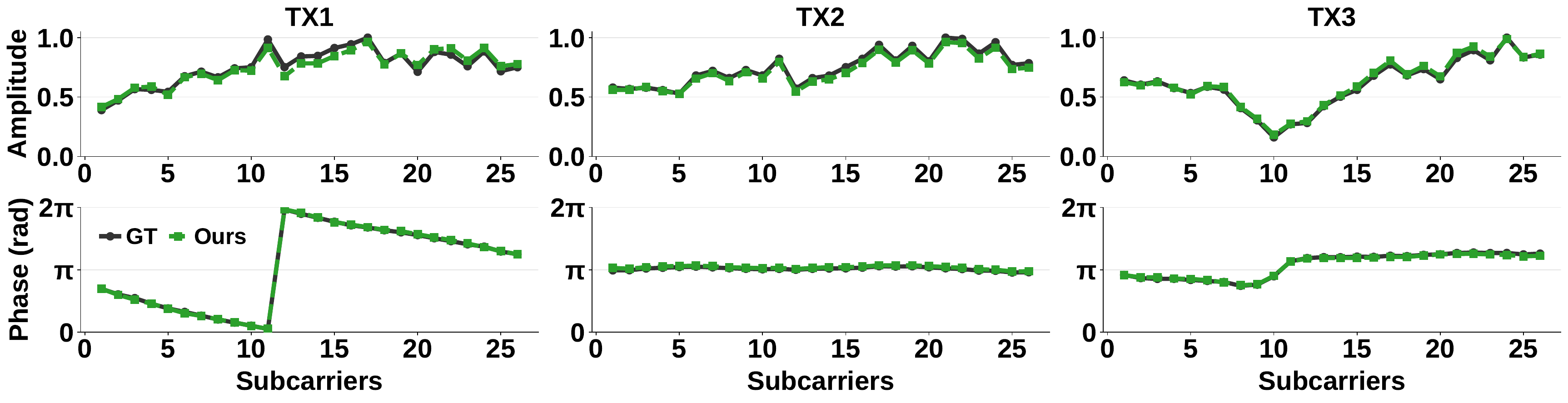}
\caption{Qualitative~CSI traces on the~WiFi~CSI dataset at three transmitter~(TX) positions.
Top row: amplitude~$\left|H\left(f\right)\right|$ normalized per panel.
Bottom row: phase wrapped to~$\left[0,\, 2\pi\right)$.
\ourSystem (green, dashed) tracks the ground truth (black, solid) in both amplitude and phase.}
\label{fig:csi_traces}
\end{figure*}

\subsection{Computational Cost: Detailed Breakdown}
\label{app_computation_cost}

We expand on Table~\ref{tab:cost} and explain why the gains differ across backbones.

\noindent\textbf{Implicit versus explicit scene representations.}
Table~\ref{tab:cost} contains an apparent paradox.
NeRF$^2$ stores only~$2.6$\,MB but takes~$14.3$\,h to train and~$2436$\,ms to render~$N = 21$ spectra.
GSRF~$+$~\ourSystem stores~$26$\,MB yet trains in~$0.6$\,h and renders in~$31$\,ms.
The two paradigms swap their cost profiles:
\begin{itemize}[label=\textbullet, leftmargin=2em]
\item \textbf{Implicit methods (NeRF$^2$, GRaF)} encode the entire scene as a continuous function of position and direction using a small~MLP.
Each transmitter-receiver sample casts one ray per pixel and integrates by querying the~MLP at~$\mathcal{O}\!\left(64\text{--}256\right)$ depth points, totaling roughly~$2$\,M~MLP passes for a single~$90 \times 360$ spectrum, with the same number of backward passes during training.
Storage is small (only~MLP weights), but compute scales as~$\mathcal{O}\!\left(\text{rays} \cdot \text{depth\_samples} \cdot \text{MLP\_FLOPs}\right)$ and grows linearly with output resolution.
\item \textbf{Explicit methods (RF-3DGS, GSRF, WRF-GS+)} parameterize the scene as~$10$k to~$40$k Gaussian primitives, each carrying position, scale, rotation, opacity, and a directional coefficient vector with~$L = 16$ to~$100$ entries.
Storage scales as~$\mathcal{O}\!\left(K \cdot L\right)$ and dominates total bytes, but the~CUDA rasterizer projects all Gaussians once, sorts them per tile, and runs a single per-pixel alpha-blend pass with~$\mathcal{O}\!\left(\text{Gaussians overlapping the tile}\right)$ work, typically tens of~Gaussians per tile.
The pipeline is memory-bandwidth bound and parallelizes well on a~GPU, so wall-clock cost stays in the~$10$ to~$40$\,ms range despite the much larger parameter count.
\end{itemize}
Implicit methods therefore compress storage but pay heavily in compute per sample, while explicit methods carry their compute budget in memory rather than time.

\noindent\textbf{Training.}
\ourSystem totals~$90$k iterations per backbone:~$30$k for~Stage~I geometry pre-training and~$60$k for~Stage~II multi-receiver fitting.
The per-receiver baselines instead run~$N$ independent fits of~$T_{\text{base}}$ iterations each, with~$T_{\text{base}} = 200$k for~WRF-GS+ and~$30$k for~GSRF and~RF-3DGS, so the expected speedup is~$N \cdot T_{\text{base}} / 90\text{k}$ scaled by per-iteration cost differences:
\begin{itemize}[label=\textbullet, leftmargin=2em]
\item \textbf{RF-3DGS:} $2.8 \to 0.4$\,h, a~$7\times$ speedup that matches the raw iteration ratio of~$21 \cdot 30\text{k} / 90\text{k} = 7\times$ almost exactly, since~Stage~II of~\ourSystem runs at roughly the same per-iteration cost as the per-receiver path.
\item \textbf{GSRF:} $5.5 \to 0.6$\,h, a~$9\times$ speedup against the same~$7\times$ raw ratio.
The extra factor comes from~Stage~II running per-iteration slightly faster than per-receiver~GSRF, which we attribute to amortizing~FLE-basis evaluation across the batched receiver dimension.
\item \textbf{WRF-GS+:} $13.6 \to 0.3$\,h, a~$45\times$ speedup against a raw iteration ratio of~$21 \cdot 200\text{k} / 90\text{k} \approx 47\times$.
The roughly~$4\%$ shortfall reflects per-iteration overhead in the multi-receiver path, and the headline gain comes from collapsing~WRF-GS+'s~$200$k baseline budget into~\ourSystem's fixed~$90$k rather than from any~\ourSystem-specific advantage.
\end{itemize}
NeRF$^2$ and~GRaF are already shared models with no~$N$-way training, so their~$14$ to~$16$\,h cost reflects roughly~$600$k iterations of volumetric sampling and backprop end-to-end, consistent with the implicit paradigm above.

\noindent\textbf{Inference.}
For inference, our multi-receiver rasterizer renders all~$N$ receivers in one~CUDA call~(Appendix~\ref{app_multirx_impl}).
The per-method speedup over the per-receiver baseline depends on how much of the original cost was sharable versus per-pixel work:
\begin{itemize}[label=\textbullet, leftmargin=2em]
\item \textbf{GSRF:} $238 \to 31$\,ms, a~$7.6\times$ speedup that is the largest of the three because~FLE-degree~$9$ with~$L = 100$ components per Gaussian makes basis evaluation the dominant per-receiver cost.
Our batched~\texttt{eval\_fle} now performs this evaluation once per transmitter and reuses it across all~$N$ receivers, eliminating the most expensive sharable component.
\item \textbf{RF-3DGS:} $156 \to 35$\,ms, a~$4.4\times$ speedup.
SH-degree~$4$ with~$L = 25$ has a much smaller basis-evaluation footprint, so the savings come mostly from sharing the preprocess and tile sort rather than the per-pixel kernel itself.
\item \textbf{WRF-GS+:} $52 \to 43$\,ms, only a~$1.2\times$ speedup.
WRF-GS+ uses the standard~\texttt{diff\_gaussian\_rasterization} submodule, for which we have not yet implemented a multi-receiver~CUDA kernel, so we amortize only the receiver-independent~DeformModel step and loop the rasterizer~$N$ times in~Python.
A native multi-receiver fork would be expected to bring this in line with the~$4$ to~$7\times$ range observed for the~FLE/SH-based backbones; we leave that engineering effort to future work.
\end{itemize}
Two artifacts of these dynamics are worth flagging.
First,~GSRF~$+$~\ourSystem ends up faster than~RF-3DGS~$+$~\ourSystem at~$31$ versus~$35$\,ms, despite per-receiver~GSRF being slower than per-receiver~RF-3DGS at~$11.3$ versus~$7.45$\,ms.
Once basis evaluation is amortized, the rasterizer dominates, and rasterizer cost scales with the number of~Gaussians, of which~RF-3DGS has more at~$37.7$k versus~$31.8$k.
Second, NeRF$^2$ and~GRaF have no batchable structure across receivers, so per-transmitter cost stays at~$N \times$ per-sample, yielding the~$60$ to~$1900\times$ slowdowns reported in the table.

\noindent\textbf{Storage.}
Per-receiver baselines store~$N$ independent Gaussian-state checkpoints while~\ourSystem stores one, so the deployable footprint compresses by roughly~$N$:
\begin{itemize}[label=\textbullet, leftmargin=2em]
\item \textbf{GSRF:} $548 \to 26$\,MB, a~$21\times$ reduction.
The absolute~\ourSystem size is the largest of the three because~GSRF carries~$K = 32$k Gaussians at~$L = 100$~FLE components, and storage scales as~$K \cdot L \cdot C \cdot 4$\,bytes plus roughly~$10K$ bytes for geometry attributes.
\item \textbf{RF-3DGS:} $182 \to 8.8$\,MB, a~$21\times$ reduction.
The intermediate size reflects~$K = 38$k Gaussians at the lower~$L = 25$, which dominates total bytes despite having more primitives than~GSRF.
\item \textbf{WRF-GS+:} $75.6 \to 3.2$\,MB, a~$24\times$ reduction.
The smallest absolute size comes from~$K = 14$k at~$L = 16$, plus a~$2.2$\,MB~DeformModel; the slightly higher reduction ratio reflects the same~$N$ collapse at smaller absolute scale.
\end{itemize}
The~$3.2 \to 8.8 \to 26$\,MB ordering of the~\ourSystem rows therefore tracks~$K \cdot L$ rather than the per-receiver baseline ranking.
NeRF$^2$ and~GRaF are shared models that do not multiply with~$N$ but pay this saving back in inference compute.
Their absolute sizes differ substantially~($2.6$\,MB for~NeRF$^2$ versus~$143.7$\,MB for~GRaF) because GRaF additionally stores reference spectra from training receivers.

\noindent\textbf{Asymptotic scaling.}
Per-receiver baselines incur~$\Theta\!\left(N \cdot T_{\text{base}}\right)$ training time, $\Theta\!\left(N \cdot t_{\text{render}}\right)$ inference per transmitter, and~$\Theta\!\left(N \cdot s_{\text{model}}\right)$ storage, while~\ourSystem replaces these with~$\Theta\!\left(T_{\text{ours}}\right)$, $\Theta\!\left(t_{\text{render}}\right)$, and~$\Theta\!\left(s_{\text{model}}\right)$ respectively, all independent of~$N$ except for the one Python-side render loop in~WRF-GS+.
Concretely, doubling~$N$ from~$21$ to~$42$ leaves~\ourSystem's training time, storage, and (for~FLE/SH backbones) inference time essentially unchanged while every per-receiver baseline doubles, and the gap therefore widens as~$N$ grows.
This property is especially relevant for spatially dense receiver layouts such as~BLE coverage maps and~WiFi mesh deployments, where~$N$ can reach hundreds.

\noindent\textbf{Why not condition a scene~MLP on receiver position?}
A natural alternative to~RxGS is to feed receiver position directly into a~NeRF-style scene~MLP and re-query it for every transmitter and receiver, as~NeRF$^2$~\cite{zhao2023nerf2} does for the receiver and~GRaF~\cite{yang2025generalizable} does for both.
This approach pays the per-ray volumetric-sampling and~MLP-evaluation cost at every query and inherits the~$\mathcal{O}\!\left(N \cdot t_{\text{render}}\right)$ inference scaling that motivated the move to~3DGS in the first place; Table~\ref{tab:cost} shows that~NeRF$^2$ and~GRaF are~$60$ to~$1900\,\times$ slower than~\ourSystem at~$N=21$, and the gap widens with~$N$.
\ourSystem instead retains the explicit~3DGS scene representation and introduces only lightweight receiver-conditioned modulation around frozen geometry, keeping per-Gaussian inference at~$\mathcal{O}\!\left(L + K\right)$ and per-receiver rendering at amortized~$\mathcal{O}\!\left(t_{\text{render}}\right)$ via the multi-receiver rasterizer in Appendix~\ref{app_multirx_impl}.
The two-stage decomposition further allows the geometry to be trained once and reused, while only the radiance modules are receiver-aware.

\begin{figure*}[t]
\centering
{\includegraphics[width=\textwidth]{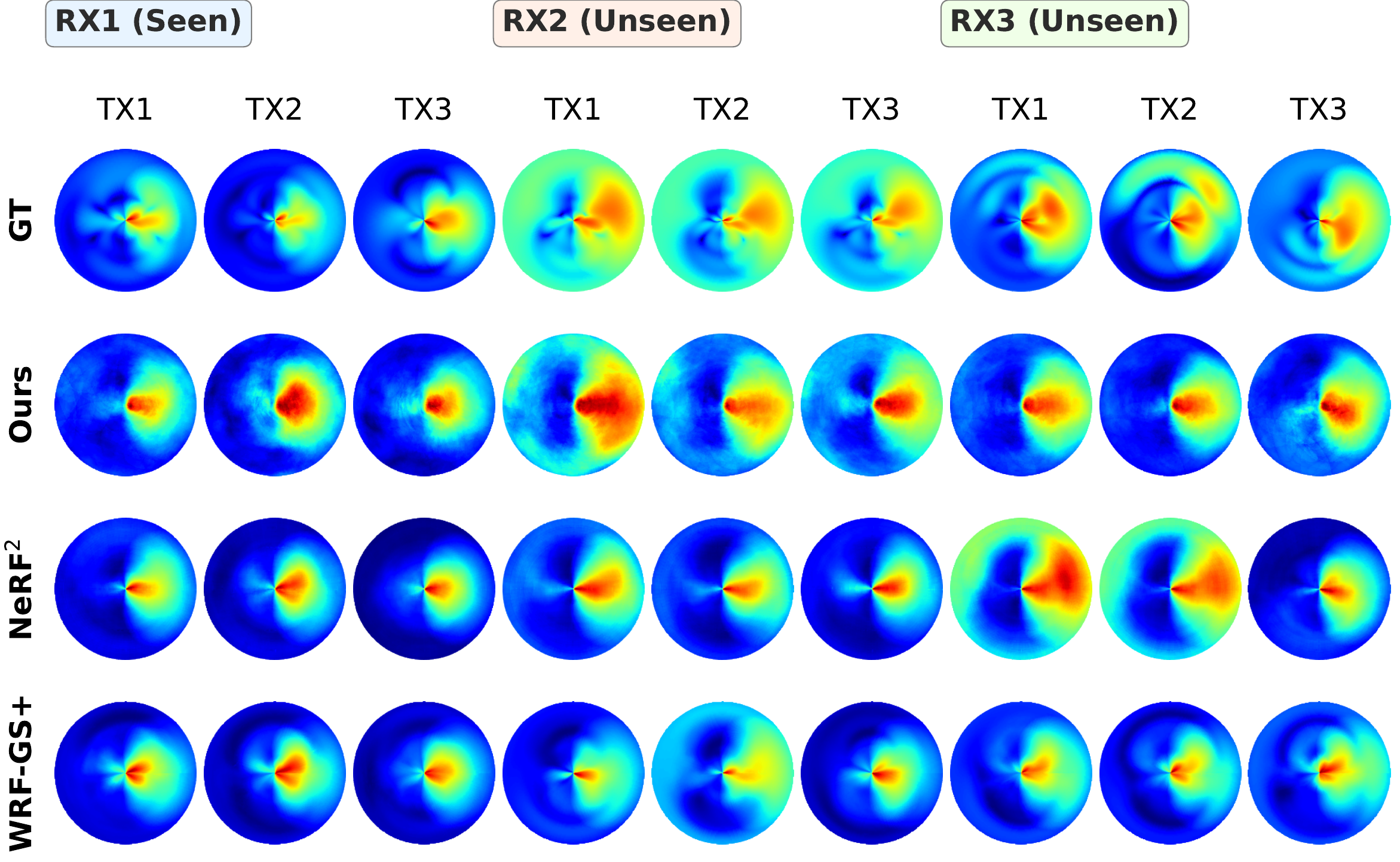}}
\caption{Qualitative comparison of~RFID spatial spectrum prediction.
\textit{RX1} is seen during training,~\textit{RX2} and~\textit{RX3} are held out.
\ourSystem{} produces the closest match to ground truth on unseen receivers.
{NeRF}$^2$ predictions are overly smooth, missing sharp directional peaks.
WRF-GS+ cannot generalize, so its fallback to the closest seen-receiver checkpoint produces degraded predictions.}
\label{fig:spectrum_vis}
\end{figure*}

\subsection{Qualitative Spectrum Visualization}
\label{app_spectrum_vis}

Figure~\ref{fig:spectrum_vis} visualizes spectrum predictions on one seen receiver~(\textit{RX1}) and two held-out receivers~(\textit{RX2},~\textit{RX3}), each evaluated at three test transmitters~(\textit{TX1},~\textit{TX2},~\textit{TX3}).
For visual clarity we show one representative per design family:~\ourSystem{} for~Gaussian splatting with receiver-position conditioning,~{NeRF}$^2$ for continuous neural fields with receiver conditioning, and~WRF-GS+ for per-receiver~Gaussian splatting without conditioning.
GSRF and~RF-3DGS share the same per-receiver structure as~WRF-GS+ and produce indistinguishable failure modes on unseen receivers, so we omit them to keep the figure readable.

All methods recover the dominant structure on the seen receiver~\textit{RX1}.
The gap opens on the held-out receivers~\textit{RX2} and~\textit{RX3}, where the three methods diverge sharply.

\textbf{\ourSystem{} Preserves Sharp Directional Peaks.}
On~\textit{RX2} and~\textit{RX3},~\ourSystem{} reproduces the narrow, bright lobes of the ground truth~(\eg~\textit{RX2--TX1},~\textit{RX2--TX2},~\textit{RX3--TX1}) at the correct azimuth and elevation.
The receiver-conditioned radiance transfers the learned directional basis to unobserved gateway positions.

\textbf{{NeRF}$^2$ Over-Smooths.}
{NeRF}$^2$ predictions on unseen receivers are plausible but visibly blurred.
Peaks smear into broader, lower-amplitude regions, and fine directional structure is lost.
The continuous~MLP interpolates between training receivers but cannot reproduce the high-frequency lobes that distinguish one receiver's view from another.

\textbf{WRF-GS+ Cannot Generalize.}
WRF-GS+ trains one model per gateway with no query mechanism at a new receiver, so at inference we fall back to the nearest seen-receiver checkpoint.
The~\textit{RX2} and~\textit{RX3} predictions resemble a perturbed version of a different receiver's spectrum.
The coarse shape is sometimes close, but peak positions and amplitudes are systematically wrong.

\subsection{Scaling with Number of Receivers}
\label{app_scaling_rx}

\textbf{Setting.}
We measure how generalization to unseen receivers depends on receiver diversity seen during training.
For each~$N \in \{3, 6, 9, 12, 16, 21\}$, we train a single unified~\ourSystem model on~$N$ training receivers and evaluate on all~$21$ receivers, separating results into \emph{seen} and \emph{unseen} groups.
To control subset-selection variance, we draw~$7$ random training subsets at each~$N$ and report mean and standard deviation.
The~$N{=}21$ point serves as an upper bound where all receivers are seen.
Architecture, hyperparameters, and transmitter train/test split are held fixed across all runs, so the only varying factor is the training receiver count.

\textbf{Seen-Receiver Accuracy Is Flat in~$N$.}
The seen-receiver curves in Figure~\ref{fig_scaling_rx} stay flat as the training-receiver count grows.
On~BLE, seen~MAE varies only between~$3.55$ and~$3.80$\,dBm despite the training set growing~$7\!\times$.
On spectrum, seen~PSNR varies between~$22.58$ and~$23.22$\,dB over the same sweep.
This stability shows that~\ourSystem's shared geometry and receiver-conditioning fit each receiver to comparable quality regardless of how many others it jointly models.
Joint training incurs no per-receiver accuracy penalty.

\begin{figure*}[t]
\centering
\subfigure[BLE~RSSI]{
\label{fig_scaling_rx_a}
\includegraphics[width=.48\linewidth]{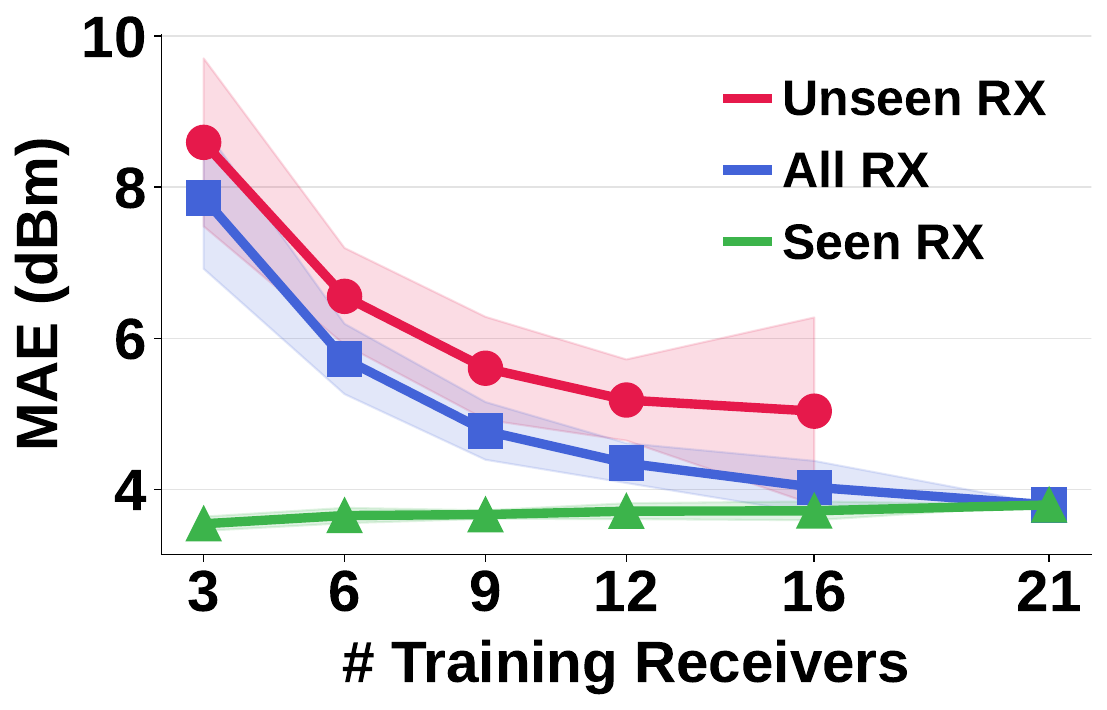}}
\subfigure[RFID~Spectrum]{
\label{fig_scaling_rx_b}
\includegraphics[width=.48\linewidth]{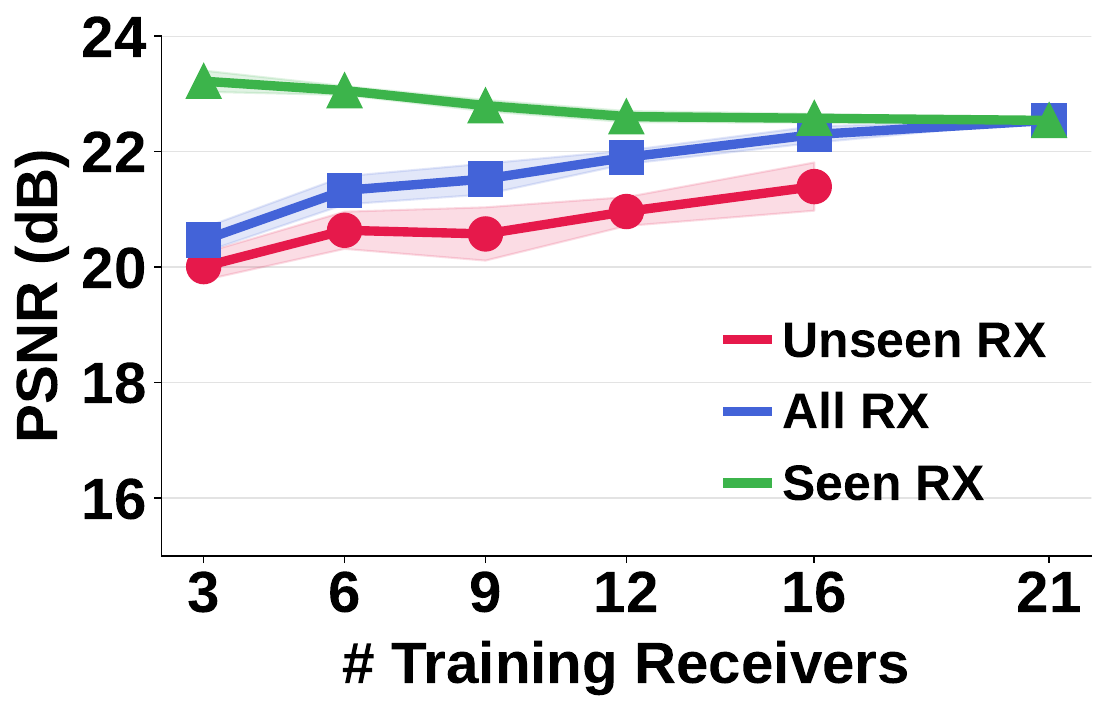}}
\caption{Scaling with the number of training receivers.
Seen accuracy stays flat while unseen accuracy improves with~$N$.}
\label{fig_scaling_rx}
\end{figure*}

\textbf{Unseen-Receiver Accuracy Improves with~$N$.}
The unseen-receiver curves in Figure~\ref{fig_scaling_rx} drop substantially as~$N$ increases.
On~BLE, unseen~MAE drops from~$8.59$\,dBm at~$N{=}3$ to~$5.04$\,dBm at~$N{=}16$, a~$41\%$ relative reduction.
The seen-to-unseen gap shrinks from~$5.04$ to~$1.32$\,dBm over the same range.
On spectrum, unseen~PSNR climbs from~$20.01$\,dB at~$N{=}3$ to~$21.39$\,dB at~$N{=}16$, narrowing the gap from~$3.21$ to~$1.19$\,dB.
Both curves converge to the seen-receiver value as~$N\!\to\!21$.

\textbf{Implication.}
The combined pattern of flat seen accuracy and improving unseen accuracy indicates that the conditioning module learns a transferable, position-aware function of the receiver rather than memorizing per-receiver corrections.
A small number of training receivers leaves the conditioning manifold too sparse to disambiguate receiver-driven variation from scene-driven structure, producing large extrapolation errors.
Adding receivers densifies this manifold, enabling accurate interpolation to unseen positions.
Diminishing returns beyond~$N{=}12$ indicate that around a dozen training receivers suffice to cover the relevant variation in these indoor scenes.
For practical deployment, this means one can train on a modest number of representative anchor receivers rather than collecting calibration data at every possible position.

\subsection{Sensitivity to Reference Receiver Choice}
\label{app_ref_sensitivity}

\textbf{Setting.}
Stage~I requires a reference receiver to learn the scene geometry.
A practitioner deploying~\ourSystem must pick which receiver to use for this purpose, so we evaluate how much this choice affects final performance.
We compare~$21$ single-receiver choices (each using one gateway) against an averaged-signal initialization that pools measurements across all receivers, training the full pipeline on the~BLE dataset under each choice.
Stage~II is trained across all~$21$ receivers with the same hyperparameters in every run, and we report per-receiver~MAE averaged across all~$21$ test gateways.

\textbf{Results.}
Figure~\ref{fig:ref_sensitivity} reports the resulting per-receiver~MAE.
The single-reference variants yield~$3.94 \pm 0.13$\,dBm with a tight range of~$\left[3.67, 4.22\right]$\,dBm, all within~$0.5$\,dBm of the best single-reference choice.
This narrow spread shows that any individual gateway is sufficient to learn the receiver-independent geometry, consistent with the geometry-radiance decomposition proven in Appendix~\ref{app_decomposition}.
The averaged-signal initialization performs slightly better at~$3.57$\,dBm, since pooling across receivers averages out per-receiver measurement noise and yields a cleaner geometry-learning signal.
We therefore adopt the averaged-signal initialization as the default for all main-paper experiments.
In practice, however, either choice would yield comparable performance, and a practitioner can use whichever is most convenient for their deployment.

\begin{figure*}[t]
\centering
{\includegraphics[width=\textwidth]{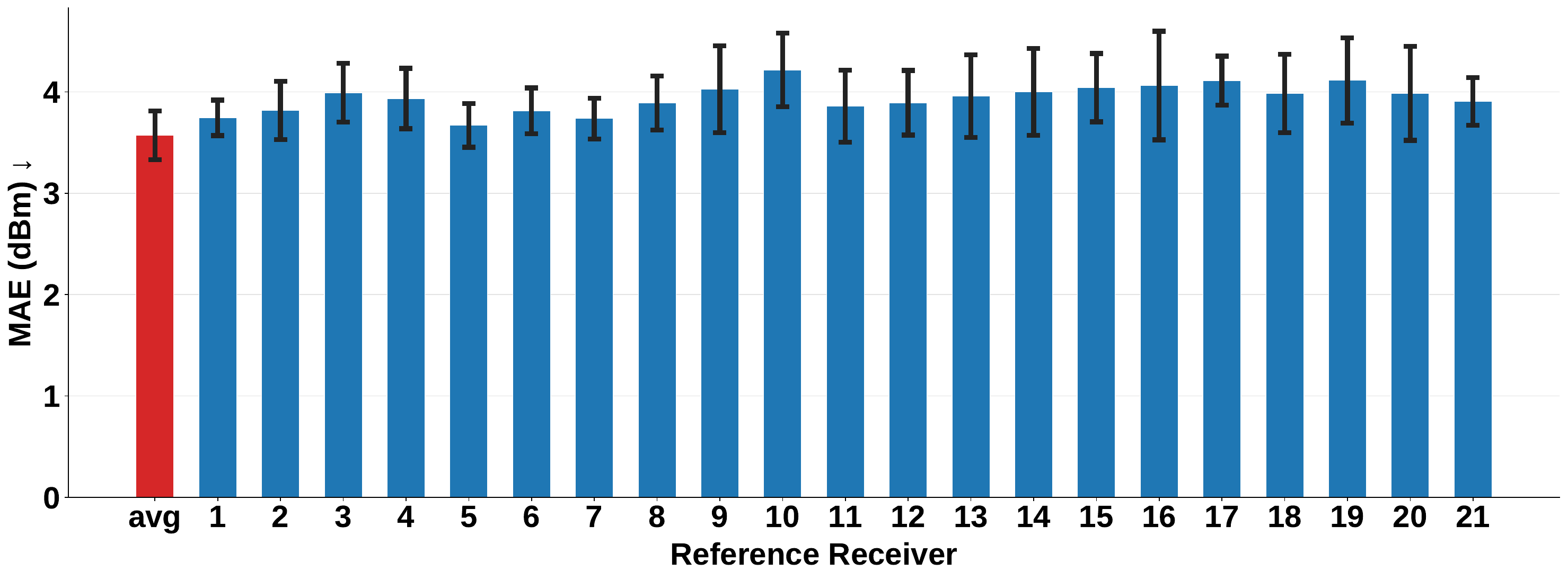}}
\caption{Sensitivity of~\ourSystem to the~Stage~I reference receiver choice on~BLE.
Each of the first~$21$ bars corresponds to using one gateway as the reference, and the~\texttt{avg} bar uses the averaged signal across all receivers.
All single-reference variants fall within~$0.5$\,dBm of the best, and the averaged-signal initialization achieves the lowest per-receiver~MAE.}
\label{fig:ref_sensitivity}
\end{figure*}

\begin{figure*}[t]
\centering
\includegraphics[width=\textwidth]{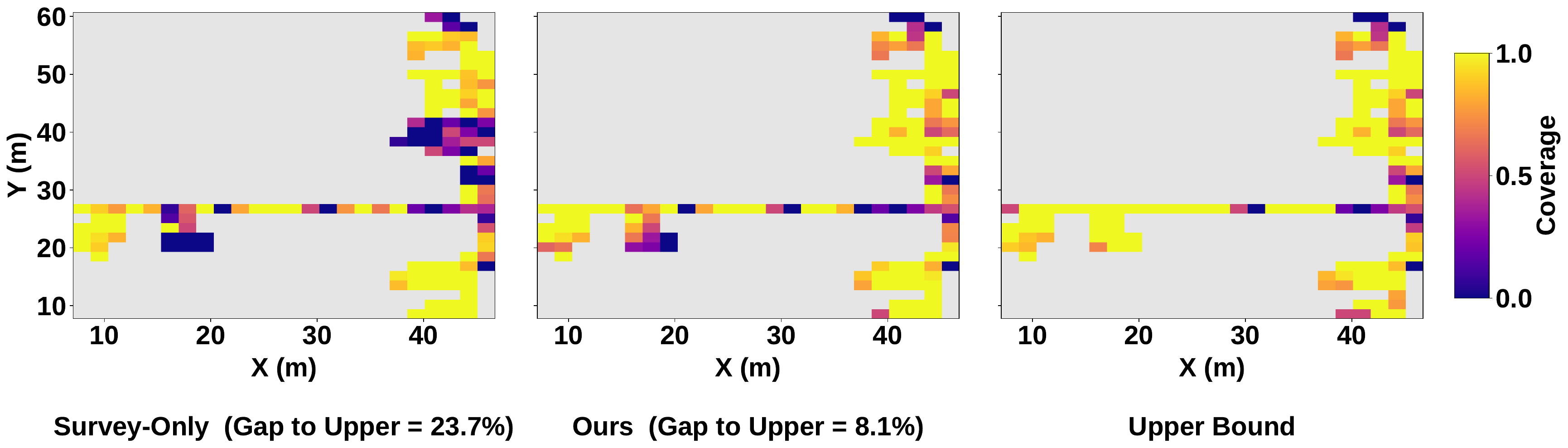}
\caption{Coverage heatmaps for~$K = 5$ access-point deployment on the~BLE dataset.
Each cell shows the fraction of test transmitters in a~$\sim 1.6 \times 1.6$\,m region whose strongest-AP~RSSI exceeds~$-80$\,dBm; gray cells contain no measurements.
\textsc{Survey-Only} deploys at~$5$ surveyed gateways, \textsc{Ours} uses~\ourSystem predictions to select~$5$ from all~$21$ candidates, and~\textsc{Upper Bound} selects the optimal~$5$ from full-survey~RSSI.
\textsc{Ours} cuts the gap-to-Upper from~$24.4\%$ to~$8.6\%$.}
\label{fig:network_planning}
\end{figure*}

\subsection{Application Study: Network Planning}
\label{app_network_planning}

\noindent\textbf{Setting.}
A common task for a building owner is to choose where to install a small set of access points so that wireless coverage is maximized across all client positions.
Without an accurate channel model, the planner is restricted to deploying access points at positions already surveyed.
With our model, the planner can also score unmeasured candidate positions through prediction and select a more effective subset.

We use the~BLE~RSSI dataset, which contains~$21$ ground-truth gateways, and adopt a~$3$-fold split into~$14$ trained and~$7$ held-out gateways (the same protocol as~\S\ref{sec_generalization}).
For each fold we randomly sample~$K = 5$ surveyed gateways from the~$14$ trained ones and compare three deployment strategies.
\textsc{Survey-Only} deploys at the~$5$ surveyed positions with no model and no choice.
\textsc{Ours} runs greedy maximum-coverage selection over all~$21$ candidate positions, using real~RSSI for the~$5$ surveyed gateways and~\ourSystem predictions for the other~$16$.
\textsc{Upper Bound} runs the same greedy selection but uses real~RSSI at all~$21$ gateways, giving a full-survey oracle.
Once a deployment is chosen, we evaluate it on the held-out test transmitters by computing each transmitter's strongest real~RSSI among the~$5$ deployed access points, which is the standard max-power association in wireless networks.
Figure~\ref{fig:network_planning} renders one representative fold as a~$24 \times 32$ grid heatmap with cell size of about~$1.6 \times 1.6$\,m.
Each cell is colored by the fraction of transmitters whose strongest-AP~RSSI exceeds~$-80$\,dBm, and cells with no test transmitters are shown in gray.

\noindent\textbf{Results.}
The three panels share the same building footprint and the same evaluation transmitters, yet the resulting coverage patterns differ markedly.
\textsc{Survey-Only} leaves a cluster of cold (dark purple) cells along the upper part of the corridor where none of the five randomly surveyed access points are positioned to serve those transmitters.
This is an outcome of placing access points only where the surveyor happened to walk.
\textsc{Ours} closes most of these dead zones by identifying high-coverage candidate positions among the~$16$ unmeasured gateways and substituting them for suboptimal surveyed positions.
The remaining low-coverage cells in~\textsc{Ours} closely mirror those of~\textsc{Upper Bound}, suggesting that the residual gap reflects intrinsic propagation limits at those points rather than incorrect placement.

We quantify similarity to the oracle by the average per-cell absolute coverage difference, denoted gap-to-Upper.
\textsc{Survey-Only} differs by~$24.4\%$, while~\textsc{Ours} differs by only~$8.6\%$, a roughly~$3\times$ reduction in deviation from the optimal deployment.
Aggregating over the~$3$ folds, \textsc{Ours} raises the share of transmitters covered above~$-80$\,dBm from~$87.9\%$ under~\textsc{Survey-Only} to~$92.6\%$, recovering about~$60\%$ of the gap to the full-survey upper bound of~$95.8\%$.
An accurate spatial channel model converts a fixed survey budget into a near-oracle deployment design without any additional measurement effort.

\section{Four Evaluation Configurations}\label{app_generalization_configs}

RF data synthesis is queried along two axes, transmitter and receiver, so evaluation can be characterized by which axis carries unseen test points.
Table~\ref{tab:gen_configs} enumerates the four possibilities.

\begin{table}[h]
\centering
\caption{Four evaluation configurations.}
\label{tab:gen_configs}
\begin{tabular}{l|l|l}
\toprule
& \textbf{Seen Transmitter} & \textbf{Unseen Transmitter} \\
\midrule
\textbf{Seen Receiver} & (1) training fit & (2) novel transmitter \\
\midrule
\textbf{Unseen Receiver} & (3) novel receiver & (4) novel transmitter~$\&$ receiver \\
\bottomrule
\end{tabular}
\end{table}

Prior~RF rendering work reports cell~(2), where predictions are made at held-out transmitters for a receiver seen during training~\cite{zhao2023nerf2, yang2025generalizable, zhang2024rf3dgs, yang2025gsrf, wen2024wrfgs}.
This is the natural test for per-receiver methods, since the receiver is fixed by the training setup and the only available query axis is the transmitter.
We follow this convention in~\S\ref{sec_overall} for direct comparison, but with one unified model for all~$N$ receivers, saving training, inference, and storage cost (Table~\ref{tab:cost}).

Cell~(4) is the configuration that motivates~\ourSystem, predicting at held-out transmitters for a receiver also unseen during training.
This matches deployment, where a practitioner installing a new gateway wants predictions without collecting fresh measurements there.
We report cell~(4) in~\S\ref{sec_generalization}.
Cell~(4) subsumes cell~(3), since reusing training transmitters at an unseen receiver is a sub-case of generalizing along both axes.
Cell~(1) reports training-time fit and is bounded by cell~(2).

\section{Broader Impacts}
\label{app_broader_impacts}

\ourSystem reduces the cost of collecting~RF data for wireless networking and sensing.
Site surveys at thousands of transmitter--receiver positions are labor-intensive and time-consuming.
Synthesizing~RF data at unsurveyed receiver positions, as demonstrated in~\S\ref{sec_application}, lowers the barrier to deploying wireless networks, indoor localization systems, and sensing applications, particularly in under-resourced settings such as rural clinics, small businesses, and emerging markets.
Accurate coverage prediction also improves spectrum efficiency, which reduces the energy footprint of wireless infrastructure.

Beyond traditional wireless deployment, receiver-generalizable~RF synthesis enables a new class of applications in embodied~AI and robotics.
Mobile agents such as household robots, warehouse vehicles, and drones carry receivers whose positions change continuously as they navigate.
Prior per-receiver methods cannot serve this regime because retraining at every new receiver position is infeasible.
A single~\ourSystem model queried at arbitrary receiver positions provides on-the-fly~RF predictions that can support signal-aware path planning, connectivity-preserving exploration, and~RF-based perception in~GPS-denied indoor environments.
Simulation platforms for embodied~AI can likewise use~\ourSystem to generate large-scale synthetic~RF data for training policies that integrate wireless observations with visual and inertial sensing.

The same capability could be misused for passive localization or tracking without consent if an attacker combines synthesized fingerprints with captured signals.
This risk is not unique to~\ourSystem and applies to any~RF fingerprinting method, but practitioners deploying fingerprint databases should follow standard privacy safeguards such as on-device inference, aggregation, and informed consent.


\newpage
\section*{NeurIPS Paper Checklist}

The checklist is designed to encourage best practices for responsible machine learning research, addressing issues of reproducibility, transparency, research ethics, and societal impact. Do not remove the checklist: {\bf The papers not including the checklist will be desk rejected.} The checklist should follow the references and follow the (optional) supplemental material.  The checklist does NOT count towards the page
limit. 

Please read the checklist guidelines carefully for information on how to answer these questions. For each question in the checklist:
\begin{itemize}
    \item You should answer \answerYes{}, \answerNo{}, or \answerNA{}.
    \item \answerNA{} means either that the question is Not Applicable for that particular paper or the relevant information is Not Available.
    \item Please provide a short (1--2 sentence) justification right after your answer (even for \answerNA). 
\end{itemize}

{\bf The checklist answers are an integral part of your paper submission.} They are visible to the reviewers, area chairs, senior area chairs, and ethics reviewers. You will also be asked to include it (after eventual revisions) with the final version of your paper, and its final version will be published with the paper.

The reviewers of your paper will be asked to use the checklist as one of the factors in their evaluation. While \answerYes{} is generally preferable to \answerNo{}, it is perfectly acceptable to answer \answerNo{} provided a proper justification is given (e.g., error bars are not reported because it would be too computationally expensive'' or ``we were unable to find the license for the dataset we used''). In general, answering \answerNo{} or \answerNA{} is not grounds for rejection. While the questions are phrased in a binary way, we acknowledge that the true answer is often more nuanced, so please just use your best judgment and write a justification to elaborate. All supporting evidence can appear either in the main paper or the supplemental material, provided in appendix. If you answer \answerYes{} to a question, in the justification please point to the section(s) where related material for the question can be found.

IMPORTANT, please:
\begin{itemize}
    \item {\bf Delete this instruction block, but keep the section heading ``NeurIPS Paper Checklist"},
    \item  {\bf Keep the checklist subsection headings, questions/answers and guidelines below.}
    \item {\bf Do not modify the questions and only use the provided macros for your answers}.
\end{itemize}


\begin{enumerate}

\item {\bf Claims}
    \item[] Question: Do the main claims made in the abstract and introduction accurately reflect the paper's contributions and scope?
    \item[] Answer: \answerYes{}
    \item[] Justification: The abstract and introduction reflect the main contributions of~\ourSystem, including receiver-generalizable RF data synthesis within a single model and a radiance-representation-agnostic conditioning design.
    \item[] Guidelines:
    \begin{itemize}
        \item The answer \answerNA{} means that the abstract and introduction do not include the claims made in the paper.
        \item The abstract and/or introduction should clearly state the claims made, including the contributions made in the paper and important assumptions and limitations. A \answerNo{} or \answerNA{} answer to this question will not be perceived well by the reviewers. 
        \item The claims made should match theoretical and experimental results, and reflect how much the results can be expected to generalize to other settings. 
        \item It is fine to include aspirational goals as motivation as long as it is clear that these goals are not attained by the paper. 
    \end{itemize}

\item {\bf Limitations}
    \item[] Question: Does the paper discuss the limitations of the work performed by the authors?
    \item[] Answer: \answerYes{}
    \item[] Justification: Section~\ref{sec_conclusion} discusses the limitations of~\ourSystem, including the static-scene assumption and the lack of cross-scene generalization.
    \item[] Guidelines:
    \begin{itemize}
        \item The answer \answerNA{} means that the paper has no limitation while the answer \answerNo{} means that the paper has limitations, but those are not discussed in the paper. 
        \item The authors are encouraged to create a separate ``Limitations'' section in their paper.
        \item The paper should point out any strong assumptions and how robust the results are to violations of these assumptions (e.g., independence assumptions, noiseless settings, model well-specification, asymptotic approximations only holding locally). The authors should reflect on how these assumptions might be violated in practice and what the implications would be.
        \item The authors should reflect on the scope of the claims made, e.g., if the approach was only tested on a few datasets or with a few runs. In general, empirical results often depend on implicit assumptions, which should be articulated.
        \item The authors should reflect on the factors that influence the performance of the approach. For example, a facial recognition algorithm may perform poorly when image resolution is low or images are taken in low lighting. Or a speech-to-text system might not be used reliably to provide closed captions for online lectures because it fails to handle technical jargon.
        \item The authors should discuss the computational efficiency of the proposed algorithms and how they scale with dataset size.
        \item If applicable, the authors should discuss possible limitations of their approach to address problems of privacy and fairness.
        \item While the authors might fear that complete honesty about limitations might be used by reviewers as grounds for rejection, a worse outcome might be that reviewers discover limitations that aren't acknowledged in the paper. The authors should use their best judgment and recognize that individual actions in favor of transparency play an important role in developing norms that preserve the integrity of the community. Reviewers will be specifically instructed to not penalize honesty concerning limitations.
    \end{itemize}

\item {\bf Theory assumptions and proofs}
    \item[] Question: For each theoretical result, does the paper provide the full set of assumptions and a complete (and correct) proof?
    \item[] Answer: \answerYes{}
    \item[] Justification: The geometry-radiance decomposition and the last-segment factorization are stated with full assumptions in Section~\ref{sec_method} and proved in Appendices~\ref{app_decomposition},~\ref{app_factorization}, and~\ref{app_corollary_proofs}.
    \item[] Guidelines:
    \begin{itemize}
        \item The answer \answerNA{} means that the paper does not include theoretical results. 
        \item All the theorems, formulas, and proofs in the paper should be numbered and cross-referenced.
        \item All assumptions should be clearly stated or referenced in the statement of any theorems.
        \item The proofs can either appear in the main paper or the supplemental material, but if they appear in the supplemental material, the authors are encouraged to provide a short proof sketch to provide intuition. 
        \item Inversely, any informal proof provided in the core of the paper should be complemented by formal proofs provided in appendix or supplemental material.
        \item Theorems and Lemmas that the proof relies upon should be properly referenced. 
    \end{itemize}

    \item {\bf Experimental result reproducibility}
    \item[] Question: Does the paper fully disclose all the information needed to reproduce the main experimental results of the paper to the extent that it affects the main claims and/or conclusions of the paper (regardless of whether the code and data are provided or not)?
    \item[] Answer: \answerYes{}
    \item[] Justification: Appendix~\ref{app_implementation} provides full implementation details and hyperparameters, and Appendix~\ref{app_dataset} describes the datasets and splits used for all experiments.
    \item[] Guidelines:
    \begin{itemize}
        \item The answer \answerNA{} means that the paper does not include experiments.
        \item If the paper includes experiments, a \answerNo{} answer to this question will not be perceived well by the reviewers: Making the paper reproducible is important, regardless of whether the code and data are provided or not.
        \item If the contribution is a dataset and\slash or model, the authors should describe the steps taken to make their results reproducible or verifiable. 
        \item Depending on the contribution, reproducibility can be accomplished in various ways. For example, if the contribution is a novel architecture, describing the architecture fully might suffice, or if the contribution is a specific model and empirical evaluation, it may be necessary to either make it possible for others to replicate the model with the same dataset, or provide access to the model. In general. releasing code and data is often one good way to accomplish this, but reproducibility can also be provided via detailed instructions for how to replicate the results, access to a hosted model (e.g., in the case of a large language model), releasing of a model checkpoint, or other means that are appropriate to the research performed.
        \item While NeurIPS does not require releasing code, the conference does require all submissions to provide some reasonable avenue for reproducibility, which may depend on the nature of the contribution. For example
        \begin{enumerate}
            \item If the contribution is primarily a new algorithm, the paper should make it clear how to reproduce that algorithm.
            \item If the contribution is primarily a new model architecture, the paper should describe the architecture clearly and fully.
            \item If the contribution is a new model (e.g., a large language model), then there should either be a way to access this model for reproducing the results or a way to reproduce the model (e.g., with an open-source dataset or instructions for how to construct the dataset).
            \item We recognize that reproducibility may be tricky in some cases, in which case authors are welcome to describe the particular way they provide for reproducibility. In the case of closed-source models, it may be that access to the model is limited in some way (e.g., to registered users), but it should be possible for other researchers to have some path to reproducing or verifying the results.
        \end{enumerate}
    \end{itemize}

\item {\bf Open access to data and code}
    \item[] Question: Does the paper provide open access to the data and code, with sufficient instructions to faithfully reproduce the main experimental results, as described in supplemental material?
    \item[] Answer: \answerYes{}
    \item[] Justification: The BLE RSSI and WiFi CSI datasets are publicly available as cited, the RFID spectrum dataset generation is fully described in Appendix~\ref{app_dataset}, and the code will be released upon publication.
    \item[] Guidelines:
    \begin{itemize}
        \item The answer \answerNA{} means that paper does not include experiments requiring code.
        \item Please see the NeurIPS code and data submission guidelines (\url{https://neurips.cc/public/guides/CodeSubmissionPolicy}) for more details.
        \item While we encourage the release of code and data, we understand that this might not be possible, so \answerNo{} is an acceptable answer. Papers cannot be rejected simply for not including code, unless this is central to the contribution (e.g., for a new open-source benchmark).
        \item The instructions should contain the exact command and environment needed to run to reproduce the results. See the NeurIPS code and data submission guidelines (\url{https://neurips.cc/public/guides/CodeSubmissionPolicy}) for more details.
        \item The authors should provide instructions on data access and preparation, including how to access the raw data, preprocessed data, intermediate data, and generated data, etc.
        \item The authors should provide scripts to reproduce all experimental results for the new proposed method and baselines. If only a subset of experiments are reproducible, they should state which ones are omitted from the script and why.
        \item At submission time, to preserve anonymity, the authors should release anonymized versions (if applicable).
        \item Providing as much information as possible in supplemental material (appended to the paper) is recommended, but including URLs to data and code is permitted.
    \end{itemize}

\item {\bf Experimental setting/details}
    \item[] Question: Does the paper specify all the training and test details (e.g., data splits, hyperparameters, how they were chosen, type of optimizer) necessary to understand the results?
    \item[] Answer: \answerYes{}
    \item[] Justification: Appendix~\ref{app_implementation} specifies the optimizer, learning rates, and all hyperparameters, and Appendix~E specifies the data splits for each dataset.
    \item[] Guidelines:
    \begin{itemize}
        \item The answer \answerNA{} means that the paper does not include experiments.
        \item The experimental setting should be presented in the core of the paper to a level of detail that is necessary to appreciate the results and make sense of them.
        \item The full details can be provided either with the code, in appendix, or as supplemental material.
    \end{itemize}

\item {\bf Experiment statistical significance}
    \item[] Question: Does the paper report error bars suitably and correctly defined or other appropriate information about the statistical significance of the experiments?
    \item[] Answer: \answerYes{}
    \item[] Justification: Table~\ref{tab:overall} and all per-receiver results report the mean and standard deviation across per-receiver means, as described in Section~\ref{sec_evaluation}.
    \item[] Guidelines:
    \begin{itemize}
        \item The answer \answerNA{} means that the paper does not include experiments.
        \item The authors should answer \answerYes{} if the results are accompanied by error bars, confidence intervals, or statistical significance tests, at least for the experiments that support the main claims of the paper.
        \item The factors of variability that the error bars are capturing should be clearly stated (for example, train/test split, initialization, random drawing of some parameter, or overall run with given experimental conditions).
        \item The method for calculating the error bars should be explained (closed form formula, call to a library function, bootstrap, etc.)
        \item The assumptions made should be given (e.g., Normally distributed errors).
        \item It should be clear whether the error bar is the standard deviation or the standard error of the mean.
        \item It is OK to report 1-sigma error bars, but one should state it. The authors should preferably report a 2-sigma error bar than state that they have a 96\% CI, if the hypothesis of Normality of errors is not verified.
        \item For asymmetric distributions, the authors should be careful not to show in tables or figures symmetric error bars that would yield results that are out of range (e.g., negative error rates).
        \item If error bars are reported in tables or plots, the authors should explain in the text how they were calculated and reference the corresponding figures or tables in the text.
    \end{itemize}

\item {\bf Experiments compute resources}
    \item[] Question: For each experiment, does the paper provide sufficient information on the computer resources (type of compute workers, memory, time of execution) needed to reproduce the experiments?
    \item[] Answer: \answerYes{}
    \item[] Justification: Section~\ref{sec_evaluation} states that all experiments are conducted on a single NVIDIA H100 GPU, and Table~\ref{tab:cost} reports per-method training and inference times.
    \item[] Guidelines:
    \begin{itemize}
        \item The answer \answerNA{} means that the paper does not include experiments.
        \item The paper should indicate the type of compute workers CPU or GPU, internal cluster, or cloud provider, including relevant memory and storage.
        \item The paper should provide the amount of compute required for each of the individual experimental runs as well as estimate the total compute. 
        \item The paper should disclose whether the full research project required more compute than the experiments reported in the paper (e.g., preliminary or failed experiments that didn't make it into the paper). 
    \end{itemize}
    
\item {\bf Code of ethics}
    \item[] Question: Does the research conducted in the paper conform, in every respect, with the NeurIPS Code of Ethics \url{https://neurips.cc/public/EthicsGuidelines}?
    \item[] Answer: \answerYes{}
    \item[] Justification: The research conforms with the NeurIPS Code of Ethics in every respect.
    \item[] Guidelines:
    \begin{itemize}
        \item The answer \answerNA{} means that the authors have not reviewed the NeurIPS Code of Ethics.
        \item If the authors answer \answerNo, they should explain the special circumstances that require a deviation from the Code of Ethics.
        \item The authors should make sure to preserve anonymity (e.g., if there is a special consideration due to laws or regulations in their jurisdiction).
    \end{itemize}

\item {\bf Broader impacts}
    \item[] Question: Does the paper discuss both potential positive societal impacts and negative societal impacts of the work performed?
    \item[] Answer: \answerYes{}
    \item[] Justification: Appendix~\ref{app_broader_impacts} discusses the broader impacts of~\ourSystem.
    \item[] Guidelines:
    \begin{itemize}
        \item The answer \answerNA{} means that there is no societal impact of the work performed.
        \item If the authors answer \answerNA{} or \answerNo, they should explain why their work has no societal impact or why the paper does not address societal impact.
        \item Examples of negative societal impacts include potential malicious or unintended uses (e.g., disinformation, generating fake profiles, surveillance), fairness considerations (e.g., deployment of technologies that could make decisions that unfairly impact specific groups), privacy considerations, and security considerations.
        \item The conference expects that many papers will be foundational research and not tied to particular applications, let alone deployments. However, if there is a direct path to any negative applications, the authors should point it out. For example, it is legitimate to point out that an improvement in the quality of generative models could be used to generate Deepfakes for disinformation. On the other hand, it is not needed to point out that a generic algorithm for optimizing neural networks could enable people to train models that generate Deepfakes faster.
        \item The authors should consider possible harms that could arise when the technology is being used as intended and functioning correctly, harms that could arise when the technology is being used as intended but gives incorrect results, and harms following from (intentional or unintentional) misuse of the technology.
        \item If there are negative societal impacts, the authors could also discuss possible mitigation strategies (e.g., gated release of models, providing defenses in addition to attacks, mechanisms for monitoring misuse, mechanisms to monitor how a system learns from feedback over time, improving the efficiency and accessibility of ML).
    \end{itemize}
    
\item {\bf Safeguards}
    \item[] Question: Does the paper describe safeguards that have been put in place for responsible release of data or models that have a high risk for misuse (e.g., pre-trained language models, image generators, or scraped datasets)?
    \item[] Answer: \answerNA{}
    \item[] Justification: The paper poses no such risks.
    \item[] Guidelines:
    \begin{itemize}
        \item The answer \answerNA{} means that the paper poses no such risks.
        \item Released models that have a high risk for misuse or dual-use should be released with necessary safeguards to allow for controlled use of the model, for example by requiring that users adhere to usage guidelines or restrictions to access the model or implementing safety filters. 
        \item Datasets that have been scraped from the Internet could pose safety risks. The authors should describe how they avoided releasing unsafe images.
        \item We recognize that providing effective safeguards is challenging, and many papers do not require this, but we encourage authors to take this into account and make a best faith effort.
    \end{itemize}

\item {\bf Licenses for existing assets}
    \item[] Question: Are the creators or original owners of assets (e.g., code, data, models), used in the paper, properly credited and are the license and terms of use explicitly mentioned and properly respected?
    \item[] Answer: \answerYes{}
    \item[] Justification: All existing datasets and baselines used in the paper are properly cited in Section~\ref{sec_evaluation} and Appendix~\ref{app_dataset}.
    \item[] Guidelines:
    \begin{itemize}
        \item The answer \answerNA{} means that the paper does not use existing assets.
        \item The authors should cite the original paper that produced the code package or dataset.
        \item The authors should state which version of the asset is used and, if possible, include a URL.
        \item The name of the license (e.g., CC-BY 4.0) should be included for each asset.
        \item For scraped data from a particular source (e.g., website), the copyright and terms of service of that source should be provided.
        \item If assets are released, the license, copyright information, and terms of use in the package should be provided. For popular datasets, \url{paperswithcode.com/datasets} has curated licenses for some datasets. Their licensing guide can help determine the license of a dataset.
        \item For existing datasets that are re-packaged, both the original license and the license of the derived asset (if it has changed) should be provided.
        \item If this information is not available online, the authors are encouraged to reach out to the asset's creators.
    \end{itemize}

\item {\bf New assets}
    \item[] Question: Are new assets introduced in the paper well documented and is the documentation provided alongside the assets?
    \item[] Answer: \answerYes{}
    \item[] Justification: The RFID spatial spectrum dataset is documented in Appendix~\ref{app_dataset}, and the implementation is documented in Appendix~\ref{app_implementation}.
    \item[] Guidelines:
    \begin{itemize}
        \item The answer \answerNA{} means that the paper does not release new assets.
        \item Researchers should communicate the details of the dataset\slash code\slash model as part of their submissions via structured templates. This includes details about training, license, limitations, etc. 
        \item The paper should discuss whether and how consent was obtained from people whose asset is used.
        \item At submission time, remember to anonymize your assets (if applicable). You can either create an anonymized URL or include an anonymized zip file.
    \end{itemize}

\item {\bf Crowdsourcing and research with human subjects}
    \item[] Question: For crowdsourcing experiments and research with human subjects, does the paper include the full text of instructions given to participants and screenshots, if applicable, as well as details about compensation (if any)? 
    \item[] Answer: \answerNA{}
    \item[] Justification: The paper does not involve crowdsourcing or research with human subjects.
    \item[] Guidelines:
    \begin{itemize}
        \item The answer \answerNA{} means that the paper does not involve crowdsourcing nor research with human subjects.
        \item Including this information in the supplemental material is fine, but if the main contribution of the paper involves human subjects, then as much detail as possible should be included in the main paper. 
        \item According to the NeurIPS Code of Ethics, workers involved in data collection, curation, or other labor should be paid at least the minimum wage in the country of the data collector. 
    \end{itemize}

\item {\bf Institutional review board (IRB) approvals or equivalent for research with human subjects}
    \item[] Question: Does the paper describe potential risks incurred by study participants, whether such risks were disclosed to the subjects, and whether Institutional Review Board (IRB) approvals (or an equivalent approval/review based on the requirements of your country or institution) were obtained?
    \item[] Answer: \answerNA{}
    \item[] Justification: The paper does not involve research with human subjects.
    \item[] Guidelines:
    \begin{itemize}
        \item The answer \answerNA{} means that the paper does not involve crowdsourcing nor research with human subjects.
        \item Depending on the country in which research is conducted, IRB approval (or equivalent) may be required for any human subjects research. If you obtained IRB approval, you should clearly state this in the paper. 
        \item We recognize that the procedures for this may vary significantly between institutions and locations, and we expect authors to adhere to the NeurIPS Code of Ethics and the guidelines for their institution. 
        \item For initial submissions, do not include any information that would break anonymity (if applicable), such as the institution conducting the review.
    \end{itemize}

\item {\bf Declaration of LLM usage}
    \item[] Question: Does the paper describe the usage of LLMs if it is an important, original, or non-standard component of the core methods in this research? Note that if the LLM is used only for writing, editing, or formatting purposes and does \emph{not} impact the core methodology, scientific rigor, or originality of the research, declaration is not required.
    \item[] Answer: \answerNA{}
    \item[] Justification: The core method development does not involve LLMs.
    \item[] Guidelines:
    \begin{itemize}
        \item The answer \answerNA{} means that the core method development in this research does not involve LLMs as any important, original, or non-standard components.
        \item Please refer to our LLM policy in the NeurIPS handbook for what should or should not be described.
    \end{itemize}

\end{enumerate}

\end{document}